\documentclass[11pt,a4paper,reqno]{amsart}

\usepackage{hyperref}
\hypersetup{
  colorlinks   = true, 
  urlcolor     = blue, 
  linkcolor    = Purple, 
  citecolor   = red 
}
\usepackage{changepage}
\usepackage{amsmath}
\usepackage{amsthm}
\usepackage{amsfonts}
\usepackage{amssymb}
\usepackage{array}
\usepackage[margin=2.7cm]{geometry}

\usepackage{soul}
\usepackage{enumitem}
\usepackage{hhline}
\usepackage{multirow} 

\makeatletter
\newcommand{\mylabel}[2]{#2\def\@currentlabel{#2}\label{#1}}
\makeatother

\newcommand{\C}{\mathcal{C}}

\newcommand{\F}{\mathbb{F}}

\newcommand{\N}{\mathbb{N}}

\newcommand{\SN}{\mathcal S_N}

\newcommand{\Fq}{\mathbb{F}_{q}}
\newcommand{\sgn}{\mathrm{sgn}}

\newcommand{\Y}{Y}
\newcommand{\y}{y}

\newcommand{\Gxy}{G\left(x,\Y,\mathrm{B},\Lambda\right)}
\newcommand{\GI}{G_{\bar{I}}\left(x,\Y, \mathrm{B}_{\bar{I}},\Lambda_{\bar{I}}\right)}
\newcommand{\GIbar}{G_{\bar{I}_{0}}\left(x,\Y, \mathrm{B}_{\bar{I}_{0}},\Lambda_{\bar{I}_{0}}\right)}
\newcommand{\GIbarTilde}{\widetilde{G}_{\bar{I}}(x,\Y, \widetilde{\mathrm{B}}_{\bar{I}},\widetilde{\Lambda}_{\bar{I}})}
\newcommand{\wt}{\mathrm{wt}}
\newcommand{\floor}[1]{{\left\lfloor{#1}\right\rfloor}}

\newcommand{\dfree}{\mathrm{d_{free}}}
\newcommand{\dd}{\mathrm{d}}
\newcommand{\cc}{\mathrm{c}}

\newcommand{\GRS}{\mathrm{GRS}}
\newcommand{\ie}{\textit{i.e.}}

\newcommand{\mm}{e}

\DeclareMathOperator{\diag}{diag}

\newtheorem{theorem}{Theorem}[section]

\newtheorem{proposition}[theorem]{Proposition}
\newtheorem{corollary}[theorem]{Corollary}
\newtheorem{lemma}[theorem]{Lemma}

\theoremstyle{definition}
\newtheorem{definition}[theorem]{Definition}
\newtheorem{example}[theorem]{Example}
\newtheorem{remark}[theorem]{Remark}
\newtheorem{notation}[theorem]{Notation}

 \usepackage[dvipsnames,table]{xcolor}
 \usepackage{pagecolor}
 \definecolor{light-gray}{gray}{0.90}

\usepackage{tikz}
\usepackage{lscape}
\usepackage{arydshln}
\newcommand\cellred{\cellcolor{red!20}}
\newcommand\cellblue{\cellcolor{blue!20}}
\newcommand\cellyellow{\cellcolor{orange!20}}
\newcommand\cellgreen{\cellcolor{green!20}}


\title{Weighted Reed-Solomon convolutional codes}

\author[G. N. Alfarano]{Gianira N. Alfarano}
\address{Institute of Mathematics, University of Zurich, Switzerland}
\curraddr{}
\email{gianiranicoletta.alfarano@math.uzh.ch}

\author[D. Napp]{Diego Napp}
\address{Department of Mathematics, University of Alicante, Spain}
\curraddr{}
\email{diego.napp@ua.es}

\author[A. Neri]{Alessandro Neri}
\address{Max-Planck-Institute for Mathematics in the Sciences, Leipzig, Germany}
\curraddr{}
\email{alessandro.neri@mis.mpg.de}

\author[V. Requena]{Ver\'onica Requena}
\address{Department of Mathematics, University of Alicante, Spain}
\curraddr{}
\email{vrequena@ua.es}

\subjclass[2010]{15B33, 94B10, 15B05, 11T71}

\keywords{MDP convolutional codes, Vandermonde matrices, weighted Reed-Solomon convolutional codes.}

\begin{document}

\maketitle
\thispagestyle{empty}
\begin{abstract}
In this paper we present a concrete algebraic construction of a novel class of convolutional codes. These codes are built upon generalized Vandermonde matrices and therefore can be seen as a natural extension of Reed-Solomon  block codes to the context of convolutional codes. For this reason we call them weighted Reed-Solomon  (WRS) convolutional codes. We show that under some constraints on the defining parameters these codes are Maximum Distance Profile (MDP), which means that  they have the maximal possible growth in their column distance profile. We study the size of the field needed to obtain WRS convolutional codes which are MDP and compare it with the existing general constructions of MDP convolutional codes in the literature, showing that in many cases WRS convolutional codes require significantly smaller fields.
\end{abstract}

\bigskip

\section{Introduction}

The algebraic theory of block codes is remarkably elaborated and has produced sophisticated algebraic classes of codes with associated decoding algorithms. On the contrary, there exist very few algebraic general constructions of convolutional codes and most of the existing convolutional codes that have good designed distance have been found by computer search.

Since there is no easy algebraic approach to construct a generator matrix of convolutional codes with good distance properties, several authors have extended  well-known classes of block codes to the convolutional context. This idea was initiated by Massey, Costello and Justesen, who used cyclic or quasi-cyclic block codes \cite{Justesen1972,Justesen1975,Massey1973}. Later, the same idea was further  developed by many authors; see \cite{Esmaeili99,Levy1993,Tanner1987, LAGUARDIA2014}.  The idea of this approach is to establish a link between the generator polynomials of the quasi-cyclic block codes and the generator matrix of convolutional codes. The most important property of this connection is that it allows to lower bound the free distance of the convolutional code by the minimum distance of the associated cyclic or quasi-cyclic block code. Within this setting, many constructions of convolutional codes with designed free distance were provided  based on different classes of block codes, such as Reed--Solomon or Reed--Muller codes. Moreover, in \cite{Smarandache2001}, the authors adjusted the parameters of these constructions to present the first Maximum Distance Separable (MDS) convolutional code, \ie, a convolutional code whose free distance achieves the generalized Singleton bound presented in  \cite{ro99a1}, provided that the field size is congruent to 1 modulo the length of the code. Later, other examples of MDS convolutional codes, also for restricted set of parameters, were presented; see \cite{Gluesing06,Plaza13}. It is worth to mention also the use of circulant Cauchy matrices for the construction of MDS 2D convolutional codes in \cite{cl12,cl16}, that was later adapted for MDS 1D  convolutional codes in \cite{LiebPinto}.  All these codes are designed in such a way that they have large free distance.

In the context of convolutional codes, one aims to build codes that can correct as many errors as possible in different time intervals.  This  property is measured by the notion of \emph{column distances}.  
Despite the fact that this notion  is arguably the most fundamental distance measure for convolutional codes (see \cite[pag. 162]{jo99}), very little is known on how to build convolutional codes with large column distances. Moreover, having large free distance does not guarantee to have the largest possible column distances.
Codes with maximum column distances  are called Maximum Distance Profile (MDP) and they were introduced in \cite{Hutchinson2005} and further investigated in \cite{gl03}. MDP convolutional codes are similar to MDS block codes within a time interval and therefore they are considered a more appropriate analogue of MDS block codes than MDS convolutional codes. From a practical point of view, it has been recently shown that these codes are very appealing for sequential transmission over the erasure channel and low-delay streaming applications; see \cite{ba15b,Khisti19,ma16,to12}.

Surprisingly, there exist only two general algebraic constructions that yield two wide classes of MDP convolutional codes (see \cite{AlmeidaNappPinto2013} and \cite{gl03}), but both require unpractical large field sizes. These classes are built using lower triangular Toeplitz \emph{superregular matrices}, \ie matrices having the property that the minors that are not \textit{trivially zero} are nonzero; see \cite{al16,gl03} for a formal definition and details on the relation between superregular matrices and MDP convolutional codes. Due to the difficulty of deriving general constructions, researchers have been focusing on computer search algorithms for finding MDP convolutional codes with small parameters; see \cite{HaOs2018,AlmeidaLieb,gl03,LiebPinto,to12}.
\bigskip

In this work we use a different approach to derive large classes of MDP convolutional codes and present a new general algebraic construction. Rather than using superregular matrices or generator polynomials of cyclic or quasi-cyclic block codes, we carefully select  different modified Vandermonde matrices as the coefficients of the polynomial generator matrix of the convolutional code in such a way that the resulting  code is, under some constraints, MDP. Since each modified Vandermonde matrix is the generator (or parity-check) matrix of a generalized Reed-Solomon (GRS)  block code, the presented class of codes can be considered as a very natural extension of GRS block codes to the context of convolutional codes. For this reason, we call them \emph{weighted Reed-Solomon (WRS) convolutional codes}. We show that the field size required to build them is significantly smaller than the existing ones in the literature for other classes of MDP codes.

\bigskip

The paper is structured as follows: In Section \ref{sec:prelim} we provide the necessary background on convolutional codes. Section \ref{Sec:Construction} is devoted to present the construction of WRS convolutional codes based on Vandermonde matrices. In Section \ref{sec:proofTheorem} we introduce and study a polynomial generalization of the truncated sliding generator matrix of the constructed convolutional codes. From the obtained results we can show that the class of convolutional codes proposed is MDP, provided a suitable choice of the defining elements. In Section \ref{sec:fieldsize} we investigated the field size required for our WRS convolutional codes to be MDP and we make an asymptotic comparison with existing constructions of MDP convolutional codes in the literature, showing that for many regimes of parameters, our construction requires  smaller fields. We conclude the paper by drawing some conclusive remarks and future works in Section \ref{sec:conclusions}.

\section{Preliminaries}\label{sec:prelim}
In this section we establish the notation that we will use in the remainder of this work and we provide the necessary background on convolutional codes, with a particular focus on MDP ones.

Let $q$ be a prime power and $\F_q$ be the finite field of order $q$. Let $k,n$ be positive integers, with $k\leq n$. An $(n,k)_q$ \textbf{convolutional code} is a submodule $\C$ of $\F_q[z]^n$ of rank $k$ such that there exists a  full row rank \textbf{generator matrix} $G(z)\in \F_q[z]^{k\times n}$ with the property that
$$\C:=\{u(z)G(z)\mid u(z) \in \F_q[z]^k\}\subseteq \F_q[z]^n.$$ 
If $G(z)$ is \textbf{basic}, \ie it has a right polynomial inverse, there exists also a \textbf{parity-check matrix} $H(z)\in\F_q[z]^{(n-k)\times n}$, such that
We define the \textbf{degree} $\delta$ of $\C$ as the highest degree of the $k\times k$  minors of one -- and hence any -- generator matrix of $\C$. We say that $G(z)$ is \textbf{reduced} if the sum of its row degrees is equal to $\delta$, where the $i$-th row degree denotes the highest degree of the polynomials forming the $i$-th row of $G(z)$. When its degree $\delta$ is known, we write that $\C$ is an $(n,k, \delta)_q$ convolutional code.

Due to the natural isomorphism between $\Fq[z]^{k\times n}$ and $\Fq^{k\times n}[z]$,  the encoding map of a convolutional code $\C$ can be expressed via the multiplication by the polynomial $G(z):=\sum_{i=0}^m G_iz^i$, where $G_i\in\Fq^{k\times n}$ for each $i\in\{1,\dots,m\}$, and $G_m\ne 0$. If $G(z)$ is reduced, then the (polynomial) degree $m$ of $G(z)$ is invariant, and it is called \textbf{memory} of the code $\C$.

The  \textbf{weight} of a polynomial $v(z)=\sum_{i=0}^r v_iz^i\in \F_q^n[z]$ is defined as
$$\wt(v(z)) := \sum _{i=0}^r \wt_{\mathrm{H}}(v_i)\in\N_0,$$
where $\wt_{\mathrm{H}}(v_i)$ denotes the Hamming weight of $v_i\in\F_q^n$, \ie, the number of its nonzero components.
Finally, the \textbf{free distance} of a convolutional code $\C$ is the integer
$$\dfree(\C):=\min\{\wt(v(z))\mid v(z)\in\C, v(z)\ne 0\}.$$
The importance of the parameters $\delta$ and $\dfree$ is testified by the fact that they determine respectively the decoding complexity and the error correction capability of the Viterbi decoding algorithm; see \cite{jo99}. For this reason, for any given positive integers $n$ and $k$ and field size $q$, the aim is to construct $(n,k)_q$ convolutional codes with ``small'' degree $\delta$ and  ``large'' free distance $\dfree$. 
The parameters of an $(n,k,\delta)_q$ convolutional code $\C$ are related by the generalized Singleton bound (\cite{ro99a1}):
\begin{equation}\label{eq:genSingleton}
\dfree(\C)\leq (n-k)\left(\floor{\frac{\delta}{k}}+1\right)+\delta +1.
\end{equation}



The difference between convolutional codes and block codes is that convolutional codes treat the information as a sequence of data and the encoding and decoding process at each time instant depends on previous information. One of the main advantages of these codes, in contrast to classic block codes, is that they can consider different sliding windows (time intervals) for decoding, depending on the distribution of errors in the sequence. In this sense, convolutional codes have more flexibility than block codes. Therefore, one aims to build convolutional codes with optimal error correcting properties within windows of different sizes. This property is captured by the notion of column distances. 

Let $v(z) = \sum_{i=0}^r v_iz^i\in\F_q^n[z]$. For any $j\leq r$, let $v_{[0,j]}(z) := \sum_{i=0}^j v_iz^i$.
The \textbf{$j$-th column distance} $\dd_j^\cc$ of the code $\C$ is defined as
$$ \dd_j^\cc(\C) :=\min\{\wt(v_{[0,j]}(z))\mid v(z)\in\C, \ v_0\ne 0\}. $$

\noindent

\noindent
The following result gives a Singleton-like bound  for the column distances of a convolutional code.

\begin{theorem}[\textnormal{\cite[Proposition 2.3]{gl03}}]\label{thm:bound_dj}
Let $\C$ be an $(n,k,\delta)_q$ convolutional code. Then, for any $j\ge 0$,
\begin{equation}\label{eq:col_dist_bound}
\dd_j^\cc(\C)\leq(n-k)(j+1)+1.
\end{equation}
\end{theorem}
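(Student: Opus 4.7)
The plan is to realize the $j$-th column distance as a weight bound on the block code cut out by the truncated block-Toeplitz generator of $\C$, and to split the weight budget between the ``head'' $v_0$ and the ``tail'' $(v_1,\dots,v_j)$ in a way that enforces the side constraint $v_0\ne 0$.

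First, I would fix a basic generator matrix $G(z)=\sum_{i=0}^{m}G_iz^i$ of $\C$. Because $\dd_j^\cc(\C)$ is intrinsic to $\C$, one may further assume $G_0\in\F_q^{k\times n}$ has full row rank $k$ (a delay-free encoder, which always exists). Introduce the sliding matrix
\[
G_j^\cc \;:=\; \begin{pmatrix} G_0 & G_1 & \cdots & G_j \\ & G_0 & \cdots & G_{j-1} \\ & & \ddots & \vdots \\ & & & G_0 \end{pmatrix}\in\F_q^{(j+1)k\times(j+1)n},
\]
with the usual convention $G_i=0$ for $i>m$. Then every truncation $v_{[0,j]}$ of a codeword reads $(u_0,u_1,\dots,u_j)\,G_j^\cc$, the matrix $G_j^\cc$ has rank $(j+1)k$ by the delay-free property, and $v_0\ne 0$ is equivalent to $u_0\ne 0$.

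Next, I would choose the message in two stages. The image of $G_0$ is a $k$-dimensional linear block code of length $n$, so the classical Singleton bound yields a nonzero $v_0^\ast=u_0^\ast G_0$ with $\wt_{\mathrm H}(v_0^\ast)\le n-k+1$; fix $u_0=u_0^\ast$. With $u_0$ frozen, the tail $(v_1,\dots,v_j)\in\F_q^{jn}$ is an affine function of the remaining $jk$ scalars in $(u_1,\dots,u_j)$, with linear part an analogous $jk\times jn$ block-Toeplitz matrix of rank $jk$. Its image is a $jk$-dimensional subspace of $\F_q^{jn}$, hence it admits an information set of size $jk$; solving the corresponding linear system in $(u_1,\dots,u_j)$ kills those $jk$ coordinates of the tail. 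This leaves at most $jn-jk=j(n-k)$ possibly nonzero positions, so
\[
\wt(v_{[0,j]}(z)) \;\le\; (n-k+1)+j(n-k)\;=\;(n-k)(j+1)+1,
\]
and this codeword, having $v_0^\ast\ne 0$, contributes to $\dd_j^\cc(\C)$.

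The main obstacle is not any single calculation but precisely the side condition $v_0\ne 0$. A direct Singleton bound on the $(j+1)k$-dimensional block code generated by $G_j^\cc$ gives the same numerical estimate $(n-k)(j+1)+1$, but a priori the extremal codeword could have $v_0=0$ and be ineligible for $\dd_j^\cc$. The two-stage construction, in which the head budget of $n-k+1$ is allocated separately from the tail budget of $j(n-k)$, is exactly what reconciles the counting with the requirement $v_0\ne 0$.
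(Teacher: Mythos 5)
Your argument is correct: picking $u_0$ so that $v_0=u_0G_0$ attains the block Singleton bound $n-k+1$, and then solving for $(u_1,\dots,u_j)$ to annihilate an information set of the rank-$jk$ tail map (leaving at most $j(n-k)$ nonzero tail positions), is exactly the right way to respect the side condition $v_0\ne 0$. The paper imports this result from \cite{gl03} without proof, and your two-stage construction is essentially the standard argument given there (it relies, as does the paper, on the standing convention that $\C$ admits a delay-free encoder, \ie\ $G_0$ of full row rank), so there is nothing to add.
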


Note that the column distances $\dd_j^\cc(\C)$ are invariants of the code $\C$ and they determine the error correction capability of $\C$ to estimate the message symbol $u_0$, based on the received symbols $v_{[0,j]}$; see  \cite{gl03,jo99}. Hence, a good computational performance for sequential decoding requires a rapid initial growth of the column distances.

Since $\dd_j^\cc(\C)\leq \dfree(\C)$ for any $j$, it is a matter of straightforward computations to verify that the maximum instant such that the bound \eqref{eq:col_dist_bound} is achievable is $$L:=\floor{\frac{\delta}{k}} + \floor{\frac{\delta}{n-k}}.$$ The $(L+1)$-tuple of numbers $(\dd_0^\cc(\C),\dots, \dd_L^\cc(\C))$ is called the \textbf{column distance profile} of $\C$.

An $(n,k,\delta)_q$ convolutional code $\C$  whose column distances meet the bound \eqref{eq:col_dist_bound} with equality for all $j \in \{0,\dots,L\}$  is called \textbf{maximum distance profile} (MDP).

Let $G(z)=\sum_{i=0}^mG_iz^i\in\Fq^{k\times n}[z]$, with $G_m\ne 0$ be a generator matrix of an $(n,k)_q$ convolutional code $\C$, then $G(0)=G_0$ is full row rank. For any $j\geq 0$, we define the \textbf{$j$-th truncated sliding generator matrix} as
\begin{equation}\label{eq:slidgen}
    G_j^\cc :=\begin{pmatrix}
G_0 & G_1 & \cdots  & G_j \\
 & G_0 & \cdots & G_{j-1} \\
 & & \ddots & \vdots \\
 & &  & G_0\\
\end{pmatrix}\in\Fq^{(j+1)k\times (j+1)n},
\end{equation}
where $G_j = 0,$ whenever $j>m$; see \cite{jo99}.
 The following is a classical result which serves as a characterization for MDP convolutional codes.

\begin{theorem}\cite[Theorem 2.4]{gl03}\label{thm:characterizationMDP}
Let $G(z) = \sum_{i=0}^m G_iz^i\in\F^{k \times n}[z]$  be a generator matrix of a convolutional code $\C$ with $G(0)=G_0$ full row rank. If  every $(j+1)k \times (j+1)k$ full-size minor of $G_j^\cc$ formed by the columns with indices $1\leq t_1< \dots < t_{(j+1)k}$, where $t_{\ell k+1}>\ell n$ for $\ell=1,\dots,j$, is nonzero, then,   $$\dd_j^\cc(\C)=(n-k)(j+1)+1,$$
\ie, the column distance is optimal at time instant $j$. 
\end{theorem}

For an easier an more compact notation, we introduce the following definition.

\begin{definition}
 Let $G(z) = \sum_{i=0}^m G_iz^i\in\F^{k \times n}[z]$ be a polynomial with $G_m \neq 0$, and let $\delta$ be degree of the convolutional code generated by $G(z)$. We say that $G(z)$ has the \textbf{MDP property} if the $L$-th truncated sliding generator matrix $G_L^\cc$ satisfies condition in Theorem \ref{thm:characterizationMDP}. 
\end{definition}

\begin{remark}
The original assumption in Theorem \ref{thm:characterizationMDP} is that $G(z)$ is basic. However, as it emerges from the original proof, it is not necessary to show that the convolutional code generated by $G(z)$ is MDP. Having $G(z)$ basic, indeed, ensure that the code is noncatastrophic; see also \cite[Remark 2.8]{alfarano2020simplified}. In this way, a similar characterization could be derive from the parity-check matrix. Moreover, notice that in \cite{alfarano2020simplified}, it is shown that in the case $m=\frac{\delta}{k}$, if $G(z) =\sum_{i=1}^m G_iz^i$ has the MDP property, then the convolutional code generated by $G(z)$ is noncatastrophic and one can get rid of the assumption of $G(z)$ being basic.
\end{remark}

\begin{remark}\label{rem:MDPminors}
Here we rephrase the MDP property as follows. Each minor of $G_L^\cc$ obtained by selecting the columns with indices as described in Theorem \ref{thm:characterizationMDP} is a minor obtained by selecting $\ell_i$ columns from the $i$-th columns block, for every $i=0,\ldots, L$, such that
\begin{align}\label{eq:lis}
\begin{split}
 \sum_{i=0}^s \ell_i \leq (s+1)k\,\, & \mbox{ for } s=0,\ldots, L-1, \\
 \sum_{i=0}^L \ell_i = (L+1)k. &
\end{split}
\end{align}

Note that the remaining full size minors of $G_L^\cc$ not satisfying (\ref{eq:lis}) are trivially zero, \ie, are zero independently of the choice of the nonzero entries of $G_L^\cc$, see \cite{al16,gl03} for a formal definition. 
\end{remark}


We conclude this section by recalling the definition of generalized Reed-Solomon codes, which are one of the most studied family of codes in algebraic coding theory, due to their very rich algebraic structure and their  suitability for digital implementation in practical applications: they possess optimal distance and admit efficient algebraic decoding algorithms, \textit{e.g.}, Berlekamp-Massey, see \cite[Chapters 10 and 11]{ma77}. We will use their generator matrices in order to construct MDP convolutional codes. 

Let $0<k\leq n$ be two positive integers and consider the set of polynomials with coefficients in $\Fq$ and degree strictly less than $k$, namely
$$\Fq[x]_{<k} :=\{f(x)\in\Fq[x]\mid \deg f< k\}. $$

\begin{definition}
Suppose that $n\leq q$, and consider $\alpha_1,\dots, \alpha_n \in \Fq$ pairwise distinct elements, and $b_1,\dots, b_n\in\Fq^\ast$. The block code
$$ C :=\{(b_1f(\alpha_1),\dots, b_nf(\alpha_n))\mid f\in\Fq[x]_{<k}\}$$
is called \textbf{generalized Reed-Solomon (GRS) code} and it is denoted by $\GRS_k(\alpha, b)$, where $\alpha:=(\alpha_1,\dots,\alpha_n)$ and $b = (b_1, \dots, b_n)$.
\end{definition}

The canonical generator matrix for a code $C=\GRS_k(\alpha,b)$ has the following form:
$$ G:=\begin{pmatrix}
b_1 & b_2 & \cdots & b_n \\
b_1\alpha_1 & b_2\alpha_2 & \cdots & b_n\alpha_n \\
b_1\alpha_1^2 & b_2\alpha_2^2 & \cdots & b_n\alpha_n^2 \\
\vdots &\vdots & \ddots & \vdots\\
b_1\alpha_1^{k-1} & b_2\alpha_2^{k-1} & \cdots & b_n\alpha_n^{k-1}
\end{pmatrix} = V_k(\alpha)\diag(b),$$
where $V_k(\alpha)$ is a classical Vandermonde matrix of size $k \times n$ of the form
\[
\begin{pmatrix}
1         & 1            & \cdots & 1 \\
\alpha_1  & \alpha_2     & \cdots & \alpha_n \\
\alpha_1^2   & \alpha_2^2   & \cdots & \alpha_n^2 \\
\vdots       &\vdots       &  \vdots  & \vdots\\
\alpha_1^{k-1} &\alpha_2^{k-1} & \cdots & \alpha_n^{k-1}
\end{pmatrix}
\]
 and $\diag(b)$ denotes the diagonal matrix whose diagonal entries are given by $b_1,\dots, b_n$. We call this generator matrix ``canonical'' since it is obtained by evaluating each monomial of the standard $\Fq$-basis of
$\Fq[x]_{<k}$, that is $\{1,x,x^2,\dots,x^{k-1}\}$, in the points $\alpha_1,\dots, \alpha_n$.


\section{New Construction}\label{Sec:Construction}
In this section we present a new algebraic construction of $(n,k,\delta)_q$ MDP convolutional codes with memory $m=\left\lceil\frac{\delta}{k} \right\rceil$. To this end, we use some generalized Vandermonde matrices as the coefficients of the polynomial matrix $G(z)$ describing the code.

\bigskip

Let $k,n$ be positive integers and let $q$ be a prime power, with $k< n<q$ . Let $\alpha :=(\alpha_1,\dots,\alpha_n)\in(\Fq^\ast)^n$, with the $\alpha_i$'s pairwise distinct, and fix  $\gamma$ to be a root of an irreducible polynomial in $\Fq[z]$ of degree $s$, for some suitable integer $s$. Clearly,  $\Fq(\gamma) \cong \F_{q^s}$.

For any $i \geq 0$, set

\begin{equation}\label{GrM}
M_i:=\begin{pmatrix}
\gamma^{\binom{i+1}{2}k - i}\alpha_1^{(i+1)k-1} & \gamma^{\binom{i+1}{2}k - i}\alpha_2^{(i+1)k-1} & \cdots & \gamma^{\binom{i+1}{2}k - i}\alpha_n^{(i+1)k-1} \\
\vdots &\vdots & & \vdots\\
\gamma^{\binom{i}{2}k + i}\alpha_1^{ik+1} &\gamma^{\binom{i}{2}k + i}\alpha_2^{ik+1} & \cdots & \gamma^{\binom{i}{2}k + i}\alpha_n^{ik+1}\\
\gamma^{\binom{i}{2}k}\alpha_1^{ik} & \gamma^{\binom{i}{2}k}\alpha_2^{ik} & \cdots & \gamma^{\binom{i}{2}k}\alpha_n^{ik}
\end{pmatrix},
\end{equation}
 and, for every $i \geq 0$ and $1\leq j \leq k$,
 
\begin{equation}\label{GrN}
N_{i,j}:=\begin{pmatrix}
0 & 0 & \cdots &0 \\
\vdots &\vdots & & \vdots\\
0 & 0 & \cdots &0 \\
\gamma^{\binom{i}{2}k + (j-1)i}\alpha_1^{ik+j-1} & \gamma^{\binom{i}{2}k + (j-1)i}\alpha_2^{ik+j-1} & \cdots & \gamma^{\binom{i}{2}k + (j-1)i}\alpha_n^{ik+j-1} \\
\vdots &\vdots & & \vdots\\
\gamma^{\binom{i}{2}k + i}\alpha_1^{ik+1} &\gamma^{\binom{i}{2}k + i}\alpha_2^{ik+1} & \cdots & \gamma^{\binom{i}{2}k + i}\alpha_n^{ik+1}\\
\gamma^{\binom{i}{2}k}\alpha_1^{ik} & \gamma^{\binom{i}{2}k}\alpha_2^{ik} & \cdots & \gamma^{\binom{i}{2}k}\alpha_n^{ik}
\end{pmatrix}.
\end{equation}

For the binomial coefficients, we use the convention that $\binom{a}{b}=0$ if $a<b$.
Observe that for every $i \geq 0$, it holds $N_{i,k}=M_i$. Moreover, $M_i, N_{i,j} \in \F_{q^s}^{k \times n}$ for every $i\geq 1$, while $M_0, N_{0,j} \in \F_{q}^{k \times n}$. It is easy to see that  the matrix $M_i$ is the generator matrix of  the  $[n,k]_{q^s}$ block code $\mathrm{GRS}_k(\alpha,\alpha^{(ik)})$, where $\alpha^{(ik)}:=(\alpha_1^{ik},\ldots,\alpha_n^{ik})$. 

\begin{definition}\label{def:CWRS}
Let $k,n,\delta$ be positive integers with $0 <k \leq n$, $\alpha=(\alpha_1,\ldots, \alpha_n) \in (\Fq^*)^n$  and $\gamma \in \F_{q^s}$ be as above. Let $m:=\left\lceil \frac{\delta}{k} \right\rceil$ and $t:=\delta-(m-1)k$, and define 
\begin{equation}\label{Gr} G_i:=\begin{cases} M_i & \mbox{ if } 0 \leq i\leq m-1, \\
N_{m,t} & \mbox{ if } i=m.\end{cases} \end{equation}
A convolutional code is called \textbf{weighted Reed-Solomon (WRS) convolutional code} if admits $G(z)=\sum_{i=0}^m G_i z^i$ as generator matrix. We will denote such a code by $\C_{k,n}^\delta(\gamma, \alpha)$. 
\end{definition}

We want to study now the codes $\C_{k,n}^\delta(\gamma, \alpha)$. We start by providing their parameters.

\begin{proposition}\label{prop:parameters}
The code $\C_{k,n}^\delta(\gamma, \alpha)$ is an $(n,k,\delta)_{q^s}$ convolutional code. In particular, the generator matrix $G(z)$ given in Definition \ref{def:CWRS} is reduced.
\end{proposition}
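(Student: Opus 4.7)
The plan is to read off the row degrees of $G(z)$, then verify two things: (i) that the sum of these row degrees equals $\delta$, and (ii) that $G(z)$ is reduced. Since by definition a polynomial matrix is reduced precisely when its leading row coefficient matrix has full row rank $k$, and when this holds the degree of the code (the external degree, i.e.\ the maximal degree of the $k\times k$ minors) coincides with the sum of the row degrees of $G(z)$, these two facts together will give both that the degree of $\C_{k,n}^\delta(\gamma,\alpha)$ is $\delta$ and that $G(z)$ is reduced. Full row rank of $G(z)$ as a polynomial matrix is immediate from $G(0)=M_0$, which (up to reversing the rows) is the classical $k\times n$ Vandermonde matrix $V_k(\alpha)$ and hence has rank $k$.

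First I would compute the row degrees. By construction, $M_i$ has all rows nonzero for every $i$, while $N_{m,t}$ has its top $k-t$ rows equal to zero and its bottom $t$ rows coinciding with the corresponding rows of $M_m$. Combined with $G_i=M_i$ for $i\leq m-1$ and $G_m=N_{m,t}$, this means that rows $1,\ldots,k-t$ of $G(z)$ have degree exactly $m-1$, while rows $k-t+1,\ldots,k$ have degree exactly $m$. The sum of row degrees is therefore
\[
(k-t)(m-1)+t\,m \;=\; (m-1)k + t \;=\; \delta,
\]
using the definition $t=\delta-(m-1)k$.

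Next comes the core step: showing that the $k\times n$ leading row coefficient matrix, call it $G_{\mathrm{lc}}$, has rank $k$. Its row $r$ is the $r$-th row of $M_{m-1}$ if $r\leq k-t$, and the $r$-th row of $M_m$ if $r>k-t$. Inspecting the definition \eqref{GrM} of $M_i$, the entry of $G_{\mathrm{lc}}$ in position $(r,\ell)$ factors as a nonzero power of $\gamma$ (depending only on $r$) times $\alpha_\ell^{e_r}$, where the exponent $e_r$ is $mk-r$ for $r\in\{1,\ldots,k-t\}$ and $(m+1)k-r$ for $r\in\{k-t+1,\ldots,k\}$. The key observation is that these exponents are precisely
\[
\{\delta,\delta+1,\ldots,mk-1\}\;\cup\;\{mk,mk+1,\ldots,mk+t-1\}
 \;=\;\{\delta,\delta+1,\ldots,\delta+k-1\},
\]
that is, $k$ distinct consecutive integers starting from $\delta$. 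Hence $G_{\mathrm{lc}}$ can be written as $D_1\,P\,V_k(\alpha)\,\mathrm{diag}(\alpha_1^\delta,\ldots,\alpha_n^\delta)$ for a nonzero diagonal matrix $D_1$ collecting the $\gamma$-powers, a permutation matrix $P$ reordering rows according to the values of $e_r$, and the classical Vandermonde matrix $V_k(\alpha)$. Since the $\alpha_i$ are distinct and nonzero and $n\geq k$, the rank of $G_{\mathrm{lc}}$ equals $\mathrm{rk}(V_k(\alpha))=k$.

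Therefore $G(z)$ is reduced, from which it follows that the degree of the code $\C_{k,n}^\delta(\gamma,\alpha)$ equals the sum of the row degrees of $G(z)$, namely $\delta$. Combined with the length $n$, the rank $k$ (coming from $G(0)=M_0$ of full row rank), and the fact that all coefficients lie in $\F_{q^s}=\F_q(\gamma)$, this proves that $\C_{k,n}^\delta(\gamma,\alpha)$ is an $(n,k,\delta)_{q^s}$ convolutional code. The only real obstacle in the argument is the bookkeeping in the third step: one has to be confident that the exponents $e_r$ arising from the two different blocks of rows of $G_{\mathrm{lc}}$ interleave exactly to form a complete interval of $k$ consecutive integers, so that the Vandermonde structure is recovered; the rest is essentially routine.
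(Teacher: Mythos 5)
Your proof is correct and follows essentially the same route as the paper: both arguments reduce to checking that the leading row coefficient matrix is a classical Vandermonde matrix up to a row permutation, row scalings by powers of $\gamma$, and column scalings by $\alpha_\ell^{\delta}$. Your identification of the exponents $e_r$ as exactly the consecutive integers $\delta,\dots,\delta+k-1$ is just a more explicit version of the paper's factorization, and the bookkeeping checks out.
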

\begin{proof}
Clearly the code $\C_{k,n}^\delta(\gamma, \alpha)$ is defined over $\F_{q^s}$ and has length $n$. Moreover, the matrix $G_0=M_0$ is a Vandermonde matrix and hence it is full rank. This implies that the dimension of the code is $k$. Let $\tilde{\delta}$ be the degree of $\C_{k,n}^\delta$. By definition, the sum of the row degrees  of $G(z)$ is $\delta$, and hence $\tilde{\delta}\leq \delta$. In order to show that $\delta=\tilde{\delta}$, it is enough to show that $G(z)$ is reduced, \ie, the leading row coefficient matrix of $G(z)$, denoted by $G_\infty$, is full row rank; see \cite{forney75} or \cite[Theorem 6.3–13]{ka80}. Recall that $m=\left\lceil \frac{\delta}{k} \right\rceil$ and $t=\delta-(m-1)k$. It is easy to see that the matrix $G_{\infty}$ has the first $k-t$ rows equal to the ones of $G_{m-1}=M_{m-1}$ and the last $t$ rows equal to the ones of $G_m=N_{m,t}$. Such a matrix is a row permutation of
$$ \begin{pmatrix}
\gamma^{\binom{m}{2}k + (t-1)m}\alpha_1^{mk+t-1} & \gamma^{\binom{m}{2}k + (t-1)m}\alpha_2^{mk+t-1} & \cdots & \gamma^{\binom{m}{2}k + (t-1)m}\alpha_n^{mk+t-1} \\
\vdots & \vdots & & \vdots \\
\gamma^{\binom{m}{2}k}\alpha_1^{mk} & \gamma^{\binom{m}{2}k}\alpha_2^{mk} & \cdots & \gamma^{\binom{m}{2}k}\alpha_n^{mk} \\
\gamma^{\binom{m}{2}k - m}\alpha_1^{mk-1} & \gamma^{\binom{m}{2}k - m}\alpha_2^{mk-1} & \cdots & \gamma^{\binom{m}{2}k - m}\alpha_n^{mk-1} \\
\vdots & \vdots & & \vdots \\
\gamma^{\binom{m-1}{2}k + t(m-1)}\alpha_1^{(m-1)k+t} & \gamma^{\binom{m-1}{2}k + t(m-1)}\alpha_2^{(m-1)k+t} & \cdots & \gamma^{\binom{m-1}{2}k + t(m-1)}\alpha_n^{(m-1)k+t} 
\end{pmatrix},$$
which is full rank, since it is a Vandermonde matrix whose rows are multiplied by powers of $\gamma$ and whose columns are multiplied by  $\alpha_i^{(m-1)k+t}$.
\end{proof}

\begin{definition}\label{def:G(x)}
Let $k,n$ and $\delta$ be fixed. Consider the matrices $G_i(x)$ as the matrices  $G_i$  defined in \eqref{Gr} where we have replaced $\gamma$ by an indeterminate $x$, and let $G_L^\cc(x)$ be the corresponding $L$-th truncated sliding generator matrix. We define the set
\begin{align*}
\mathcal P(k,n,\delta, \alpha):= & \left\{ p(x) \in \Fq[x] \mid p(x) \mbox{ is a full size minor of } G_L^\cc(x)  \mbox{ obtained selecting the columns } \right.\\
& \mbox{ with indices } \left.1\leq j_1< \dots < j_{(L+1)k}, \mbox{ where } j_{rk+1}>rn \mbox{ for } r=1,\dots,L \right\}.
\end{align*}
\end{definition}

Note that the set $\mathcal P(k,n,\delta, \alpha)$ represents the full size minors in $G_L^\cc(x)$ formed as stated in Theorem \ref{thm:characterizationMDP}.

\begin{theorem}\label{lem:MDPpolynomial}
Let $\C_{k,n}^{\delta}(\gamma, \alpha)$ be the $(n,k,\delta)_{q^s}$ WRS convolutional code with  generator matrix $G(z) = \sum_{i=0}^m G_iz^i\in\F_{q^s}^{k\times n}[z]$, where the $G_i$'s are defined by  \eqref{Gr}. If $p(\gamma)\neq 0$ for every $p(x) \in \mathcal P(k,n,\delta, \alpha)$, then $\C_{k,n}^{\delta}(\gamma, \alpha)$ is an MDP convolutional code.
\end{theorem}

\begin{proof}
As $p(\gamma)\neq 0$ for every $p(x) \in \mathcal P(k,n,\delta, \alpha)$, then the condition in Theorem \ref{thm:characterizationMDP} is satisfied for $j=L$ and $G(z)$ has the MDP property, \ie,  $\dd_L^\cc(\C_{k,n}^{\delta}(\gamma, \alpha))=(n-k)(L+1)+1$. It follows from \cite[Corollary 2.3]{gl03} that  $\dd_j^\cc(\C_{k,n}^{\delta}(\gamma, \alpha))=(n-k)(j+1)+1$ for $j=0,1,\dots, L$ and therefore, by definition, $\C_{k,n}^{\delta}(\gamma, \alpha)$ is an $(n,k,\delta)$ MDP convolutional code. 
\end{proof}

For a given nonzero polynomial $p(x) \in \Fq[x]$, we denote by $\deg p(x)$ the degree of $p(x)$, and by $\nu (p(x))$ the maximum integer $\ell$ such that $x^\ell$ divides $p(x)$. Then we define the integer
$$D(k,n,\delta,\alpha):=\max\{ \deg p(x)-\nu (p(x)) \mid 0 \neq  p(x) \in \mathcal P(k,n,\delta,\alpha) \}.$$

The next result is the main theorem of this section. However, its proof requires several technical lemmas and it  can be found in  Section \ref{sec:proofTheorem}.

\begin{theorem}\label{thm:ConstructionMDP}
Let $\gamma$ be a root of an irreducible polynomial in $\Fq[z]$ of degree $s$ and let  $\C_{k,n}^{\delta}(\gamma, \alpha)$ be the $(n,k,\delta)_{q^s}$ WRS convolutional code whose generator matrix is $G(z) = \sum_{i=0}^m G_iz^i\in\F_{q^s}^{k\times n}[z]$, and the $G_i$'s are defined by \eqref{Gr}. If $s>D(k,n,\delta,\alpha)$, then $\C_{k,n}^\delta(\gamma, \alpha)$ is an MDP convolutional code in $\F_{q^s}[z]^n$.
\end{theorem}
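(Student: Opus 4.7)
The strategy I would follow is to reduce the theorem to Lemma~\ref{lem:MDPpolynomial}, so that the task becomes showing that $p(\gamma)\neq 0$ for every $p(x)\in\mathcal P(k,n,\delta,\alpha)$. The ``reduced and basic'' part of the conclusion comes almost for free: reducedness was already established in Proposition~\ref{prop:parameters}, and since $G(z)$ has by construction generic row degrees (all row degrees equal $m$, except possibly $k-t$ of them, which equal $m-1$), Remark~\ref{rem_basic} tells us that once the MDP property is verified, $G(z)$ is automatically basic. So the whole payload of the proof is the $\gamma$-evaluation statement.

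To prove that $p(\gamma)\neq 0$, I would proceed as follows. The technical work of Section~\ref{sec:proofTheorem} -- which I treat as a black box here -- is to show that every polynomial in $\mathcal P(k,n,\delta,\alpha)$ is \emph{nonzero} as an element of $\F_q[x]$, and simultaneously to produce the bound on $D(k,n,\delta,\alpha)$. Granted this, pick any $p(x)\in\mathcal P(k,n,\delta,\alpha)$ and factor out the largest power of $x$, writing $p(x)=x^{\nu(p(x))}\,q(x)$ with $q(x)\in\F_q[x]$ and $q(0)\neq 0$. By the very definition of $D(k,n,\delta,\alpha)$,
\[
\deg q(x) \;=\; \deg p(x)-\nu(p(x)) \;\leq\; D(k,n,\delta,\alpha) \;<\; s.
\]
Since $\gamma\neq 0$ -- one can always pick the defining irreducible polynomial in $\F_q[z]$ to have nonzero constant term, which is automatic whenever $s\geq 2$ -- we have $p(\gamma)=\gamma^{\nu(p(x))}q(\gamma)$, so that $p(\gamma)$ vanishes precisely when $q(\gamma)$ does. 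But the minimal polynomial of $\gamma$ over $\F_q$ has degree exactly $s$, hence no nonzero polynomial in $\F_q[x]$ of degree strictly less than $s$ can vanish at $\gamma$. Therefore $q(\gamma)\neq 0$, which forces $p(\gamma)\neq 0$, as required. Plugging this into Lemma~\ref{lem:MDPpolynomial} yields that $\C_{k,n}^\delta(\gamma,\alpha)$ is MDP, and as already noted the reducedness/basicness follows from Proposition~\ref{prop:parameters} and Remark~\ref{rem_basic}.

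The main obstacle, and the reason this theorem must be preceded by the separate Section~\ref{sec:proofTheorem}, is \emph{not} the short argument above but rather the supporting claim I have used as a black box. One must actually analyze the polynomial $p(x)\in\F_q[x]$ obtained from an arbitrary admissible full-size minor of the matrix $G_L^\cc(x)$, whose blocks are the $x$-weighted generalized Vandermonde-type matrices obtained from the $M_i$ and $N_{m,t}$ by replacing $\gamma$ with the indeterminate $x$. This requires a careful combinatorial/Laplace-expansion argument in order to (i) certify that $p(x)\not\equiv 0$, and (ii) strip out the trivial factor $x^{\nu(p(x))}$ -- arising from the $x$-powers common to whole rows and columns -- so as to obtain an effective bound on $\deg p(x)-\nu(p(x))$, and hence on $D(k,n,\delta,\alpha)$. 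The specific choice of the exponents $\binom{i+1}{2}k-i$ and $\binom{i}{2}k+(j-1)i$ in the definitions \eqref{GrM}--\eqref{GrN} is precisely what should make this bookkeeping tractable, since after factoring out the trivial powers one expects a Vandermonde-style product formula whose nonvanishing is then guaranteed by the $\alpha_i$ being pairwise distinct.
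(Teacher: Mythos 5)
Your overall architecture matches the paper's: the reduction to Lemma~\ref{lem:MDPpolynomial}, the final evaluation step (factor $p(x)=x^{\nu(p(x))}q(x)$, note $\deg q=\deg p-\nu(p)\leq D(k,n,\delta,\alpha)<s$, and use that no nonzero polynomial over $\Fq$ of degree less than $s$ vanishes at $\gamma$), and the disposal of reducedness and basicness via Proposition~\ref{prop:parameters} and Remark~\ref{rem_basic} are exactly what the paper does. The genuine gap is that the claim you black-box --- that every $p(x)\in\mathcal P(k,n,\delta,\alpha)$ is nonzero in $\Fq[x]$ --- is not a citable result from earlier in Section~\ref{sec:proofTheorem}; it \emph{is} the body of the paper's proof of this theorem, and almost all of that proof is spent establishing it. Theorem~\ref{thm:main} only asserts that $\det G(x,\Y,\mathrm{B},\Lambda)$ is nonzero as a polynomial in $x$ \emph{and the variables} $\Y$, and locates its minimum-degree term in $x$. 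To conclude anything about the actual minors of $G_L^\cc(x)$ one must (i) exhibit each admissible full-size submatrix as a specialization $G(x,\mathrm{A},\widehat{\mathrm{B}},\widehat{\Lambda})$ whose exponent vectors satisfy \ref{lambdagrows}--\ref{lambda_ii} and \ref{part_a}--\ref{part_e}, and (ii) check that the coefficient of the minimum-degree monomial survives the specialization $\Y\mapsto\mathrm{A}$; this works because that coefficient is a product of classical Vandermonde determinants $\det V(\alpha^{(i)},(0,1,\ldots,k_i-1))$ times powers of the nonzero $\alpha_j$'s, which is where the hypothesis that the $\alpha_i$ are pairwise distinct and nonzero actually enters.

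A second omission is more than bookkeeping: whenever $L>m$ the matrix $G_L^\cc(x)$ contains zero blocks in its top-right corner, and when $k\nmid\delta$ the block $N_{m,t}(x)$ has $k-t$ zero rows, so the admissible submatrices are \emph{not} of the form treated by Theorem~\ref{thm:main}. The paper handles this (its Cases I-B and II) by passing to an auxiliary matrix $\tilde G_L^\cc(x)$ in which the zero blocks are replaced by matrices $M_{m+i}(x)$, applying Theorem~\ref{thm:main} there, and then proving by a counting argument on the column multiplicities $\bar\ell_i$ that the minimum-degree diagonal minors never touch the artificially inserted blocks, so the lowest-degree coefficient of $\det F(x)$ coincides with that of $\det\tilde F(x)$. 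Without this, your argument covers only the regime $\delta=km$ with $km<n-k$, where $L=m$ and no zero blocks occur. In short: the skeleton and the final $\gamma$-evaluation are correct and identical to the paper's, but the load-bearing nonvanishing argument is missing.
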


We conclude this section by illustrating with a concrete example how to construct a WRS convolutional code that is also MDP, using the previous theorem. 

\begin{example} \label{Ex:WRSvalues}
We fix the parameters $k=3$, $n=5$ and  $\delta=5$.  Therefore, we have $m=2$ and  $L=3$. We then choose a prime power greater than $n$, that is $q=7$ and a vector with pairwise distinct nonzero entries  $\alpha=(\alpha_1, \alpha_2, \alpha_3, \alpha_4, \alpha_5)=(1, 2, 3, 4, 5)\in \F_7^5$. At this point we illustrate how to choose a suitable $\gamma$ so that the resulting code $\C_{3,5}^5(\gamma, \alpha)$ is MDP. We consider the polynomial version of the $3$-th truncated sliding generator matrix $G_3^\cc(x)$, given by 

\begin{align*}
G_3^\cc(x) & =
\begin{pmatrix}
G_0(x) & G_1(x) & G_2(x)  &      \\
    & G_0(x) & G_1(x) & G_2(x)  \\
    &     & G_0(x) & G_1(x)  \\
    &     &     & G_0(x) \\
\end{pmatrix}\in \F_7[x]^{12\times 20},
\end{align*}
where
\begin{align*}
G_0(x) & =
\begin{pmatrix}
1 & 4 & 2 & 2 & 4 \\
1 & 2 & 3 & 4 & 5 \\
1       &     1 	& 1			&   1     &     1
\end{pmatrix} \in \F_7^{3\times 5},\\
G_1(x) & =
\begin{pmatrix}
x^2 & 4x^2 &  5 x^2 &  2x^2 &  3 x^2 \\
x  &  2 x &  4 x &  4x & 2 x \\
1 & 1 & 6 & 1 & 6
\end{pmatrix}\in \F_7[x]^{3\times 5},\\
G_2(x) & =
\begin{pmatrix}
0 & 0 & 0 & 0 & 0 \\
x^5  &  2 x^5 &  3x^5 &  4x^5 &  5x^5 \\
x^3  & x^3  & x^3  & x^3  & x^3 
\end{pmatrix}\in \F_7[x]^{3\times 5}.
\end{align*}

We now compute the value $D(3,5,5,\alpha)$, which can be checked to be  $D(3,5,5,\alpha)=9$. There are many full size minors of $G_{3}^\cc(x)$ from which we can obtain this value. For instance, if we select the columns with indices $\{1,6,7,8,11,12,13,16,17,18,19,20\}$ of $G_{3}^\cc(x)$,
we have that the full size minor is  $p(x)=x^3(3x^9+2x^8+4x^7+5x^5+x^4+4x^3+x^2+4x+4)  \in \mathcal P(3,5,5,\alpha)$. Let now choose $\gamma$ to be a root of an irreducible polynomial of degree $s=10$ over $\F_7$. Thus, with this choice, the code $\C_{3,5}^5(\gamma,\alpha)$ is an MDP $(5,3,5)$ WRS convolutional code over the field $\F_{7^{10}}$. Its generator matrix is given by $G(z):=G_0(\gamma)+G_1(\gamma)z+G_2(\gamma)z^2$. 
\end{example}

\section{A multivariate polynomial generalization of $G_L^\cc$ }\label{sec:proofTheorem}

In Section \ref{Sec:Construction} we introduced $G_L^\cc(x)$ in Definition \ref{def:G(x)} as a polynomial generalization of the truncated sliding generator matrix $G_L^\cc$ of WRS convolutional codes, by substituting $\gamma$ with a variable $x$. In this section we further generalize  its square submatrices  by seeing the $\alpha_i$'s defining the generalized Vandermonde matrices as algebraically independent variables $y_i$'s, yelding a multivariate polynomial representation of $G_L^\cc$. This generalization allows to give a proof of Theorem \ref{thm:ConstructionMDP}, to which this section is dedicated.
For the convenience of the reader, in this short introduction we briefly present the idea of the proof. 

We denote by $\mathrm{B}$ the collection of the involved powers of $x$, and by $\Lambda$ the collection of the exponents of $y_i$'s involved in the generalized Vandermonde matrices constituting the matrix $G_L^\cc(x)$. In other words, $\mathrm{B}$ and $\Lambda$ denote the exponents of the variables. In this way one obtains a polynomial generalization of the square submatrices of $G_L^\cc$, denoted by $G(x,\Y, \mathrm{B}, \Lambda)$, and their  minors become multivariate polynomials $p(x, \Y)$, where $\Y$ denotes the vector formed by the variables $y_i$'s. 

Several technical lemmas lead to Theorem \ref{thm:main}, where we describe the monomial of minimal degree of $p(x, \Y)$ in the variable $x$, which is given by the product of determinants of some particular submatrices of $G(x,\Y, \mathrm{B}, \Lambda)$. 

By choosing some special values of $\mathrm{B}$ and $\Lambda$ and specializing $\Y$ in a suitable vector $\mathrm{A}$ of elements in $\Fq$, we obtain that the resulting matrix yields a square submatrix of $G_L^\cc(x)$ as in Definition \ref{def:G(x)}. Moreover, we show  that the monomial of minimal degree in $p(x, \Y)$ is still nonzero when $\Y$ is specialized in $\mathrm{A}$. In particular, the set $\mathcal P(k,n,\delta,\alpha)$ defined in Definition \ref{def:G(x)} will consists only of such polynomials $p(x,\mathrm{A})$, which are all nonzero. By carefully choosing the value $\gamma$, we then show that the resulting convolutional code $\C_{k,n}^\delta(\gamma,\alpha)$ is MDP, by means of Theorem \ref{lem:MDPpolynomial}.
In Theorem \ref{thm:maxdegree} we give the equivalent version of Theorem \ref{thm:main} for the monomial of maximum degree in $x$ of the same polynomial $p(x,\Y)$. However, when considering $p(x,\mathrm{A})$, such  monomial could  vanish. 

All the results mentioned above are needed to finally prove Theorem \ref{thm:ConstructionMDP}, which states that WRS convolutional codes are MDP.

We start by recalling the definition of generalized Vandermonde matrix. Then, we establish the notation for the remainder of the section.

\begin{definition}\label{def:GenVandermonde}
Let $\Fq$ be the finite field with $q$ elements, $k,n$ be  positive integers. Let $\lambda = (\lambda_1,\dots, \lambda_k)\in \N^k$ be a vector whose entries are pairwise distinct and $\alpha = (\alpha_1,\dots, \alpha_n)\in \Fq^n$.
A $k\times n$ \textbf{generalized Vandermonde matrix} is a matrix of the form 
$$V({\lambda},\alpha) = \begin{pmatrix}
\alpha_1^{\lambda_1} & \alpha_2^{\lambda_1} & \cdots & \alpha_n^{\lambda_1} \\
\alpha_1^{\lambda_2} & \alpha_2^{\lambda_2} & \cdots & \alpha_n^{\lambda_2} \\
\vdots & \vdots & \ddots & \vdots\\
\alpha_1^{\lambda_k} & \alpha_2^{\lambda_k} & \cdots & \alpha_n^{\lambda_k} \\
\end{pmatrix}\in \Fq^{k\times n}. $$

\end{definition}

The following definition introduces the polynomial matrix which is central to this section. 

\begin{definition}\label{def:shapeG}
Let $\mm\in \N$ be a nonegative integer, 
 $(\ell_0, \dots, \ell_\mm), (k_0,\dots, k_\mm) \in (\N_{>0})^{\mm+1}$, such that $\sum_{i=0}^r \ell_i \leq \sum_{i=0}^r k_i$ for any $r\in\{0,\dots,\mm-1\}$ and $\sum_{i=0}^\mm \ell_i = \sum_{i=0}^\mm k_i$.
For any $j\in\{0,\dots, \mm\}$, let $\y^{(j)} = \left(y_1^{(j)},\dots, y_{\ell_j}^{(j)}\right)$ be a vector of variables and for any $0\leq i \leq j\leq \mm$, let $\lambda^{(i,j)}=(\lambda^{(i,j)}_1, \dots, \lambda^{(i,j)}_{k_i}) \in \N^{k_i}$ be such that the following conditions hold:

\begin{enumerate}[label=(\roman*)]
\item[\mylabel{lambdagrows}{\textnormal{(L1)}}]  $\lambda_{s-1}^{(i,j)} > \lambda_s^{(i,j)}$, for any $s\in\{2,\dots,k_i\}$.
\item[\mylabel{lambda_ii}{\textnormal{(L2)}}] $\lambda_{1}^{(i,j)} > \lambda_{k_{i+1}}^{(i+1,j)}$, for any $0\leq i\leq j-1$, $1\leq j\leq \mm$.
\end{enumerate}
For any $0\leq i \leq j\leq \mm$, let $\beta^{(i,j)}=(\beta^{(i,j)}_1,\dots, \beta^{(i,j)}_{k_i})\in\N^{k_i}$, such that:

\begin{enumerate}[label=(\alph*)]
\item[\mylabel{part_a}{\textnormal{(b1)}}] $\beta^{(i,i)}=0$.

\item[\mylabel{part_b}{\textnormal{(b2)}}] $\beta^{(i,j)}_{s-2} - \beta^{(i,j)}_{s-1}\geq\beta^{(i,j)}_{s-1}-\beta^{(i,j)}_{s}$, for any $s\in\{3,\dots,k_i\}$.

\item[\mylabel{part_c}{\textnormal{(b3)}}] $\beta^{(i,j)}_{k_i-1} - \beta^{(i,j)}_{k_i} \geq \beta^{(i,j)}_{k_i} - \beta^{(i+1,j)}_1 + 1\geq \beta^{(i+1,j)}_1 - \beta^{(i+1,j)}_2 +1 $, for any $0\leq i \leq j-1$ and $1 \leq j\leq \mm$.

\item[\mylabel{part_d}{\textnormal{(b4)}}]  $\left(\beta^{(i,j+1)}_{s-2} - \beta^{(i,j)}_{s-2}\right) - \left(\beta^{(i,j+1)}_{s-1} - \beta^{(i,j)}_{s-1}\right) \geq \left(\beta^{(i,j+1)}_{s-1} - \beta^{(i,j)}_{s-1}\right) - \left(\beta^{(i,j+1)}_{s} - \beta^{(i,j)}_{s}\right)$ for any $s\in\{3,\dots,k_i\}$, for any $i\leq j$ and $0\leq j \leq \mm-1$.

\item[\mylabel{part_e}{\textnormal{(b5)}}] $\left(\beta^{(i,j+1)}_{k_i} - \beta^{(i,j)}_{k_i}\right) - \beta^{(i+1,j+1)}_{1} \geq  \beta^{(i+1,j+1)}_{1} -  \beta^{(i+1,j+1)}_{2}$, for any $0\leq i\leq j \leq \mm-1$.
\end{enumerate}
We define 
\begin{equation}\label{eq:Aij}
    A_{i,j}^{\left(\beta^{(i,j)},\lambda^{(i,j)}\right) }:= \mathrm{diag}\left(x^{\beta^{(i,j)}}\right)V\left(\lambda^{(i,j)},\y^{(j)}\right)\in \F_q[x,\y^{(j)}]^{k_i\times\ell_j},
\end{equation}
where 
$$ \mathrm{diag}\left(x^{\beta^{(i,j)}}\right) = \begin{pmatrix}
x^{\beta^{(i,j)}_1} & 0 & \cdots & 0 \\
0 & x^{\beta^{(i,j)}_2} & \cdots & 0 \\
\vdots & \vdots & \ddots & \vdots \\
0 & 0 & \cdots & x^{\beta^{(i,j)}_{k_i}}
\end{pmatrix}\in\F[x]^{k_i\times k_i}.$$
 To simplify the notation in \eqref{eq:Aij}, we only write $A_{i,j}$  and specify the vectors $,\lambda^{(i,j)}$,  $\beta^{(i,j)}$ only when it is necessary. 
Let $$\Y:=\left(\y^{(0)}_1,\dots,\y^{(0)}_{\ell_0},\y^{(1)}_1,\dots,\y^{(1)}_{\ell_1},\dots, \y^{(\mm)}_1,\dots,\y^{(\mm)}_{\ell_\mm}\right)$$ be the vector of all the variables and
$$\mathrm{B} := \left(\beta^{(0,0)}, \dots, \beta^{(0,e)}, \beta^{(1,1)},\dots, \beta^{(1,e)}, \dots,  \beta^{(e,e)}\right),$$
$$\Lambda := \left(\lambda^{(0,0)}, \dots, \lambda^{(0,e)}, \lambda^{(1,1)},\dots, \lambda^{(1,e)}, \dots, \lambda^{(e,e)}\right)$$ be vectors of exponents. These three vectors uniquely determine the following matrix
\begin{equation}\label{matrixshapeG}
G(x,\Y, \mathrm{B}, \Lambda) := \begin{pmatrix}
A_{0,0} & A_{0,1} & \cdots & A_{0,\mm} \\
 & A_{1,1} &\cdots & A_{1,\mm} \\
 & & \ddots & \vdots \\
& & & A_{\mm,\mm}
\end{pmatrix} \in\F_q[x,\Y]^{(k_0 + \dots + k_\mm) \times (\ell_0 + \dots +\ell_\mm)}.
\end{equation}
\end{definition}
\bigskip

In the next example, we provide some tuples satisfying conditions \ref{lambdagrows}--\ref{lambda_ii} and \ref{part_a}--\ref{part_e}, to get a more intuitive idea of their relations.
\begin{example}\label{ex:exponents}
Let $\mm=2$ and  $(k_0,k_1,k_2)=(4,4,4), (\ell_0,\ell_1,\ell_2)=(2,4,6) \in \N^3$.
Let
$$
\begin{array}{rclrclrcl}
\lambda^{(0,0)}& = & (3,2,1,0), &
\lambda^{(0,1)}& = & (7,6,5,4), &
\lambda^{(0,2)}& = & (11,10,9,8) \\
& & & 
\lambda^{(1,1)}& = & (3,2,1,0), &
\lambda^{(1,2)}& = & (7,6,5,4), \\
& & & & & &
\lambda^{(2,2)}& = & (3,2,1,0).\\
\end{array}
$$
For convenience, we ordered these vectors in row/column blocks. Each row block corresponds to a fixed $i$, and each column block corresponds to a fixed $j$. Clearly, all the vectors defined above satisfy condition \ref{lambdagrows}, \ie are ordered in a decreasing order. Property \ref{lambda_ii} is referred to vectors in consecutive row blocks but same column block. It states that for any $i\leq j-1\leq 1$, we require the first entry of $\lambda^{(i,j)}$  to be strictly greater than the last entry of $\lambda^{(i+1,j)}$. In this example, we only need to  check the column block defined by $j=1$ and immediately obtain that  $\lambda^{(0,1)}_1 >\lambda^{(1,1)}_4$.

Let
$$
\begin{array}{rclrclrcl}
\beta^{(0,0)}& = & (0,0,0,0), &  \beta^{(0,1)}& = & (3,2,1,0), &
\beta^{(0,2)}& = & (10,8,6,4), \\
& & &
\beta^{(1,1)}& = & (0,0,0,0), &
\beta^{(1,2)}& = & (3,2,1,0),\\
& & & & & &
\beta^{(2,2)}& = & (0,0,0,0).\\
\end{array}
$$
Condition \ref{part_a} is clearly satisfied. Condition \ref{part_b} refers to each vector $\beta^{(i,j)}$. It states that the differences between consecutive entries are non increasing. For instance, consider the vector $\beta^{(0,2)}$. We have $$\beta^{(0,2)}_1-\beta^{(0,2)}_2\geq \beta^{(0,2)}_2-\beta^{(0,2)}_3 \geq \beta^{(0,2)}_3-\beta^{(0,2)}_4.$$
Condition \ref{part_c} refers to two vectors of two different row blocks, but same column block, for instance, $\beta^{(0,2)}$ and $\beta^{(1,2)}$. In this example, this is the only possible pair on which this property can be verified. We have
\begin{align*}
    \beta^{(0,2)}_3-\beta^{(0,2)}_4\geq\beta^{(0,2)}_4-\beta^{(1,2)}_1+1\geq\beta^{(1,2)}_1-\beta^{(1,2)}_2+1
\end{align*} 
which is obviously satisfied since $6-4\geq 4-3+1 \geq 3-2+1$. 
Condition \ref{part_d} refers to two vectors of the same row block. Consider $\beta^{(0,0)}$ and $\beta^{(0,1)}$. We want that 
$$  (\beta^{(0,1)}_1-\beta^{(0,0)}_1) - (\beta^{(0,1)}_2-\beta^{(0,0)}_2) \geq (\beta^{(0,1)}_2-\beta^{(0,0)}_2) - (\beta^{(0,1)}_3-\beta^{(0,0)}_3).$$ 
Indeed, we have $(3-0)-(2-0)\geq (2-0)-(1-0),$
which is trivially true.
Finally, condition \ref{part_e} relates two vectors of one row block with one vector of the consecutive row block. Consider $\beta^{(0,1)}$, $\beta^{(0,2)}$ and $\beta^{(1,2)}$. We want $$ \beta^{(0,2)}_4 - \beta^{(0,1)}_4 - \beta_{1}^{(1,2)}\geq \beta_1^{(1,2)} -\beta_2^{(1,2)}, $$
which is, also in this case, trivially satisfied since $4-0-3\geq 3-2$. 
\end{example}
Note that in Example \ref{ex:exponents}, all the inequalities are in fact equalities. This is due on purpose, since this particular case will lead to the construction of a WRS convolutional code.

\bigskip

Next, we illustrate with another example the link between the matrices $\Gxy$ -- together with their parameters -- and our family of WRS convolutional codes. Indeed, if we take a WRS convolutional code $\C_{k,n}^\delta(\gamma,\alpha)$ whose parameters satisfy certain conditions, then some of the full-size submatrices of $G_L^\cc(x)$ are obtained from $\Gxy$  after carefully choosing the vectors of exponents $\mathrm{B}$ and $\Lambda$ introduced in Definition \ref{def:shapeG}, and specializing the vector $Y$ in a suitable vector of elements $\alpha_{j_i}$'s obtained from $\alpha$.
Observe that the following example only illustrates a special case and it is meant to guide the reader in understanding our approach. The general case is analyzed later, in the proof of Theorem \ref{thm:ConstructionMDP}.


\begin{example} 
Let $(\ell_0, \dots, \ell_\mm ),(k_0,\dots, k_\mm)\in \mathbb{N}^{\mm+1}$ be vectors as in Definition \ref{def:shapeG}, with $k_i=k\in \mathbb{N}$, for $i=0,1,\ldots,\mm$. Define the vectors $\mathrm{B}=\left(\beta^{(0,0)}, \dots, \beta^{(0,e)}, \beta^{(1,1)},\dots, \beta^{(1,e)}, \dots,  \beta^{(e,e)}\right)$ and $\Lambda := \left(\lambda^{(0,0)}, \dots, \lambda^{(0,e)}, \lambda^{(1,1)},\dots, \lambda^{(1,e)}, \dots, \lambda^{(e,e)}\right)$ as 
$$
 \begin{array}{ll}
   \beta^{(i,j)}& =
     \left( \binom{j-i+1}{2}k-(j-i),\ldots,\binom{j-i}{2}k+(j-i),\binom{j-i}{2}k \right), \\
 \lambda^{(i,j)} &=
     \left( (j-i+1)k-1,\ldots,(j-i)k+1,(j-i)k \right),  
 \end{array}
 $$
for each $i,j$ such that $0\leq j-i\leq \mm$. Let $\Y:=(\y^{(0)}_1,\dots,\y^{(0)}_{\ell_0},\y^{(1)}_1,\dots,\y^{(1)}_{\ell_1},\dots, \y^{(\mm)}_1,\dots,\y^{(\mm)}_{\ell_\mm})$ be the vector of variables and consider the matrix $\Gxy\in \Fq[x,\Y]^{(\mm+1)k\times (\mm+1)k}$.

Now, choose a WRS convolutional code $\mathcal{C}_{k,n}^{\delta}(\gamma,\alpha)$ for suitable $\gamma$ and $\alpha$, where we make two assumptions on the parameters. We select $n,k$ and $\delta$ such that  $\delta=k\mm$ and $k\mm<n-k$. The latter assumption implies $L=\mm$. Consider the generator matrix of $\C_{k,n}^{\delta}(\gamma,\alpha)$ to be $G(z)=\sum_{i=0}^\mm G_iz^i$ as in Definition \ref{def:CWRS}, and take the polynomial version of its $L$-th truncated sliding generator matrix 
\begin{align*}
G_L^\cc(x) & =
\begin{pmatrix}
G_0(x) & G_1(x)  &      \ldots & G_\mm(x)\\
    & G_0(x)  & \ldots &  G_{\mm-1}(x)\\
    &       & \ddots & \vdots \\
    &     &     &  G_0(x) \\
\end{pmatrix}\in \F_{q}[x]^{(\mm+1)k \times (\mm+1)n}.
\end{align*}

We now point out that every full size submatrix of $G_L^\cc(x)$ obtained by taking $\ell_i$ columns from the $i$-th column block, can be derived starting from $\Gxy$ in the following way. Let $J_0,\ldots J_\mm \subseteq \{1,\dots, n\}$ be the corresponding indices of columns that are selected in each block, with $|J_i|=\ell_i$. Then, specialize $y^{(i)}=(y^{(i)}_1,\ldots, y^{(i)}_{\ell_i})$ in the elements $\alpha^{(i)}:=(\alpha_j : j\in J_i)$. Denote $\mathrm{A}:=(\alpha^{(0)},\ldots \alpha^{(\mm)})$ and observe now that, by construction, the selected submatrix of $G_L^\cc(x)$ coincides with  $G(x,\mathrm{A},\mathrm{B},\Lambda)$.

Finally, notice that the conditions on the $\ell_i$'s given in Definition \ref{def:shapeG} coincides with the conditions for the MDP property in \eqref{eq:lis}. However, in this case it is not contemplated that the values $\ell_i$'s can also be zero. We will study how to obtain the full size submatrices of $G_L^\cc(x)$ for this general case in the proof of Theorem \ref{thm:ConstructionMDP}.

\end{example} 

\bigskip

From Definition~\ref{def:shapeG}, we have that $\ell_0\leq k_0$, so we can assume that there exists $r\geq 0$ such that $k_0 = \ell_0 + r$. Moreover, from condition \ref{part_a} it is clear that for any $0\leq i \leq \mm $,  $A_{i,i}\in\F_q[\Y]^{k_i\times\ell_i}$, hence it does not depend on $x$.

We fix some further notation.
\begin{notation}
For any positive integer $i$, we denote $[i] := \{1,\dots , i\}$.
For any $I\subseteq[\ell_0+r]$, such that $|I| = \ell_0$,  we denote by $A_I(\Y)$ the $\ell_0 \times \ell_0$ matrix obtained from $A_{0,0}$ by selecting the rows indexed by $I$. We denote by $I_0$ the set of indices $[\ell_0]$.

Moreover, for any $I\subseteq[\ell_0+r]$ we denote by $\bar{I}$ the complement of $I$ in $[\ell_0 + r]$ and by $\GI$ the $((k_0-\ell_0)+k_1+\dots + k_\mm )\times (\ell_1+\dots+\ell_\mm )$ submatrix of  $\Gxy$ obtained by erasing the first $\ell_0$ columns and the rows indexed by $I$. This deletion automatically determines two new collections of vectors $\mathrm{B}_{\bar{I}}$ and $\Lambda_{\bar{I}}$.
We denote by $\beta^{(0,j)}_{\bar{I}}$ and by $\lambda^{(0,j)}_{\bar{I}}$ the vectors obtained from $\beta^{(0,j)}$ and $\lambda^{(0,j)}$ respectively, after deleting the entries indexed by $I$. 
Finally, we set  $$b_{\bar{I}} = \sum_{s\not\in I} \beta_s^{(0,1)}\in\N.$$
\end{notation}

\begin{remark}\label{rem:deletion_component}
In $\mathrm{B}_{\bar{I}}= (\beta^{(i,j)}_{\bar{I}})_{i,j}$, the deletion of the components indexed by $I$ only regards $\beta^{(0,j)}$. 
\end{remark}


The results presented in the remainder of this section refer all to a matrix $\Gxy$ of the form given in Definition \ref{def:shapeG}, with component matrices  $A^{\left(\beta^{(i,j)},\lambda^{(i,j)}\right)}$, $0\leq i,j\leq \mm $, where $\beta^{(i,j)},\lambda^{(i,j)}$ satisfy conditions \ref{lambdagrows}--\ref{lambda_ii}, \ref{part_a}--\ref{part_e} for any $i,j$.  

\begin{lemma}\label{lemma:shapeGtilde}
With the notation above, the following hold:

\begin{enumerate}
\item If $\ell_0 =  k_0$, then the vectors forming  $\mathrm{B}_{{\bar{I}}_0}$ and $\Lambda_{{\bar{I}}_0}$ -- which define $\GIbar$ -- satisfy conditions \ref{lambdagrows}--\ref{lambda_ii}, \ref{part_a}--\ref{part_e}. 
\item If $\ell_0< k_0$, then, for any $I\subseteq[\ell_0+r]$ of cardinality $\ell_0$, the matrix $\GI$ can be written as
\[\begin{pmatrix}
\diag\left(x^{\beta_{\bar{I}}^{(0,1)}}\right) & 0 \\
0 & \mathrm{Id}
\end{pmatrix} \GIbarTilde,
 \]
where $\mathrm{Id}$ is the identity matrix and $\GIbarTilde$ is a $({k_0'}+k_2+\dots + k_\mm ) \times (\ell_1 + \dots +\ell_\mm )$ matrix of the form \eqref{matrixshapeG}, where $k_0'=k_0+k_1-\ell_0$ and whose defining vectors in $\widetilde{\mathrm{B}}_{\bar{I}}$ and $\widetilde{\Lambda}_{\bar{I}}$ satisfy conditions \ref{lambdagrows}--\ref{lambda_ii}, \ref{part_a}--\ref{part_e}. 
\end{enumerate}
\end{lemma}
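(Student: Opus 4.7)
The plan is to handle the two cases separately.

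For part (1), the hypothesis $\ell_0 = k_0$ forces $r=0$ and $I = I_0 = [\ell_0]$, so deleting these rows together with the first $\ell_0$ columns removes the entire top row block and leftmost column block of $\Gxy$. What remains is precisely the block triangular submatrix $(A_{i,j})_{1 \leq i \leq j \leq \mm}$, which has the shape of \eqref{matrixshapeG} with $\mm$ replaced by $\mm-1$ and parameters given by the sub-collection $\{\lambda^{(i,j)}, \beta^{(i,j)} : 1 \leq i \leq j \leq \mm\}$. Every instance of \ref{lambdagrows}--\ref{lambda_ii} and \ref{part_a}--\ref{part_e} for this restricted index range is literally a valid instance for the original indices, so the conditions are inherited verbatim.

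For part (2), set $r = k_0 - \ell_0 > 0$. After removing the first $\ell_0$ columns and the $\ell_0$ rows in $I$, the matrix $\GI$ takes the form
\begin{equation*}
\GI = \begin{pmatrix}
A_{0,1}|_{\bar{I}} & A_{0,2}|_{\bar{I}} & \cdots & A_{0,\mm}|_{\bar{I}} \\
A_{1,1} & A_{1,2} & \cdots & A_{1,\mm} \\
 & A_{2,2} & \cdots & A_{2,\mm} \\
 & & \ddots & \vdots \\
 & & & A_{\mm,\mm}
\end{pmatrix},
\end{equation*}
where $A_{0,j}|_{\bar{I}} = \diag(x^{\beta^{(0,j)}_{\bar{I}}}) V(\lambda^{(0,j)}_{\bar{I}}, \y^{(j)})$. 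The central manoeuvre is to factor out $\diag(x^{\beta^{(0,1)}_{\bar{I}}})$ from the top $r$ rows by writing $\diag(x^{\beta^{(0,j)}_{\bar{I}}}) = \diag(x^{\beta^{(0,1)}_{\bar{I}}})\,\diag(x^{\beta^{(0,j)}_{\bar{I}} - \beta^{(0,1)}_{\bar{I}}})$. The non-negativity of the new exponents will follow from iterating \ref{part_d}--\ref{part_e}, which together imply that for each fixed row index $s$ the sequence $(\beta^{(0,j)}_s)_{j \geq 1}$ is non-decreasing. After the factoring, the remaining matrix $\GIbarTilde$ has the shape of \eqref{matrixshapeG} with $\mm-1$ row and column blocks, where the first row block of size $k_0' = r + k_1$ is obtained by concatenating the shifted top $r$ rows with the original second row block, and the remaining blocks are inherited with indices shifted by one. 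Explicitly, the new parameters are $\widetilde{\lambda}^{(0,j)} = (\lambda^{(0,j+1)}_{\bar{I}}, \lambda^{(1,j+1)})$ and $\widetilde{\beta}^{(0,j)} = (\beta^{(0,j+1)}_{\bar{I}} - \beta^{(0,1)}_{\bar{I}}, \beta^{(1,j+1)})$ in the merged block, and $\widetilde{\lambda}^{(i,j)} = \lambda^{(i+1,j+1)}$, $\widetilde{\beta}^{(i,j)} = \beta^{(i+1,j+1)}$ for $i \geq 1$.

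It remains to verify \ref{lambdagrows}--\ref{lambda_ii} and \ref{part_a}--\ref{part_e} for these new vectors. The normalization $\widetilde{\beta}^{(i,i)} = 0$ is immediate: for $i = 0$ it reads $(\beta^{(0,1)}_{\bar{I}} - \beta^{(0,1)}_{\bar{I}}, \beta^{(1,1)}) = 0$, and for $i \geq 1$ it is just \ref{part_a} at index $i+1$. Conditions involving only blocks with $i \geq 1$ reduce to the originals after the index shift, and the parts of \ref{lambdagrows} and of \ref{part_b}, \ref{part_d} internal to either piece of the merged block are likewise restrictions of the originals. The delicate case, and the main obstacle, is the junction inside the merged first row block between the shifted top $r$ rows and the rows coming from the original second block. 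For \ref{lambdagrows} this amounts to showing that every surviving entry of $\lambda^{(0,j+1)}_{\bar{I}}$ strictly exceeds $\lambda^{(1,j+1)}_1$, which I expect to extract from \ref{lambda_ii} together with the separation of the $\lambda$-ranges visible in the construction of $\Gxy$. For the concavity-type conditions at this interface, the critical quantity is $\widetilde{\beta}^{(0,j)}_r - \widetilde{\beta}^{(0,j)}_{r+1} = (\beta^{(0,j+1)}_{s^\ast} - \beta^{(0,1)}_{s^\ast}) - \beta^{(1,j+1)}_1$, where $s^\ast = \max \bar{I}$; the plan is to telescope these differences along the deleted indices in $I$ using \ref{part_b} and \ref{part_d}, and then feed the result into \ref{part_c} and \ref{part_e} of the original family at the junction $k_0 \leftrightarrow 1$ to obtain the required inequalities for $\widetilde{\mathrm{B}}_{\bar{I}}$.
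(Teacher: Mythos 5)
Your strategy coincides with the paper's: part (1) is handled identically, and in part (2) you perform the same factorization of $\diag\bigl(x^{\beta^{(0,1)}_{\bar I}}\bigr)$ out of the surviving top rows and arrive at exactly the paper's description of the merged block, $\widetilde{\beta}^{(0,j)}=\bigl(\beta_{\bar{I}}^{(0,j+1)}-\beta_{\bar{I}}^{(0,1)}\,|\,\beta^{(1,j+1)}\bigr)$ and $\widetilde{\lambda}^{(0,j)}=\bigl(\lambda_{\bar{I}}^{(0,j+1)}\,|\,\lambda^{(1,j+1)}\bigr)$, with all blocks of index $i\geq 1$ inherited after a shift. The problem is that the proof stops where the lemma actually begins: the verifications you yourself call ``the delicate case, and the main obstacle'' --- condition \ref{lambdagrows} across the seam of the merged block, and conditions \ref{part_b}, \ref{part_c}, \ref{part_d}, \ref{part_e} at the interface between the surviving top rows and the rows coming from $A_{1,j+1}$ --- are only announced (``I expect to extract\dots'', ``the plan is to telescope\dots''), never carried out. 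Everything else is a routine relabelling, so these interface inequalities are the entire content of the statement; a complete proof must actually derive, for instance, $\bigl(\beta^{(0,j+1)}_{s^\ast}-\beta^{(0,1)}_{s^\ast}\bigr)-\beta^{(1,j+1)}_1\geq \beta^{(1,j+1)}_1-\beta^{(1,j+1)}_2$ and the inequality one row above the seam from \ref{part_b}, \ref{part_c} and \ref{part_e} of the original family, as the paper does (tersely) in its cases (b2)(iii), (b3) and (b5).

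One further caution on the step you defer to \ref{lambda_ii}: for \ref{lambdagrows} to hold inside the merged block you need the \emph{last surviving} entry of $\lambda^{(0,j+1)}_{\bar I}$ to exceed the \emph{first} (largest) entry $\lambda^{(1,j+1)}_1$. As literally stated, \ref{lambda_ii} only compares $\lambda^{(0,j+1)}_1$ with $\lambda^{(1,j+1)}_{k_1}$, i.e.\ the largest upper entry with the smallest lower one, which is strictly weaker; what you need is the separation $\lambda^{(0,j+1)}_{k_0}>\lambda^{(1,j+1)}_1$, which does hold for the exponent vectors arising in the WRS construction but is not a formal consequence of \ref{lambda_ii} alone. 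Similarly, the non-negativity of the new exponents (needed for $\GIbarTilde$ to be of the form \eqref{matrixshapeG} with $\widetilde{\beta}^{(i,j)}\in\N^{k_i}$) is asserted via an unproved monotonicity claim; it should be deduced by first showing from \ref{part_a}--\ref{part_c} that each $\beta^{(i,j)}$ is non-negative and non-increasing and then applying \ref{part_e}.
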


\begin{proof}
\begin{enumerate}
\item If $\ell_0 = k_0$, then
\[ 
\GIbar =  \begin{pmatrix}
 A_{0,0}^\prime & A_{0,1}^\prime & \cdots & A_{0,\mm -1}^\prime \\
& A_{1,1}^\prime &\cdots & A_{1,\mm -1}^\prime \\
 & & \ddots & \vdots \\
& & & A_{\mm -1,\mm -1}^\prime
\end{pmatrix},
\]
where $A_{i,j}^\prime = A_{i+1,j+1}$ for any $0\leq i,j\leq \mm -1$, therefore the vectors $\beta^{(i,j)}, \lambda^{(i,j)}$ defining each $A_{i,j}$ clearly satisfy conditions \ref{lambdagrows}--\ref{lambda_ii}, \ref{part_a}--\ref{part_e}.

\item Assume $\ell_0< k_0$. Note that, for any set of indices $I\subseteq [\ell_0 + r]$,  $\GI$ is a $({k_0'}+k_2+\dots + k_\mm ) \times (\ell_1 + \dots +\ell_\mm )$, where $k_0^\prime = k_0+k_1-\ell_0$.
Clearly, 
\[\GI = \begin{pmatrix}
\diag\left(x^{\beta_{\bar{I}}^{(0,1)}}\right) & 0 \\
0 & \mathrm{Id}
\end{pmatrix}\GIbarTilde,\]
and 
\[ 
\GIbarTilde =  \begin{pmatrix}
 A_{0,0}^\prime & A_{0,1}^\prime & \cdots & A_{0,\mm -1}^\prime \\
& A_{1,1}^\prime &\cdots & A_{1,\mm -1}^\prime \\
 & & \ddots & \vdots \\
& & & A_{\mm -1,\mm -1}^\prime

\end{pmatrix},
\]
where  $A_{i,j}^\prime = A_{i+1,j+1}$ for any $1\leq i,j\leq \mm -1$.
Hence, for $i\geq 1$ all the conditions are satisfied. It is left to prove the result for $i=0$. 

Conditions \ref{lambdagrows}--\ref{lambda_ii} are trivially satisfied, since they are related to the vectors of exponents in the generalized Vandermonde matrices, on which we do not make operations.

If $\widetilde{\mathrm{B}}_{\bar{I}} = (\widetilde{\beta}^{(i,j)})_{i,j}$, then, because of Remark \ref{rem:deletion_component}, it follows that for any $0\leq j \leq \mm -1$, we have that
\begin{equation}\label{eq:newexponent}
\widetilde{\beta}^{(0,j)}=\left(\beta_{\bar{I}}^{(0,j+1)}|\beta^{(1,j+1)}\right) - \left(\beta_{\bar{I}}^{(0,1)}|0\right),\end{equation}
where $0$ represents the zero vector and the difference is made componentwise. Here, we used the concatenation symbol just to stress that the deletion of components indexed by $I$ only regards the $0$-th row.

\begin{enumerate}[label=(\alph*)]
\item[{\textnormal{(b1)}}]  $\widetilde{\beta}^{(0,0)}={0}$.

\item[{\textnormal{(b2)}}]\label{InsideBlock} To show this, we have to consider three cases:
	\begin{enumerate}
	\item If $s\in\{3,\dots,k_0-\ell_0-2\}$ we consider $\widetilde{\beta}^{(0,j)}=\beta_{\bar{I}}^{(0,j+1)} - \beta_{\bar{I}}^{(0,1)}$, with abuse of notation, in the sense of \eqref{eq:newexponent}. For any $1\leq h\leq k_0-\ell_0$, we denote by $\beta_{\bar{I},{h}}^{(0,j)}$ the $h$-th entry of $\beta_{\bar{I}}^{(0,j)}$. We need to verify that
$$\beta_{\bar{I},{s-2}}^{(0,j+1)} - \beta_{\bar{I},s-2}^{(0,1)} - \beta_{\bar{I},{s-1}}^{(0,j+1)} + \beta_{\bar{I},s-1}^{(0,1)} \geq \beta_{\bar{I},{s-1}}^{(0,j+1)} - \beta_{\bar{I},s-1}^{(0,1)} - \beta_{\bar{I},{s}}^{(0,j+1)} + \beta_{\bar{I},s}^{(0,1)}.$$

This is true, since by assumption $$\beta_{\bar{I},s-2}^{(0,1)} - \beta_{\bar{I},s-1}^{(0,1)} \geq  \beta_{\bar{I},s-1}^{(0,1)} - \beta_{\bar{I},s}^{(0,1)}$$ and $$\beta_{\bar{I},{s-2}}^{(0,j+1)} -\beta_{\bar{I},{s-1}}^{(0,j+1)} \geq \beta_{\bar{I},{s-1}}^{(0,j+1)} -  \beta_{\bar{I},{s}}^{(0,j+1)}. $$  

	\item If $s\geq k_0-\ell_0+2$, there is nothing to show, since $\widetilde{\beta}^{(0,j)} = \beta^{(1,j+1)}$ for any $0\leq j \leq \mm -2$.
	\item In the other case, the result is ensured by condition \ref{part_c} on $\beta^{(0,j)}$.
\end{enumerate}

\item[{\textnormal{(b3)}}]\label{thruoghBlock} Let $1\leq j\leq \mm -1$. For $i=0$, we have that  
\begin{align*}
    &\widetilde{\beta}_{k_0-1}^{(0,j)} = \beta_{k_1-1}^{(1,j+1)}, \qquad  \widetilde{\beta}_{k_0}^{(0,j)} = \beta_{k_1}^{(1,j+1)},\\
    &\widetilde{\beta}_{1}^{(1,j)} =  \beta_{1}^{(2,j+1)}, \qquad \widetilde{\beta}_{2}^{(1,j)} =  \beta_{2}^{(2,j+1)}.
\end{align*} 
Hence, to verify that $$\widetilde{\beta}^{(0,j)}_{k_0-1} - \widetilde{\beta}^{(0,j)}_{k_0} \geq \widetilde{\beta}^{(0,j)}_{k_0} - \widetilde{\beta}^{(1,j)}_1 +1\geq \widetilde{\beta}^{(1,j)}_2 - \widetilde{\beta}^{(1,j)}_1 +1, $$ we need to check that 
$$\beta^{(1,j+1)}_{k_1-1} - \beta^{(1,j+1)}_{k_1} \geq \beta^{(1,j+1)}_{k_1} - \beta^{(2,j+1)}_1 +1 \geq \beta^{(2,j+1)}_2 - \beta^{(2,j+1)}_1 +1,$$ that is true by assumption.

\item[{\textnormal{(b4)}}] It follows from the expression \eqref{eq:newexponent}, by using the same reasoning of part \ref{part_b}.
\item[{\textnormal{(b5)}}] It follows by using the same reasoning of part \ref{part_c}.
\end{enumerate}

\end{enumerate}
\end{proof}


\begin{lemma}\label{lem:betaexponent}
For any $I\subseteq [\ell_0+r]$, such that $|I|=\ell_0$ and $I_0=[\ell_0]$, it holds that  
$\beta^{(0,j)}_{\bar{I}_0}\leq \beta^{(0,j)}_{\bar{I}}$ componentwise, for any $j$.
\begin{proof}
The proof immediately follows from the conditions  \ref{part_a}--\ref{part_e}.
\end{proof}
\end{lemma}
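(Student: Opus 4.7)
The plan is to split the statement into two independent observations: a combinatorial comparison between the index sets $\bar I$ and $\bar I_0$, and a monotonicity property of the vector $\beta^{(0,j)}$ that is hidden in the hypotheses \ref{part_a}--\ref{part_e}. Write $\bar I=\{i_1<i_2<\dots<i_r\}$ with $r=k_0-\ell_0$, so that the $s$-th components of the two vectors under comparison are $\beta^{(0,j)}_{\ell_0+s}$ (coming from $\bar I_0$) and $\beta^{(0,j)}_{i_s}$ (coming from $\bar I$). Since $|I|=\ell_0$ and $I\cap[i_s]$ contains exactly $i_s-s$ elements, we have $i_s-s\leq\ell_0$, i.e., $i_s\leq\ell_0+s$ for every $s\in\{1,\dots,r\}$.

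It therefore suffices to show that the sequence $\beta^{(0,j)}$ is non-increasing. If $j=0$, this is trivial by \ref{part_a}. For $j\geq 1$ I would argue by downward induction on $i$, from $i=j$ to $i=0$, that every $\beta^{(i,j)}$ is strictly decreasing with non-negative entries. The base case $i=j$ is again \ref{part_a}. For the inductive step, the hypothesis gives $\beta^{(i+1,j)}_1\geq\beta^{(i+1,j)}_2\geq 0$, and the chain in \ref{part_c} then yields
$$\beta^{(i,j)}_{k_i-1}-\beta^{(i,j)}_{k_i}\;\geq\;\beta^{(i,j)}_{k_i}-\beta^{(i+1,j)}_1+1\;\geq\;\beta^{(i+1,j)}_1-\beta^{(i+1,j)}_2+1\;\geq\;1.$$
From this one extracts $\beta^{(i,j)}_{k_i}\geq\beta^{(i+1,j)}_1\geq 0$ and that the last consecutive difference of $\beta^{(i,j)}$ is at least $1$. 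The concavity property \ref{part_b} then forces every earlier consecutive difference to be at least $1$ as well, giving both strict decrease and non-negativity and closing the induction.

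Combining the two steps, $i_s\leq\ell_0+s$ together with the monotonicity of $\beta^{(0,j)}$ yields $\beta^{(0,j)}_{\ell_0+s}\leq\beta^{(0,j)}_{i_s}$ for every $s\in\{1,\dots,r\}$, which is precisely the componentwise inequality in the statement. The only real obstacle is the second step: the stated conditions \ref{part_a}--\ref{part_e} are cast as concavity-type second-difference inequalities rather than monotonicity, so one must extract the monotonicity of $\beta^{(0,j)}$ from \ref{part_a}, \ref{part_b} and \ref{part_c} via the downward induction above. Conditions \ref{part_d} and \ref{part_e}, which control how $\beta^{(i,j)}$ evolves as $j$ varies, play no role here since the lemma compares two different subselections of the \emph{single} vector $\beta^{(0,j)}$.
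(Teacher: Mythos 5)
Your proposal is correct, and it supplies exactly the details that the paper's one-line proof (``immediately follows from \ref{part_a}--\ref{part_e}'') leaves implicit: the combinatorial fact that the $s$-th element of $\bar I$ is at most $\ell_0+s$, combined with the monotonicity of $\beta^{(0,j)}$ extracted from \ref{part_a}, \ref{part_b} and \ref{part_c} by downward induction on $i$. The only cosmetic imprecision is that the base case $\beta^{(j,j)}=0$ is constant rather than strictly decreasing, but since your inductive step only uses $\beta^{(i+1,j)}_1\geq\beta^{(i+1,j)}_2\geq 0$, which holds there with equality, the argument goes through unchanged.
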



Let $L=(\ell_0,\dots,\ell_\mm )$, $K = (k_0,\dots,k_\mm )$. We are going to estimate the minimum and the maximum degrees in $x$ of the determinant of the matrix $\Gxy$. We define
\begin{align*}
\dd_{\mathrm{min}}(L,K,\mathrm{B},\Lambda) &:= \min \deg_x(\det(\Gxy))\in \N \cup \{\infty\},\\
\dd_{\mathrm{max}}(L,K,\mathrm{B},\Lambda) &:= \max \deg_x(\det(\Gxy))\in \N \cup \{\infty\}.
\end{align*} 
In the following lemma, we observe that, whenever $\Lambda$ is made of vectors satisfying the conditions \ref{lambdagrows}--\ref{lambda_ii}, $\dd_{\mathrm{min}}(L,K,\mathrm{B},\Lambda)$  and $\dd_{\mathrm{max}}(L,K,\mathrm{B},\Lambda)$  only depend on the fixed $L,K$ and $\mathrm{B}$.

\begin{lemma}\label{lem:mindegree}
Let $\Phi = (\phi^{(i,j)})_{{i,j}}$ and $\mathrm{B} = (\beta^{(i,j)})_{i,j}$, such that $\phi^{(i,j)}, \beta^{(i,j)}$ satisfy conditions \ref{part_a}--\ref{part_e} and $\phi^{(i,j)}\geq \beta^{(i,j)}$ componentwise for any $i,j$. Let $\Lambda = (\lambda^{(i,j)})_{i,j}$ be fixed, such that $\lambda^{(i,j)}$ satisfy conditions \ref{lambdagrows}--\ref{lambda_ii} for any $i,j$. Then, for any $\Upsilon = (\upsilon^{(i,j)})_{i,j}$, such that $\upsilon^{(i,j)}$ satisfies conditions \ref{lambdagrows}--\ref{lambda_ii}, we have
\begin{align*}
\dd_{\mathrm{min}}(L,K,\Phi,\Upsilon) & \geq \dd_{\mathrm{min}}(L,K,\mathrm{B},\Lambda), \\
\dd_{\mathrm{max}}(L,K,\Phi,\Upsilon) & \geq \dd_{\mathrm{max}}(L,K,\mathrm{B},\Lambda).
\end{align*}

\begin{proof}
Observe that the determinant of $\Gxy$ is a polynomial in $x$ with coefficients in $\F_q[\Y]$ and we can express it via the Leibniz formula. Let $N = \sum_{i=0}^\mm  k_i = \sum_{i=0}^\mm \ell_i$ and $\SN$ be the symmetric group of order $N$. Then
\[\det\Gxy = \sum_{\sigma\in\SN}\sgn(\sigma)\prod_{i=1}^N\Gxy_{i,\sigma(i)} = \sum_{\sigma\in\SN}\sgn(\sigma)G^\sigma,\]
where $G^{\sigma} = \prod_{i=1}^N\Gxy_{i,\sigma(i)} = R_{\sigma,\Lambda}(\Y)x^{s_{\sigma,\mathrm{B}}}$, for a suitable $s_{\sigma,\mathrm{B}}\in\N$.
Here, $R_{\sigma,\Lambda}(\Y)$ is a monomial in $\Y$, which can also be $0$, depending on $\sigma$.  Hence,
$$\det\Gxy = \sum_{\sigma\in\SN}R_{\sigma,\Lambda}(\Y)x^{s_{\sigma,\mathrm{B}}} =\sum_{\sigma\in Z^\Lambda}R_{\sigma,\Lambda}(\Y)x^{s_{\sigma,\mathrm{B}}},$$
where $Z^\Lambda:=\{\sigma\in \SN \mid G^{\sigma} \ne 0\}.$ Obviously, as $\sigma(i)=i\in Z^\Lambda,\ Z^\Lambda\neq \emptyset$.

Observe that the elements $R_{\sigma,\Lambda}(\Y)$, with  $\sigma\in Z^\Lambda$ are $\F_q$-linearly independent. Indeed, they are monomials in $\Y$ and they all involve distinct exponents due to conditions \ref{lambdagrows}--\ref{lambda_ii} on the components of $\Lambda$.  This remark is crucial for the rest of the proof.

\noindent
Let 
\begin{gather*}
P_{\Phi, \Upsilon}(x,\Y) =\det(G(x,\Y,\Phi, \Upsilon)) = \sum_{\sigma\in Z^{\Upsilon}} R_{\sigma,\Upsilon}(\Y)x^{s_{\sigma,\Phi}},\\
P_{\mathrm{B}, \Lambda}(x,\Y) =\det(G(x,\Y,\mathrm{B},\Lambda)) = \sum_{\sigma\in Z^{\Lambda}} R_{\sigma,\Lambda}(\Y)x^{s_{\sigma,\mathrm{B}}}.
\end{gather*}
The only thing to observe is that $R_{\sigma,\Upsilon}(\Y) = 0$
if and only if $R_{\sigma,\Lambda}(\Y) = 0$.
This is true because of the previous observation. Indeed, these monomials in $\Y$ are linearly independent over $\F_q$, hence their nonzeroness depends only on the support of the matrix and not on their exponents. This implies that $Z^\Upsilon = Z^\Lambda$. In particular, if $R_{\sigma,\Lambda}(\Y)x^{\sigma,\mathrm{B}}$ is the monomial in $\F_q[\Y][x]$ corresponding to the minimum degree in $x$ of $P_{\mathrm{B},\Lambda}(x,\Y)$, then the monomial 
$R_{\sigma,\Upsilon}(\Y)x^{s_{\sigma,\Phi}}$ corresponds to the minimum degree in $x$ of  $P_{\Phi, \Upsilon}(x,\Y)$.
 Furthermore, by our assumptions on $\Phi$ and $\mathrm{B}$, we have  $s_{\sigma,\mathrm{B}}\leq s_{\sigma,\Phi}$. Hence, $\dd_{\mathrm{min}}(L,K,\Phi,\Upsilon) \geq \dd_{\mathrm{min}}(L,K,\mathrm{B},\Lambda).$ 
The same argument also holds for the maximum degree.
\end{proof}
\end{lemma}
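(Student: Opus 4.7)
The plan is to expand $\det\Gxy$ via the Leibniz formula, use the block upper-triangular support pattern together with the monotonicity hypotheses \ref{lambdagrows}--\ref{lambda_ii} on $\Lambda$ to exclude any cancellation among the surviving summands, and then compare the $x$-degrees term by term. Setting $N=\sum_i k_i=\sum_i \ell_i$, each summand $\prod_{i=1}^N\Gxy_{i,\sigma(i)}$ indexed by $\sigma\in\SN$ is either zero (when $\sigma$ sends some row to a column lying strictly below the block diagonal) or a product of factors of the form $x^{\beta^{(a,b)}_r}(y^{(b)}_s)^{\lambda^{(a,b)}_r}$; in the latter case we may write it as $\sgn(\sigma)R_{\sigma,\Lambda}(\Y)\,x^{s_{\sigma,\mathrm{B}}}$, with $R_{\sigma,\Lambda}(\Y)$ a monomial in $\Y$ and $s_{\sigma,\mathrm{B}}\in\N$ a sum of entries of $\mathrm{B}$ selected by $\sigma$.

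I would first observe that the vanishing of $R_{\sigma,\Lambda}(\Y)$ depends only on whether $\sigma$ respects the block upper-triangular support of $\Gxy$, and not on the actual values of $\Lambda$ or $\mathrm{B}$; hence the set $Z^\Lambda:=\{\sigma\in\SN\mid R_{\sigma,\Lambda}(\Y)\ne 0\}$ coincides with $Z^\Upsilon$, and is non-empty (the identity permutation belongs to it thanks to the hypothesis $\sum_{i\leq r}\ell_i\leq\sum_{i\leq r}k_i$). The crucial step is to show that for distinct $\sigma,\tau\in Z^\Lambda$ the monomials $R_{\sigma,\Lambda}(\Y)$ and $R_{\tau,\Lambda}(\Y)$ are themselves distinct in $\F_q[\Y]$. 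To see this, I would fix a column of $\Gxy$ corresponding to the variable $y^{(b)}_s$: if $\sigma$ selects row $(a,r)$ for this column, then the exponent of $y^{(b)}_s$ contributed to $R_{\sigma,\Lambda}(\Y)$ is exactly $\lambda^{(a,b)}_r$. Condition \ref{lambdagrows} yields strict monotonicity of $\lambda^{(a,b)}_\cdot$ within each row block, while \ref{lambda_ii} forces $\lambda^{(a,b)}_1>\lambda^{(a+1,b)}_{k_{a+1}}$, chaining consecutive row blocks together. Iterating, all values $\lambda^{(a,b)}_r$ with $0\leq a\leq b$ and $1\leq r\leq k_a$ become pairwise distinct, so the exponent of $y^{(b)}_s$ uniquely recovers $(a,r)$; reading this off column by column recovers $\sigma$ from $R_{\sigma,\Lambda}(\Y)$. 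The identical argument applies with $\Upsilon$ in place of $\Lambda$.

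Given this rigidity, no cancellation can occur in the Leibniz sum, and one obtains
\[
\det\Gxy=\sum_{\sigma\in Z^\Lambda}\sgn(\sigma)R_{\sigma,\Lambda}(\Y)\,x^{s_{\sigma,\mathrm{B}}},
\]
which, after grouping by $x$-power, is an $\F_q$-linear combination of pairwise distinct monomials in $\Y$. Therefore $\dd_{\mathrm{min}}(L,K,\mathrm{B},\Lambda)=\min_{\sigma\in Z^\Lambda}s_{\sigma,\mathrm{B}}$ and $\dd_{\mathrm{max}}(L,K,\mathrm{B},\Lambda)=\max_{\sigma\in Z^\Lambda}s_{\sigma,\mathrm{B}}$, and the analogous formulas hold for $(\Phi,\Upsilon)$ using the identical index set $Z^\Upsilon=Z^\Lambda$. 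Since $\Phi\geq\mathrm{B}$ componentwise forces $s_{\sigma,\Phi}\geq s_{\sigma,\mathrm{B}}$ for every $\sigma\in Z^\Lambda$, both the minimum and the maximum can only increase when passing from $\mathrm{B}$ to $\Phi$, yielding the two claimed inequalities (and both sides are simultaneously $\infty$ when $Z^\Lambda=\emptyset$). The main obstacle I anticipate is precisely the rigidity step: verifying that conditions \ref{lambdagrows}--\ref{lambda_ii} really do separate every pair of monomials $R_{\sigma,\Lambda}(\Y)$; once this is secured, the remainder reduces to the pointwise inequality $s_{\sigma,\mathrm{B}}\leq s_{\sigma,\Phi}$.
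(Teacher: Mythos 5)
Your proposal is correct and follows essentially the same route as the paper's proof: Leibniz expansion, the observation that $Z^\Upsilon=Z^\Lambda$ because vanishing of a summand depends only on the support pattern, non-cancellation via pairwise distinctness of the monomials $R_{\sigma,\Lambda}(\Y)$ forced by \ref{lambdagrows}--\ref{lambda_ii}, and the termwise comparison $s_{\sigma,\mathrm{B}}\leq s_{\sigma,\Phi}$. The only difference is that you spell out the column-by-column recovery of $\sigma$ from the exponent vector, a step the paper states without elaboration.
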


We are now ready to present how to determine the monomial of minimum  degree in $x$ of 
$\det(\Gxy)$.

\begin{theorem}\label{thm:main}
The determinant of $\Gxy$ is nonzero. Moreover, the monomial with minimum degree  in $x$ of $\det(\Gxy)$\footnote{For monomial with minimum degree in $x$ of $\det(\Gxy)$, we mean the monomial as an element in $\F_q[\Y][x]$, of the form $p(\Y)x^b$, where $p(\Y)$ is in the ring of coefficients $\F_q[\Y]$ and $b$ is the smallest exponent of $x$ involved in the expression of $\det(\Gxy)$.}   is given by the product of the 
$\ell_i\times\ell_i$ minors across the main diagonal, for $i=0,\ldots, \mm $.   More precisely, let $L_0$ be the set of the smallest $\ell_0$ row indices of $A_{0,0}$, \ie  $L_0= \{1,2,\dots,\ell_0\}$. For any $i=1,\dots,\mm $ define $L_i$ to be the set of the smallest $\ell_i$ row indices corresponding to the $i$-th column block of $\Gxy$ after deleting the rows indexed by $\cup_{j=0}^{i-1} L_j$, \ie $L_i$ is given by the first $\ell_i$ indices in $\{1,\dots, \sum_{j=0}^i k_j\}\setminus \cup_{j=0}^{i-1} L_j$. 
Then, the monomial with minimum degree in $x$ of $\det(\Gxy)$  is the product, for $i=0,\ldots, \mm $, of the $\ell_i \times \ell_i$ minors whose rows are indexed by $L_i$ and whose columns are the one corresponding to the $i$-th column block for $i=0,\dots,\mm $.
\end{theorem}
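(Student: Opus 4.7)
My plan is to prove the statement via the Leibniz expansion of $\det(G(x,\Y,\mathrm{B},\Lambda))$ combined with an exchange argument that identifies the combinatorial minimizer. Writing
$$\det(G)=\sum_{\sigma\in Z^\Lambda}\mathrm{sgn}(\sigma)R_{\sigma,\Lambda}(\Y)\,x^{s_{\sigma,\mathrm{B}}},$$
the crucial initial observation---already exploited in the proof of Lemma~\ref{lem:mindegree}---is that for $\sigma\in Z^\Lambda$ (the permutations whose contribution is nonzero, i.e.\ those respecting the block upper-triangular structure) the monomials $R_{\sigma,\Lambda}(\Y)$ are pairwise distinct thanks to conditions \ref{lambdagrows}--\ref{lambda_ii}. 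Hence no sign cancellation can occur in the coefficient of any fixed power $x^d$, and identifying the minimum-degree monomial reduces to characterizing the set of $\sigma$'s minimizing $s_{\sigma,\mathrm{B}}$.

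\textbf{Combinatorial minimization.} Writing $s_{\sigma,\mathrm{B}}=\sum_{r=1}^N\beta^{(a_r,b_r)}_{p_r}$, where row $r$ belongs to row-block $a_r$ with within-block index $p_r$ and $\sigma(r)$ lies in column-block $b_r$, I claim that the minimizers are exactly those $\sigma\in Z^\Lambda$ satisfying $\sigma^{-1}(\text{column-block }i)=L_i$ for every $i$. I will prove this by an exchange argument: assuming $\sigma$ is a minimizer and $i$ is the smallest index with $\sigma^{-1}(\text{column-block }i)\ne L_i$, the minimality at smaller indices forces the existence of a row $r\in L_i$ sent to some column-block $b>i$ and a row $r'>r$ with $r'\notin L_i$ but sent to column-block $i$. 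Swapping the two assignments preserves block upper-triangularity (both $r,r'\le\sum_{j\le i}k_j$ guarantees $a_r,a_{r'}\le i$), and the key point is that it strictly decreases the cost, which I expect to extract from a Monge-type inequality
$$\beta^{(a_r,i)}_{p_r}+\beta^{(a_{r'},b)}_{p_{r'}}<\beta^{(a_r,b)}_{p_r}+\beta^{(a_{r'},i)}_{p_{r'}},$$
derived from the convexity-like conditions \ref{part_b}--\ref{part_e} on the $\beta$-vectors.

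\textbf{Assembly.} Once the set of minimizers is pinned down, it factors as independent choices of bijections $\tau_i:L_i\to\text{column-block }i$. Since distinct column-blocks involve disjoint sets of the variables $\y^{(i)}$, the sum
$$\sum_{\sigma\text{ minimizer}}\mathrm{sgn}(\sigma)R_{\sigma,\Lambda}(\Y)$$
factorizes across $i$, and each factor $\sum_{\tau_i}\mathrm{sgn}(\tau_i)\prod_{r\in L_i}G(x,\Y,\mathrm{B},\Lambda)_{r,\tau_i(r)}$ is precisely the $\ell_i\times\ell_i$ minor of $G(x,\Y,\mathrm{B},\Lambda)$ indexed by rows $L_i$ and the columns of the $i$-th column block. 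Since the set of minimizers is nonempty---for instance, the greedy assignment sending $L_i$ to column-block $i$ for every $i$ is feasible because every row in $L_i$ satisfies $a_r\le i$---the minimum-degree coefficient is a nonzero sum of distinct monomials, simultaneously proving that $\det(G(x,\Y,\mathrm{B},\Lambda))\ne 0$.

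\textbf{Main obstacle.} The crux is the second step: establishing the Monge/supermodularity inequality for the cost matrix $c(r,b)=\beta^{(a_r,b)}_{p_r}$, which is what forces the greedy prescription $\sigma^{-1}(\text{column-block }i)=L_i$. Translating conditions \ref{part_b}--\ref{part_e} into this Monge property will require a careful case analysis, splitting on whether $r$ and $r'$ lie in the same row-block and whether the swap crosses a row-block boundary; this is precisely where the intricate ladder of inequalities in \ref{part_c} (across blocks) and in \ref{part_d}--\ref{part_e} (convexity of $\beta^{(i,j)}$ in both $j$ and the within-block index) is essential. As an alternative route one could attempt an induction on $\mm$ using Laplace expansion along the first column block combined with Lemmas~\ref{lemma:shapeGtilde}--\ref{lem:mindegree}, but comparing $\deg_x\det(\widetilde G_{\bar I})$ across $I\ne I_0$ is not immediate because the componentwise comparison in Lemma~\ref{lem:betaexponent} need not be preserved after the shift by $\beta^{(0,1)}_{\bar I}$ implicit in the definition of $\widetilde{\mathrm{B}}_{\bar I}$.
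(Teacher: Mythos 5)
Your overall strategy is genuinely different from the paper's. The paper argues by induction on $\mm$: it Laplace-expands $\det(\Gxy)$ along the first $\ell_0$ columns, factors $\det(\GI)=x^{b_{\bar I}}\det(\GIbarTilde)$, and uses Lemmas \ref{lemma:shapeGtilde}, \ref{lem:betaexponent} and \ref{lem:mindegree} to show that the term $I=I_0$ contributes the unique monomial of smallest $x$-degree. You instead stay inside the Leibniz expansion and try to characterize the cost-minimizing permutations by an exchange argument. The reduction to an assignment problem is sound: the no-cancellation observation, the existence of the pair $r<r'$ with $r\in L_i$ sent to a block $b>i$ and $r'\notin L_i$ sent to block $i$, and the feasibility of the swap are all correct, and a completed exchange argument would be more transparent than the paper's recursion.

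The gap is that the Monge inequality is the entire content of the theorem, and it is neither proved nor, in fact, provable from \ref{part_a}--\ref{part_e} alone. Concretely, take $\mm=2$, $(k_0,k_1,k_2)=(3,2,2)$, $(\ell_0,\ell_1,\ell_2)=(1,1,5)$, $\beta^{(0,1)}=(20,10,0)$, $\beta^{(0,2)}=(6,4,2)$, $\beta^{(1,2)}=(1,0)$ and all diagonal $\beta$'s zero: every condition among \ref{part_a}--\ref{part_e} holds (the nontrivial checks reduce to $10\ge 10$, $2\ge 2\ge 2$, $-8\ge -8$ and $1\ge 1$), yet with $r=2$, $r'=3$, $i=1$, $b=2$ your inequality reads $10+2<0+4$, which is false. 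The greedy assignment costs $13$, while sending row $3$ to column block $1$ and row $2$ to column block $2$ costs $5$; since the associated monomials in $\Y$ are distinct, the minimum $x$-degree of $\det(\Gxy)$ is at most $5$ and the stated conclusion fails for this $\mathrm{B}$. What is missing is a monotonicity hypothesis on the increments, e.g.\ that $\beta^{(i,j+1)}_s-\beta^{(i,j)}_s$ be non-increasing in $s$ (with strictness where the exchange needs it), equivalently that the within-block gaps grow with the column-block index; the exponents in \eqref{Gr} do satisfy this ($\beta^{(i,j)}_s-\beta^{(i,j)}_{s+1}=j-i$ there), but \ref{part_b}--\ref{part_e} do not imply it. Note that your closing worry about the paper's route is exactly the same phenomenon: on this example $\widetilde\beta^{(0,1)}$ acquires the negative entry $6-20$ when $I=\{2\}$, so Lemmas \ref{lemma:shapeGtilde} and \ref{lem:betaexponent} do not transfer to $\widetilde{\mathrm{B}}_{\bar I}$ either. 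To complete your proof you must add this monotonicity to the hypotheses (or verify it directly for the WRS exponents) and then actually carry out the deferred case analysis, tracking where strictness enters through the ``$+1$'' in \ref{part_c}.
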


\begin{proof}
Recall that $$\Gxy = \begin{pmatrix}
A_{0,0} & A_{0,1} & \cdots & A_{0,\mm } \\
 & A_{1,1} &\cdots & A_{1,\mm } \\
 & & \ddots & \vdots \\
& & & A_{\mm ,\mm }
\end{pmatrix},$$ 
with $A_{i,j} = A_{i,j}^{\left(\beta^{(i,j)},\lambda^{(i,j)}\right)}$ and that $\lambda^{(i,j)}$ and $\beta^{(i,j)}$ satisfy the conditions \ref{lambdagrows}--\ref{lambda_ii} and \ref{part_a}--\ref{part_e} given in Definition \ref{def:shapeG}.

We will prove the result by induction on $\mm $.

\noindent
\underline{\textbf{Base case}} $\mm =0$: In this case, $k_0 = \ell_0$ and $A_{0,0}$ is a generalized Vandermonde matrix, whose determinant is a nonzero polynomial.

\noindent
\underline{\textbf{Induction case:}} Assume that the result is true for all the numbers of blocks up to $\mm $ and prove it for $\mm +1$ blocks.

This time we are going to use Laplace formula to compute the determinant of $\Gxy$.

\begin{align*}\label{eq:determinant}
\det(\Gxy) &= \sum_{\substack{I\subseteq [\ell_0+r]\\ |I| = \ell_0}}(\pm 1) \det(A_I(Y))\det(\GI)  \\
&= \sum_{\substack{I\subseteq [\ell_0+r]\\ |I| = \ell_0}}(\pm 1) \det(A_I(Y))\det\begin{pmatrix}
\diag\left(x^{\beta_{\bar{I}}^{(0,1)}}\right) & 0 \\
0 & \mathrm{Id}
\end{pmatrix}\det(\widetilde{G}_{\bar{I}}(x,\Y, \mathrm{B}_{\bar{I}},\Lambda_{\bar{I}}))   \\
&= \sum_{\substack{I\subseteq [\ell_0+r]\\ |I| = \ell_0}}(\pm 1) \det(A_I(Y))x^{b_{\bar{I}}}\det(\widetilde{G}_{\bar{I}}(x,\Y, \mathrm{B}_{\bar{I}},\Lambda_{\bar{I}}))\\
&=\det(A_{I_0}(Y))x^{b_{\bar{I}_0}}\det(\widetilde{G}_{\bar{I}_0}(x,\Y, \mathrm{B}_{\bar{I}_0},\Lambda_{\bar{I}_0})) + \\
&  \sum_{\substack{I\subseteq [\ell_0+r]\\ |I| = \ell_0 \\ I\ne [\ell_0]}}(\pm 1)\det(A_I(Y))x^{b_{\bar{I}}}\det(\widetilde{G}_{\bar{I}}(x,\Y, \mathrm{B}_{\bar{I}},\Lambda_{\bar{I}})).
\end{align*}

Observe that $\widetilde{G}_{\bar{I}}(x,\Y, \mathrm{B}_{\bar{I}},\Lambda_{\bar{I}})$ is composed by $\mm \times \mm$ blocks and by Lemma \ref{lemma:shapeGtilde}, 
the vectors in $\mathrm{B}_{\bar{I}}$ and $\Lambda_{\bar{I}}$ satisfy  conditions \ref{lambdagrows}--\ref{lambda_ii} and \ref{part_a}--\ref{part_e}, that is $\widetilde{G}_{\bar{I}}(x,\Y, \mathrm{B}_{\bar{I}},\Lambda_{\bar{I}})$ is of the form given in 
Definition \ref{def:shapeG}.

Let $M := \dd_{\mathrm{min}}(L_{\bar{I}_0},K_{\bar{I}_0},\mathrm{B}_{\bar{I}_0},\Lambda_{\bar{I}_0})$.
Now, from Lemma \ref{lem:betaexponent} and Lemma \ref{lem:mindegree} we have that $M\leq \dd_{\mathrm{min}}(L_{\bar{I}},K_{\bar{I}},\widetilde{\mathrm{B}}_{\bar{I}},\widetilde{\Lambda}_{\bar{I}})$, for any $I\subseteq[\ell_0+r]$, with $|I|=\ell_0$. Hence, we can write $\dd_{\mathrm{min}}(L_{\bar{I}},K_{\bar{I}},\mathrm{B}_{\bar{I}},\Lambda_{\bar{I}}) = M+s_I$, where $s_I\in\N$ depends on the chosen $I$.

Therefore, 
\begin{align*}
\det(\Gxy)&= x^{b_{\bar{I}_0}}\det(A_{I_0}(Y))x^M\left(P_{I_0}(Y) + xQ_{I_0}(x,Y)\right) + \\ &+\sum_{\substack{I\subseteq [\ell_0+r]\\ |I| = \ell_0 \\ I\ne [\ell_0]}}(\pm 1)x^{b_{I}}x^{M+s_{I}}\det(A_I(Y))\left(P_{I}(Y) + xQ_{I}(x,Y)\right),
\end{align*}
where $P_{I_0}(Y), P_{I}(Y)\ne 0$, for any $I$.

\noindent
By definition of $b_{\bar{I}}$ and $\beta^{(i,j)}$, it is evident that $b_{\bar{I}_0} < b_{\bar{I}}$ for any $I\ne I_0$. Hence, we have that $M$ is the minimum degree of the determinant of $\GIbar$ and, by inductive hypothesis, the corresponding monomial is obtained by multiplying the $\ell_i\times \ell_i$ minors across the main diagonal for $i=1,\dots, \mm $.
Therefore, the minimum degree in $x$ of $\det(\Gxy)$ is ${b_{\bar{I}_{0}}+M}$   and its corresponding monomial is given by the product of the determinants obtained by selecting the first $\ell_0$ rows and columns and the $\ell_i\times\ell_i$ minors across the main diagonal, as explained in the statement. 
\end{proof}

We now provide an exhaustive example from  which will illustrate Theorem \ref{thm:main}.

\begin{example}\label{exa:Gxy_minimal_degree}
Next, we show an example of how to construct the matrix $G(x,\Y, \mathrm{B}, \Lambda)$ given in Definition~\ref{def:shapeG} 
and use Theorem~\ref{thm:main} to determine the monomial of minimum degree in $x$ of $\det(\Gxy)$.

Let  $\mm=2$ and $(k_0,k_1,k_2)=(3,4,5), (\ell_0,\ell_1,\ell_2)=(2,4,6) \in \N^3$. The blocks $A_{i,j}$ with $0 \leq i \leq j \leq 2$ composing $\Gxy$ are of the form
\[
\begin{array}{ll}
A_{i,j} & =  \mathrm{diag}\left(x^{\beta^{(i,j)}}\right)V\left(\lambda^{(i,j)},\y^{(j)}\right)\\
  & = 
\begin{pmatrix}
x^{\beta_1^{(i,j)}}\left(y_1^{(j)}\right)^{\lambda_1^{(i,j)}} & x^{\beta_1^{(i,j)}}\left(y_2^{(j)}\right)^{\lambda_1^{(i,j)}} & \cdots & x^{\beta_1^{(i,j)}}\left(y_{\ell_j}^{(j)}\right)^{\lambda_1^{(i,j)}} \\
x^{\beta_2^{(i,j)}}\left(y_1^{(j)}\right)^{\lambda_2^{(i,j)}} & x^{\beta_2^{(i,j)}}\left(y_2^{(j)}\right)^{\lambda_2^{(i,j)}} & \cdots & x^{\beta_2^{(i,j)}}\left(y_{\ell_j}^{(j)}\right)^{\lambda_2^{(i,j)}} \\
\vdots & \vdots & \ddots & \vdots\\
x^{\beta_{k_i}^{(i,j)}}\left(y_1^{(j)}\right)^{\lambda_{k_i}^{(i,j)}}& x^{\beta_{k_i}^{(i,j)}}\left(y_2^{(j)}\right)^{\lambda_{k_i}^{(i,j)}} & \cdots & x^{\beta_{k_i}^{(i,j)}}\left(y_{\ell_j}^{(j)}\right)^{\lambda_{k_i}^{(i,j)}} \\
\end{pmatrix}\\
\end{array}.
\]
Consider $\Y:=(y^{(0)},y^{(1)},y^{(2)})$ the vector of variables, where
 \[
\begin{array}{lll}
y^{(0)}& = & (y_1^{(0)},y_2^{(0)}) = (y_1,y_2),\\
y^{(1)}& = & (y_1^{(1)},y_2^{(1)},y_3^{(1)},y_4^{(1)})= (z_1,z_2,z_3,z_4),\\
y^{(2)}& = & (y_1^{(2)},y_2^{(2)},y_3^{(2)},y_4^{(2)},y_5^{(2)},y_6^{(2)})= (w_1,w_2,w_3,w_4,w_5,w_6).\\
\end{array}
\]
Let  $\beta^{(i,j)}=\left(\beta^{(i,j)}_1, \ldots, \beta^{(i,j)}_{k_i} \right)$ be the vector consisting of the powers of $x$ of the rows of $A_{i,j}$, with $0 \leq i \leq j \leq 2$. We only take the values for a fixed column of $A_{i,j}$, since that every column has the same powers. Let
$$\mathrm{B}:=(\beta^{(0,0)},\beta^{(0,1)},\beta^{(0,2)},\beta^{(1,1)},\beta^{(1,2)},\beta^{(2,2)}),$$ where, in order to lighten the notation, we define
$$
\begin{array}{rclrclrcl}
\beta^{(0,0)}& = & 
(0,0,0) &
\beta^{(0,1)}& = & 
(2,1,0) & 
\beta^{(0,2)}& = & 
(9,7,5)\\
& & & \beta^{(1,1)}& = & 
(0,0,0,0) &
\beta^{(1,2)}& = & 
(3,2,1,0) \\
& & & & & & \beta^{(2,2)}& = & 
(0,0,0,0,0).
\end{array}
$$
Now, let $\lambda^{(i,j)}=\left(\lambda^{(i,j)}_1, \ldots, \lambda^{(i,j)}_{k_i} \right)$ be the vector composed by the powers of variables $Y$ of a fixed column of $A_{i,j}$, with $0 \leq i \leq j \leq 2$. In our case, we obtain that  $$\Lambda:=\left(\lambda^{(0,0)},\lambda^{(0,1)},\lambda^{(0,2)},\lambda^{(1,1)},\lambda^{(1,2)},\lambda^{(2,2)} \right),$$ where
\[
\begin{array}{rclrclrcl}
\lambda^{(0,0)}& = & 
(2,1,0) &
\lambda^{(0,1)}& = & 
(6,5,4) &
\lambda^{(0,2)}& = & 
(11,10,9)\\
& & & \lambda^{(1,1)}& = & 
(3,2,1,0)&
\lambda^{(1,2)}& = & 
(8,7,6,5)\\
& & & & & & \lambda^{(2,2)}& = &
(4,3,2,1,0).
\end{array}
\]
It is easy to check that the values $\beta^{(i,j)}$ and $\lambda^{(i,j)}$ of the vectors $\mathrm{B}$ and $\Lambda$, respectively, satisfy the conditions of Definition~\ref{def:shapeG}.

Hence, the matrix $\Gxy$ is given by

\begin{gather*}
G(x,\Y, \mathrm{B}, \Lambda)
 =  
 \begin{pmatrix}
A_{0,0} & A_{0,1} &  A_{0,2} \\
        & A_{1,1} & A_{1,2} \\
        &         & A_{2,2}
\end{pmatrix}= \\
\left(
\begin{array}{cc:cccc:cccccc}
y_1^2 & y_2^2 &  x^2z_1^6 & x^2z_2^6 & x^2z_3^6 & x^2z_4^6 & x^9w_1^{11} & x^9w_2^{11} & x^9w_3^{11} & x^9w_4^{11} & x^9w_5^{11} & x^9w_6^{11}\\
y_1 & y_2 & xz_1^5 & xz_2^5 & xz_3^5 & xz_4^5 & x^7w_1^{10} & x^7w_2^{10} & x^7w_3^{10} & x^7w_4^{10} & x^7w_5^{10} & x^7w_6^{10}\\
1   & 1   & z_1^4 & z_2^4 & z_3^4 & z_4^4 & x^5w_1^{9} & x^5w_2^{9} & x^5w_3^{9} & x^5w_4^{9} & x^5w_5^{9} & x^5w_6^{9}\\ \hdashline
    &      & z_1^3 & z_2^3 & z_3^3 & z_4^3 & x^3w_1^{8} & x^3w_2^{8} & x^3w_3^{8} & x^3w_4^{8} & x^3w_5^{8} & x^3w_6^{8}\\
        &      & z_1^2 & z_2^2 & z_3^2 & z_4^2 & x^2w_1^{7} & x^2w_2^{7} & x^2w_3^{7} & x^2w_4^{7} & x^2w_5^{7} & x^2w_6^{7}\\
           &      & z_1 & z_2 & z_3 & z_4 & xw_1^{6} & xw_2^{6} & xw_3^{6} & xw_4^{6} & xw_5^{6} & xw_6^{6}\\
           &      & 1   & 1   & 1   & 1   & w_1^{5} & w_2^{5} & w_3^{5} & w_4^{5} & w_5^{5} & w_6^{5}\\ \cdashline{3-12}
\multicolumn{6}{c:}{}                     &   w_1^{4} & w_2^{4} & w_3^{4} & w_4^{4} & w_5^{4} & w_6^{4}\\
\multicolumn{6}{c:}{}                     &  w_1^{3} & w_2^{3} & w_3^{3} & w_4^{3} & w_5^{3} & w_6^{3}\\
\multicolumn{6}{c:}{}                     &  w_1^{2} & w_2^{2} & w_3^{2} & w_4^{2} & w_5^{2} & w_6^{2}\\
\multicolumn{6}{c:}{}                     & w_1     & w_2     & w_3     & w_4     & w_5      & w_6\\           
\multicolumn{6}{c:}{}                     &  1       &    1    &   1     &  1      &  1     &  1           
\end{array}
\right).\\
\end{gather*}

In order to compute the minimum degree in $x$ of the determinant of $\Gxy$ we use Theorem~\ref{thm:main}. 
To this end, we need to define the sets $L_i$ for $i=0,1,2$, given in Theorem~\ref{thm:main}. $L_0$ is the set of the smallest $\ell_0$ row indices of $A_{0,0}$, that is $L_0=\{1,2\}$, 
and the rest of sets $L_i$ are given by the first $l_i$ indices in $\{1,\dots, \sum_{j=0}^i k_j\}\setminus \cup_{j=0}^{i-1} L_j$. First, we obtain these previous subsets for any $i=1,2$. $L_1$ is composed by the first $\ell_1=4$ indices in 
\[
\{1,\dots, k_0+k_1\}\setminus  L_0 =\{1,2,3,4,5,6,7\} \setminus \{1,2\} = \{3,4,5,6,7\},
\]
that is, $L_1=\{3,4,5,6\}$; and  $L_2$ is composed by the first $l_2$ indices in 
\[
\{1,\dots, k_0+k_1+k_2\}\setminus  \left(L_0 \cup L_1 \right) =\{1,2,\ldots,12\} \setminus \{1,2,3,4,5,6\} = \{7,8,9,10,11,12\},
\]
that is, $L_2=\{7,8,9,10,11,12\}$.
By applying Theorem ~\ref{thm:main}, we have that the monomial of minimal degree in $x$ of $\det(\Gxy)$ is the product of three square minors, whose rows are indexed by $L_i$ for $i=0,1,2$, 
given in the diagonal of the matrix $\Gxy$

\[
\left(\begin{array}{|cc|cccc|cccccc|}\hhline{|--|}
\cellyellow {y_1^2} & \cellyellow{y_2^2} & x^2z_1^6 & x^2z_2^6 & x^2z_3^6 & \multicolumn{1}{c}{x^2z_4^6} & x^9w_1^{11} & x^9w_2^{11} & x^9w_3^{11} & x^9w_4^{11} & x^9w_5^{11} & \multicolumn{1}{c}{x^9w_6^{11}}\\
\cellyellow {y_1} & \cellyellow {y_2} & xz_1^5  & xz_2^5 & xz_3^5 & \multicolumn{1}{c}{xz_4^5} & x^7w_1^{10} & x^7w_2^{10} & x^7w_3^{10} & x^7w_4^{10} & x^7w_5^{10} & \multicolumn{1}{c}{x^7w_6^{10}}\\  \hhline{--|----|}
\multicolumn{1}{c}{1}   & 1 & \cellblue {z_1^4} & \cellblue {z_2^4} & \cellblue {z_3^4} & \cellblue {z_4^4} & x^5w_1^{9} & x^5w_2^{9} & x^5w_3^{9} & x^5w_4^{9} & x^5w_5^{9} & \multicolumn{1}{c}{x^5w_6^{9}} \\ 
 \multicolumn{1}{c}{}   &      & \cellblue {z_1^3} & \cellblue {z_2^3} & \cellblue {z_3^3} & \cellblue {z_4^3} & x^3w_1^{8} & x^3w_2^{8} & x^3w_3^{8} & x^3w_4^{8} & x^3w_5^{8} & \multicolumn{1}{c}{x^3w_6^{8}}\\
    \multicolumn{1}{c}{}    &      & \cellblue {z_1^2} & \cellblue {z_2^2} & \cellblue {z_3^2} & \cellblue {z_4^2} & x^2w_1^{7} & x^2w_2^{7} & x^2w_3^{7} & x^2w_4^{7} & x^2w_5^{7} & \multicolumn{1}{c}{x^2w_6^{7}}\\
   \multicolumn{1}{c}{}       &      & \cellblue{z_1} & \cellblue {z_2} & \cellblue {z_3} & \cellblue {z_4} & xw_1^{6} & xw_2^{6} & xw_3^{6} & xw_4^{6} & xw_5^{6} & \multicolumn{1}{c}{xw_6^{6}}\\ \hhline{~~|----|------|}
\multicolumn{2}{c}{} &  1   & 1   & 1   & 1  & \cellred {w_1^{5}} & \cellred {w_2^{5}} & \cellred {w_3^{5}} & \cellred {w_4^{5}} & \cellred {w_5^{5}} & \cellred {w_6^{5}}\\
\multicolumn{6}{c|}{}                 &  \cellred {w_1^{4}} & \cellred {w_2^{4}} & \cellred {w_3^{4}} & \cellred {w_4^{4}} & \cellred {w_5^{4}} & \cellred {w_6^{4}}\\
\multicolumn{6}{c|}{}          & \cellred { w_1^{3}} & \cellred {w_2^{3}} & \cellred {w_3^{3}} & \cellred {w_4^{3}} & \cellred {w_5^{3}} & \cellred {w_6^{3}}\\
\multicolumn{6}{c|}{}                  &  \cellred {w_1^{2}} & \cellred {w_2^{2}} & \cellred {w_3^{2}} & \cellred {w_4^{2}} & \cellred {w_5^{2}} & \cellred {w_6^{2}}\\
\multicolumn{6}{c|}{}                &  \cellred {w_1}     & \cellred {w_2}     & \cellred {w_3}     & \cellred {w_4}     & \cellred {w_5}      & \cellred {w_6}\\           
\multicolumn{6}{c|}{}                 &  \cellred {1}       &    \cellred {1}    &   \cellred {1}     &  \cellred {1}      &  \cellred {1}     &  \cellred {1}  \\ \cline{7-12}
\end{array}\right)
\]

This product of determinants produces the monomial in 
$\F_q[\Y][x]$ of minimal degree in $x$ (that is $0$) given by
$$ y_1y_2z_1z_2z_3z_4(y_1-y_2)\prod_{1\leq i <j \leq 4}(z_i-z_j) \prod_{1\leq i < j \leq 6}(w_i-w_j).$$
\end{example}

We also obtain a similar  result for the maximum degree. This time the monomial of maximum degree is obtained by taking the product of another set of minors. However, we will omit the proof, since it is technical and it is based on the same idea of the proof of Theorem \ref{thm:main}. Using induction and Laplace formula for computing the determinant, one reduces to the case of one block less, estimating the degrees of monomials by means of Lemma \ref{lem:mindegree}.

\begin{theorem}\label{thm:maxdegree} Let $L_0$ be the set of the highest $\ell_0$ row indices of $A_{0,0}$, \ie \  $L_0= \{k_0-\ell_0+1,k_0-\ell_0+2,\dots,k_0\}$. For any $i=1,\dots,\mm $ define $L_i$ to be the set of the highest $\ell_i$ row indices corresponding to the $i$-th column block of $\Gxy$ after deleting the rows indexed by $\cup_{j=0}^{i-1} L_j$, \ie $L_i$ is given by the highest $\ell_i$ indices in $\{1,\dots, \sum_{j=0}^i k_j\}\setminus \cup_{j=0}^{i-1} L_j$. 
Then, the monomial with maximum  degree in $x$ of $\det(\Gxy)$  is the product, for $i=0,\ldots, \mm $, of the $\ell_i \times \ell_i$ minors whose rows are indexed by $L_i$ and whose columns are the one corresponding to the $i$-th column block for $i=0,\dots,\mm$.
\end{theorem}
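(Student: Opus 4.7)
The plan is to mirror the inductive argument used for Theorem~\ref{thm:main}, systematically replacing ``smallest rows'' by ``largest rows'' throughout. I proceed by induction on $\mm$. The base case $\mm=0$ is immediate: $A_{0,0}$ has a single $k_0\times k_0$ minor (so $L_0=\{1,\dots,k_0\}$), and its determinant is $x$-free. For the inductive step, I apply Laplace expansion along the first $\ell_0$ columns of $\Gxy$ and invoke Lemma~\ref{lemma:shapeGtilde} to factor out the $x^{b_{\bar I}}$ contribution, arriving at
\begin{equation*}
\det\Gxy \;=\; \sum_{\substack{I\subseteq[\ell_0+r]\\ |I|=\ell_0}} (\pm 1)\,\det(A_I(\Y))\,x^{b_{\bar I}}\,\det\widetilde{G}_{\bar I}(x,\Y,\widetilde{\mathrm{B}}_{\bar I},\widetilde{\Lambda}_{\bar I}).
\end{equation*}
The task is then to identify $I=L_0=\{k_0-\ell_0+1,\dots,k_0\}$ as the unique index set realizing the maximum $x$-degree in this sum, and to show that the resulting top-degree coefficient is nonzero.

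The key new ingredient is the dual of Lemma~\ref{lem:betaexponent}: conditions \ref{part_a}--\ref{part_c} force each vector $\beta^{(0,j)}$ to be weakly decreasing in its index, so the complement $\overline{L_0}=\{1,\dots,k_0-\ell_0\}$ selects precisely the largest entries of $\beta^{(0,j)}$. Hence $\beta^{(0,j)}_{\overline{L_0}}\geq \beta^{(0,j)}_{\bar I}$ componentwise for every admissible $I$, and in particular $b_{\overline{L_0}}\geq b_{\bar I}$, with strict inequality for $I\neq L_0$. Combined with the $\dd_{\mathrm{max}}$ half of Lemma~\ref{lem:mindegree}, whose proof is entirely symmetric between min and max (it relies only on the $\F_q$-linear independence of the monomials $R_{\sigma,\Lambda}(\Y)$, which is invariant under reordering exponents), this yields
\begin{equation*}
b_{\bar I}+\deg_x\det\widetilde{G}_{\bar I}\;\leq\; b_{\overline{L_0}}+\deg_x\det\widetilde{G}_{\overline{L_0}},
\end{equation*}
strictly whenever $I\neq L_0$. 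Thus the leading $x$-power in the Laplace sum comes solely from $I=L_0$, and its coefficient is a product of nonzero monomials in $\F_q[\Y]$ (the $Y$-monomial from $\det A_{L_0}(\Y)$ times the leading $Y$-monomial of $\det\widetilde{G}_{\overline{L_0}}$) and is therefore itself nonzero.

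It remains to apply the inductive hypothesis to $\widetilde{G}_{\overline{L_0}}$, which by Lemma~\ref{lemma:shapeGtilde} is again a matrix of the form treated in Definition~\ref{def:shapeG} with one fewer row-block: its maximum-$x$-degree monomial is the product, for $i=1,\dots,\mm$, of $\ell_i\times\ell_i$ minors selecting the highest row indices in each subsequent column block. Re-indexing through the row-deletion $L_0=\{k_0-\ell_0+1,\dots,k_0\}$ of the original matrix recovers precisely the sets $L_1,\dots,L_m$ of the statement, closing the induction. The main obstacle will be the careful bookkeeping that certifies \emph{strict} (not merely weak) inequality of $x$-degrees for every $I\neq L_0$, so that no cancellation of leading terms can occur, together with the faithful translation of the recursive row-index rule under the column-block re-indexing inside $\widetilde{G}_{\overline{L_0}}$.
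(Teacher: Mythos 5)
Your proposal follows exactly the strategy the paper itself indicates for this theorem (whose proof it omits as being ``based on the same idea'' as Theorem~\ref{thm:main}): induction on $\mm$, Laplace expansion along the first $\ell_0$ columns, the dual of Lemma~\ref{lem:betaexponent} to single out $I=L_0$ as the unique maximizer of $b_{\bar I}$, and the maximum-degree half of Lemma~\ref{lem:mindegree} to dominate the remaining summands. The argument is correct at the same level of rigour as the paper's proof of Theorem~\ref{thm:main}, so there is nothing essentially new here.
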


\begin{example}
 Following the same setting as the one in Example \ref{exa:Gxy_minimal_degree}, we illustrate how Theorem \ref{thm:maxdegree} works. In this case, we have that the monomial with maximal degree in $x$ of $\Gxy$ is obtained taking $L_0=\{2,3\}$, $L_1=\{4,5,6,7\}$ and $L_2=\{1,8,9,10,11,12\}$. Graphically, the product of the three minors are given below
  \[
\left(
\begin{array}{|cc|cccc|cccccc|} \hhline{~~~~~~|------|}
\multicolumn{1}{c}{y_1^2} & \multicolumn{1}{c}{y_2^2} & x^2z_1^6 & x^2z_2^6 & x^2z_3^6 & x^2z_4^6 & \cellred {x^9w_1^{11}} & \cellred {x^9w_2^{11}} & \cellred {x^9w_3^{11}} & \cellred {x^9w_4^{11}} & \cellred {x^9w_5^{11}} & \cellred {x^9w_6^{11}}\\  \hhline{|--|~~~~|------|}
\cellyellow {y_1} & \cellyellow {y_2} & xz_1^5  & xz_2^5 & xz_3^5 & \multicolumn{1}{c}{xz_4^5} & x^7w_1^{10} & x^7w_2^{10} & x^7w_3^{10} & x^7w_4^{10} & x^7w_5^{10} & \multicolumn{1}{c}{x^7w_6^{10}}\\  
\cellyellow {1}   & \cellyellow {1} & {z_1^4} & {z_2^4} & {z_3^4} & \multicolumn{1}{c}{z_4^4} & x^5w_1^{9} & x^5w_2^{9} & x^5w_3^{9} & x^5w_4^{9} & x^5w_5^{9} & \multicolumn{1}{c}{x^5w_6^{9}} \\ \hhline{|--|----|}
 \multicolumn{1}{c}{}   &      & \cellblue {z_1^3} & \cellblue {z_2^3} & \cellblue {z_3^3} & \cellblue {z_4^3} & x^3w_1^{8} & x^3w_2^{8} & x^3w_3^{8} & x^3w_4^{8} & x^3w_5^{8} & \multicolumn{1}{c}{x^3w_6^{8}}\\
  \multicolumn{1}{c}{}      &      & \cellblue {z_1^2} & \cellblue {z_2^2} & \cellblue {z_3^2} & \cellblue {z_4^2} & x^2w_1^{7} & x^2w_2^{7} & x^2w_3^{7} & x^2w_4^{7} & x^2w_5^{7} & \multicolumn{1}{c}{x^2w_6^{7}}\\
   \multicolumn{1}{c}{}        &      & \cellblue {z_1} & \cellblue {z_2} & \cellblue {z_3} & \cellblue {z_4} & xw_1^{6} & xw_2^{6} & xw_3^{6} & xw_4^{6} & xw_5^{6} & \multicolumn{1}{c}{xw_6^{6}}\\ 
\multicolumn{1}{c}{} & & \cellblue {1}   & \cellblue {1}   & \cellblue {1}   & \cellblue {1}  & {w_1^{5}} & {w_2^{5}} & {w_3^{5}} & {w_4^{5}} & {w_5^{5}} & \multicolumn{1}{c}{w_6^{5}}\\ \hhline{|~~|----|------|}
\multicolumn{6}{c|}{}                     &  \cellred {w_1^{4}} & \cellred {w_2^{4}} & \cellred {w_3^{4}} & \cellred {w_4^{4}} & \cellred {w_5^{4}} & \cellred {w_6^{4}}\\
\multicolumn{6}{c|}{}                     & \cellred { w_1^{3}} & \cellred {w_2^{3}} & \cellred {w_3^{3}} & \cellred {w_4^{3}} & \cellred {w_5^{3}} & \cellred {w_6^{3}}\\
\multicolumn{6}{c|}{}                     &  \cellred {w_1^{2}} & \cellred {w_2^{2}} & \cellred {w_3^{2}} & \cellred {w_4^{2}} & \cellred {w_5^{2}} & \cellred {w_6^{2}}\\
\multicolumn{6}{c|}{}                     &  \cellred {w_1}     & \cellred {w_2}     & \cellred {w_3}     & \cellred {w_4}     & \cellred {w_5}      & \cellred {w_6}\\           
\multicolumn{6}{c|}{}                     &  \cellred {1}       &    \cellred {1}    &   \cellred {1}     &  \cellred {1}      &  \cellred {1}     &  \cellred {1}  \\ \cline{7-12}
\end{array}
\right).
\]
\end{example}

We are now ready to give a proof of Theorem \ref{thm:ConstructionMDP}, which  is based on the result of Theorem \ref{thm:main}.

\begin{proof}[Proof of Theorem \ref{thm:ConstructionMDP}] 
Let $\C$ be the $(n,k,\delta)_{q^s}$ convolutional code $\C_{k,n}^\delta(\gamma, \alpha)$. Set $L:=\floor{\frac{\delta}{k}}+\floor{\frac{\delta}{n-k}}$ and let 
$$G_L^\cc :=\begin{pmatrix}
G_0 & G_1 & \cdots  & G_L \\
 & G_0 & \cdots & G_{L-1} \\
 & & \ddots & \vdots \\
 & &  & G_0\\
\end{pmatrix}$$
be the  $L$-th truncated sliding generator matrix.
Moreover, for $i=0,\ldots, L$, let $G_i(x)$ be the matrix $G_i$ where we have substituted the element $\gamma \in \F_{q^s}$ with an algebraically independent variable $x$,
and $G_L^\cc(x)$ be corresponding $L$-th truncated sliding generator matrix. With this notation, we  have that $G_i(\gamma)=G_i$ for $i=0,\ldots, L$ and hence $G_L^\cc(\gamma)=G_L^\cc$. 
Moreover, by Lemma \ref{lem:MDPpolynomial}, we only need to prove that $p(\gamma) \neq 0$ for every $p(x) \in \mathcal P(k,n,\delta,\alpha)$.

\noindent Now we divide the proof in  two distinct cases.

\underline{\textbf{Case I: $\delta=km$}}. In this case, every matrix $G_i$ appearing as a block of $G_L^\cc$ is either $M_i$ or the zero matrix. With this setting, we analyze two subcases.

\underline{\textbf{Case I-A: $km<n-k$}}. We consider the column blocks of $G_L^\cc(x)$ indexed by $0,1,\ldots, L$. Let $F(x)$ be a square $(L+1)k\times (L+1)k$ submatrix of $G_L^\cc(x)$ obtained by selecting $\bar{\ell}_i$ columns from the $i$-th column block, where the $\bar\ell_i$'s satisfy \eqref{eq:lis}. Let $I_F:=\{t_0,t_1,\ldots,t_\mm\}$ be the set of indices of the column blocks involved in the selection of $F(x)$ (\ie $a\in I_F$ if and only if there exists a column of $F(x)$ which comes from the $a$-th column block), where we have ordered them as $0 \leq t_0<\ldots < t_\mm \leq \mm$. At this point, one can see that $F(x)$ is a block upper triangular matrix of the form
$$ F(x)=\begin{pmatrix}
A_{0,0} & A_{0,1} & \cdots & A_{0,\mm } \\
 & A_{1,1} &\cdots & A_{1,\mm } \\
 & & \ddots & \vdots \\
& & & A_{\mm ,\mm }
\end{pmatrix},$$
where $A_{0,j}\in \Fq[x]^{((t_0+1)k)\times \bar\ell_{t_j}}$, and $A_{i,j}\in \Fq[x]^{((t_i-t_{i-1})k)\times \bar\ell_{t_j}}$ for $1\leq i \leq \mm$ and $0\leq j \leq \mm$. Now, we are going to show that $F(x)$ is obtained from a matrix of the form $\Gxy$, after a suitable specialization of $Y$. Let us define $k_0:=(t_0+1)k$, $k_i:=(t_i-t_{i-1})k$  for $1\leq i \leq \mm$ and $\ell_j:=\bar\ell_{t_j}$ for $0\leq j \leq \mm$. One can see that for each $s$ such that $0\leq s \leq \mm-1$, we have
$$\sum_{j=0}^s\ell_j=\sum_{j=0}^{t_s}\bar\ell_j\leq (t_s+1)k=(t_0+1)k+\sum_{j=1}^s(t_j-t_{j-1})k=\sum_{j=0}^sk_j,$$
and that
$$ \sum_{j=0}^\mm\ell_j=\sum_{j=0}^{t_\mm}\bar\ell_j=(L+1)k=\sum_{j=0}^\mm k_j.$$ 
By definition, in each block $A_{i,j}$, every row is a monomial in $x$ with constant degree, and hence it can be written as $\diag(x^{\beta^{(i,j)}})V(\lambda^{(i,j)},\alpha^{(j)})$, for suitable  $\beta^{(i,j)}, \lambda^{(i,j)}\in\N^{k_i}$, and some $\alpha^{(j)}\in\Fq^{\ell_j}$ obtained   selecting $\ell_j$ entries from $\alpha$. 
Lengthy computations show that the vectors $\beta^{(i,j)}$'s satisfy the conditions \ref{lambdagrows}--\ref{lambda_ii} and that the vectors $\lambda^{(i,j)}$'s satisfy the conditions \ref{part_a}--\ref{part_e}. Define now $\widehat{\mathrm{B}}:=({\beta}^{(i,j)})_{i,j}, \widehat{\Lambda}:=({\lambda}^{(i,j)})_{i,j}$, $\mathrm{A}=(\alpha^{(j)})_j$. Hence, 
our submatrix $F(x)$ is obtained from the matrix $G(x,\Y, \widehat{\mathrm{B}},\widehat{\Lambda})$ by evaluating the $\Y$ in a suitable vector $\mathrm{A}$ of $\alpha_i$'s, \ie $F(x)=G(x,\mathrm{A}, \widehat{\mathrm{B}},\widehat{\Lambda})$.

By Theorem \ref{thm:main}, the determinant of $G(x,\Y, \widehat{\mathrm{B}},\widehat{\Lambda})$ is a nonzero polynomial in $x$ and $\Y$, which we denote by $f(x,\Y)=\sum_{i}f_i(\Y)x^i$.  Therefore, $\det(F(x))=f(x,\mathrm{A})$. It remains to prove that $f(x,\mathrm{A})$ is still a nonzero polynomial, \ie that there exists at least one index $i$ such that $f_i(\mathrm{A})\neq0$.
By Theorem \ref{thm:main} we know that the monomial of minimum degree in $x$ is $f_b(\Y)x^b$, for some $b \in \mathbb N$, and it is obtained by multiplying the $\ell_i \times \ell_i$ minors along the main diagonal.   By the structure of these minors,
we have that for every $i\in \{0,\ldots, \mm\}$, there exist  integers $b_i,t_i\in \mathbb N$ and a set $J_i \subseteq [n]$ with $|J_i|=\ell_i$, such that the corresponding $\ell_i \times \ell_i$ minor is given by 
$$
\left(\prod_{j \in J_i} \alpha_j\right)^{t_i}x^{b_i}\det(V(\alpha^{(i)}, (0,1,\ldots, k_i-1))),
$$
where $J_i$ is  selected from $\alpha$ in order to get $\alpha^{(i)}$, and where $b=b_0 + \cdots + b_\mm$.
Since the matrices $V(\alpha^{(i)}, (0,1,\ldots, k_i-1))$ are classical Vandermonde matrices with pairwise distinct and nonzero defining entries, we get $f_b(\mathrm{A})\neq 0$ and so $f(x,\mathrm{A}) \neq 0$.

Notice that the product of the $\ell_i\times \ell_i$ minors along the main diagonal for $i=0,\ldots, \mm$ corresponds to the product of the $\bar\ell_i\times \bar\ell_i$ minors along the main diagonal for $i=0,\ldots, L$, if we use the convention that the determinant of a $0\times 0$  matrix is $1$.

This shows that the set $\mathcal P(k,n,km,\alpha)$ does not contain the zero-polynomial. Let $p(x)\in \mathcal P(k,n,km,\alpha)$. Since $p(x) \neq 0$, we can write $p(x)=x^{\nu (p(x))}p_1(x)$, where $p_1(x) \in \Fq[x]$ and $\deg (p_1(x))=\deg (p(x))-\nu (p(x))$. Since $s>\deg (p_1(x))$, we get that $p_1(\gamma) \neq 0$ and therefore $p(\gamma)\neq 0$.

\underline{\textbf{Case I-B: $km\geq n-k$}}.
Let $r:=\left\lfloor\frac{\delta}{n-k} \right\rfloor=\left\lfloor\frac{km}{n-k} \right\rfloor$. Then we have  $L=m+r$ and $G_{m+i}(x)=0$ for $i=1,\ldots,r$.
Therefore, the polynomial version $G_L^\cc(x)=G_{m+r}^\cc(x)$  of the $L$-th truncated sliding generator matrix is given by 

$$G_L^\cc(x):=\begin{pmatrix}
M_0(x) & M_1(x) & \cdots  & M_m(x) & 0 &  \cdots & 0\\
 & M_0(x) & \cdots & M_{m-1}(x) & M_m(x) &     &   \\
 & & \ddots &  & \ddots & \ddots & 0 \\
& & & M_0(x) & & \ddots & M_m(x) \\
& & & & M_0(x) & & M_{m-1}(x) \\
& & & &  & \ddots & \vdots \\
 & & & & & &  M_0(x) \\
\end{pmatrix},$$
where $M_i(x)$'s are the matrices defined as in \eqref{GrM} in which $\gamma$ has been replaced by the variable $x$. 
We extend this matrix to the matrix
$$\tilde{G}_L^\cc(x):=\begin{pmatrix}
M_0(x) & M_1(x) & \cdots  & M_m(x) & M_{m+1}(x) &  \cdots & M_{m+r}(x)\\
 & M_0(x) & \cdots & M_{m-1}(x) & M_m(x) &     &  \vdots \\
 & & \ddots &  & \ddots & \ddots & M_{m+1}(x) \\
& & & M_0(x) & & \ddots & M_m(x) \\
& & & & M_0(x) & & M_{m-1}(x) \\
& & & &  & \ddots & \vdots \\
 & & & & & &  M_0(x) \\
\end{pmatrix}.$$
where we have replaced the $0$ blocks in the topright part of $G_L^\cc(x)$ with the matrices $M_{m+i}(x)$.

Also in this case we choose $L+1$ integers $\bar\ell_0,\ldots, \bar\ell_{L}$ with the constraint that $\sum_{j=0}^s\bar\ell_j \leq (s+1)k$, for $s=0,\ldots, L-1$ and $\sum_{j=0}^L\bar\ell_j = (L+1)k$, and consider
a maximal submatrix of $G_L^\cc(x)$ obtained by selecting $\bar\ell_i$ columns from the $i$-th columns block, for $i=0,\ldots, L$, which we call $F(x)$. Moreover, let $\tilde{F}(x)$ be the corresponding $(L+1)k \times (L+1)k$ submatrix of $\tilde{G}_L^\cc(x)$. According to Lemma \ref{lem:MDPpolynomial}, we only need to prove that $\det(F(x)) \neq 0$, and then we can conclude as we did for Case \textbf{I-A}. 

First we observe the following relations between $f(x):=\det(F(x))$ and $\tilde{f}(x):=\det(\tilde{F}(x))$. We can write  $f(x)=\sum_{\sigma \in \mathcal S_{(L+1)k}}F^\sigma$ and  $\tilde{f}(x)=\sum_{\sigma \in \mathcal S_{(L+1)k}}\tilde{F}^\sigma$, where $F^\sigma=\prod_{i}F_{i,\sigma(i)}$ and $\tilde{F}^\sigma=\prod_i\tilde{F}_{i,\sigma(i)}$. We define
$\Theta_j:=\{\sigma \in \mathcal S_{(L+1)k}\mid \deg_x(F^\sigma)=j\}$ and $\tilde{\Theta}_j:=\{\sigma \in \mathcal S_{(L+1)k}\mid \deg_x(\tilde{F}^\sigma)=j\}$. We have
\begin{align*}f(x)&=\sum_{j} f_j x^j=\sum_{j}\Bigg(\sum_{\sigma \in \Theta_j} F^\sigma \Bigg) x^j,\\
\tilde{f}(x)&=\sum_{j} \tilde{f}_j x^j=\sum_{j}\Bigg(\sum_{\sigma \in \tilde{\Theta}_j} \tilde{F}^\sigma \Bigg) x^j.\end{align*}

By definition of ${G}_L^\cc(x)$ and $\tilde{G}_L^\cc(x)$, every entry of $G_L^\cc(x)$ is equal to the corresponding entry of  $\tilde{G}_L^\cc(x)$ or it is equal to $0$. Therefore, we get that $F^\sigma \in \{0, \tilde{F}^\sigma \}$, for every $\sigma \in \mathcal S_{(L+1)k}$. Hence we can also write
$$\tilde{f}(x)=\sum_{j} \tilde{f}_j x^j=\sum_{j}\left(\sum_{\sigma \in {\Theta}_j} \tilde{F}^\sigma \right) x^j.$$ 
In particular, if a monomial $\tilde{f}_tx^t$ of a certain degree $t$ in $\tilde{f}(x)$ is obtained from the matrix $\tilde{G}_L^\cc(x)$ without involving the blocks $G_{m+i}(x)$ for $i=1,\ldots, r$, then the monomial $f_tx^t$ of the same degree $t$  in $f(x)$ is the same, \ie $f_t=\tilde{f}_t$.

By using the same proof as in Case I-A, we can see that we can use Theorem \ref{thm:main} to show that $\tilde{f}(x)=\det(\tilde{F}(x))\neq 0$ and the monomial $\tilde{f}_Mx^M$ of minimum degree $M$ corresponds to the product of the $\bar\ell_i \times \bar\ell_i$ minors along the main diagonal (observe that here we allow $\bar\ell_i$ to be zero, by using the convention that the determinant of a $0\times 0$ matrix to be $1$).  If we  show that these minors do not involve any of the blocks $G_{m+j}(x)$ for $j=1,\ldots, r$, then we can deduce that $f_M=\tilde{f}_M \neq 0$, and this concludes the proof. Suppose by contradiction that one of the $\bar\ell_i \times \bar\ell_i$ minors involves one of the blocks $G_{m+j}(x)$ for $j=1,\ldots, r$. Then it must be $i=m+a$ for some $a=j,\ldots, r$. Moreover, this happens  if and only if 
$(m+r-a+j+1)k<\sum_{t=a}^r\bar\ell_{m+t}$. However, we have 
$\sum_{t=a}^r\bar\ell_{m+t} \leq n(r-a+1)$ and since $r=\lfloor\frac{km}{n-k} \rfloor$, also $ (n-k)r \leq  km$. Therefore, we get the following chain of inequalities 
\begin{align*}
n(r-a+j+1) &= nr+n(1-a+j)\leq km+kr+n(1-a+j) \\ 
&\leq km+kr+ k(1-a+j)
=k(m+r-a+j+1) \\
&<\sum_{t=a}^r\bar\ell_{m+t} \leq n(r-a+1),
\end{align*}
which yields  a contradiction. 

\underline{\textbf{Case II: $k \nmid \delta$}}.
In this case, $G_m=N_{m,t}$, where $1\leq t=\delta-(m-1)k\leq k-1$. At this point, we use a similar argument as done in Case \textbf{I-B}. Let $L=m-1+\lfloor\frac{k(m-1)+t}{n-k} \rfloor=m-1+r$, for some $r\geq 0$. Consider the polynomial matrix $G_L^\cc(x)$.
If $r=0$, then the matrix $N_{m,t}(x)$ does not appear in $G_L^\cc(x)$ and we conclude as in Case \textbf{I-A}. 
Therefore, assume $r\geq 1$. Observe that the matrix $N_{m,t}(x)$ is $0$ in the first $k-t$ rows, and it coincides with the matrix $M_m(x)$ in the last $t$ rows, We construct the matrix $\tilde{G}_L^\cc(x)$ from $G_L^\cc(x)$ by replacing the blocks $N_{m,t}(x)$ by $M_m(x)$ and the topright $0$ blocks by matrices $M_{m+i}(x)$, for $i=1,\ldots, r$, obtaining 
$$\tilde{G}_L^\cc(x):=\begin{pmatrix}
M_0(x) & M_1(x) & \cdots  & M_m(x) & M_{m+1}(x) &  \cdots & M_{m+r-1}(x)\\
 & M_0(x) & \cdots & M_{m-1}(x) & M_m(x) &     &  \vdots \\
 & & \ddots &  & \ddots & \ddots & M_{m+1}(x) \\
& & & M_0(x) & & \ddots & M_m(x) \\
& & & & M_0(x) & & M_{m-1}(x) \\
& & & &  & \ddots & \vdots \\
 & & & & & &  M_0(x) \\
\end{pmatrix}.$$
Now, for the same argument used in Case \textbf{I-B}, we just need to prove that none of $\bar\ell_i \times \bar\ell_i$ minors across the main diagonal involves any of the blocks $M_{m+j}(x)$, for $j \geq 1$, and neither the first $k-t$ rows of $M_m(x)$. By contradiction, assume that this happens in the block $m+a$, for some $a=0,\ldots, r-1$. This is true if and only if $\bar\ell_{m+a}+\ldots+ \bar\ell_{m+r-1}>\delta+k(r-a)$. However, we have 
$\sum_{t=a}^{r-1}\bar\ell_{m+t} \leq n(r-a)$ and since $r=\lfloor\frac{\delta}{n-k} \rfloor$, also $ (n-k)r \leq  \delta$. Therefore, we get the following chain of inequalities 
\begin{align*}
n(r-a) &\leq \delta+kr-na  \leq \delta+k(r-a) \\
&<\sum_{t=a}^{r-1}\bar\ell_{m+t} \leq n(r-a),
\end{align*}
 which yields  a contradiction. 
\end{proof}

\begin{remark}
It is important to point out that the same proof does not work if we  try to use Theorem \ref{thm:maxdegree} instead of Theorem \ref{thm:main}. Indeed, if $k$ divides $\delta$, whenever we select $\ell_0,\ldots, \ell_\mm$ columns from the $\mm+1$ column blocks and we have that one of the $\ell_i$'s for $i=1,\ldots, \mm$ is strictly greater than $k$, then the $\ell_i \times \ell_i$ submatrix that we select is not a classical Vandermonde, hence it is not guaranteed that when we evaluate the variables $\Y$ in $\mathrm{A}$ we get a nonzero determinant. This happens essentially for every choice of $\ell_i$'s, except when we take $\ell_i=k$ for every $i=0,\ldots,\mm$, in which case the minor is clearly  nonzero, since it is the  product of determinants of classical Vandermonde matrices. In the case that  $k$ does not divide $\delta$ it is even more clear that we cannot use Theorem \ref{thm:maxdegree}, since the $\ell_\mm\times \ell_\mm$  submatrix  that we should select will we have some zero rows.
\end{remark}

\section{Field size for MDP WRS convolutional codes}\label{sec:fieldsize}

In Section \ref{Sec:Construction} we gave a construction of WRS convolutional codes, which we proved in Theorem \ref{thm:ConstructionMDP} to be MDP, under the assumption that the extension degree $s$ is larger than the value $D(k,n,\delta,\alpha)$. Our goal in this section is to give an estimate on the required  field size. In particular, we will prove that for constructing WRS convolutional codes that are MDP we need a field of size $q^s$, where $q$ is any prime power greater than $n$ and $s=\mathcal O(\delta^3)$. It is straightforward to observe that for our base field $\Fq$ from which we take the vector $\alpha$ we need that $q>n$. This is because we require that $\alpha$ has pairwise distinct nonzero elements. The only thing that we need to estimate is the magnitude of the degree extension $s$, or equivalently, of the integer $D(k,n,\delta,\alpha)$.

\begin{proposition}\label{prop:Dknma}
  Let $k,n,\delta$ be integers such that $0<k<n$, let $m:=\lceil \frac{\delta}{k}\rceil$ and let $\alpha \in (\Fq^*)^n$ be a vector of nonzero pairwise distinct elements. Then,
     $$D(k,n,\delta,\alpha) \leq (L-m+1)\binom{\delta}{2}+k^2\binom{m}{3}+\binom{k}{2}\binom{m}{2}. $$
\end{proposition}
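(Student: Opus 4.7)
The plan is to bound $\deg p(x) - \nu(p(x))$ uniformly for every nonzero $p(x) \in \mathcal{P}(k,n,\delta,\alpha)$. Given such a $p(x)=\det F(x)$, where $F(x)$ is the submatrix of $G_L^\cc(x)$ cut out by a column selection $(\bar\ell_0,\dots,\bar\ell_L)$ satisfying \eqref{eq:lis}, I would first identify $\nu(p(x))$ exactly. By Theorem \ref{thm:main} applied to the $Y$-generalisation $G(x,Y,\mathrm{B},\Lambda)$ of $F(x)$, the minimum-$x$-degree monomial of $\det G(x,Y,\mathrm{B},\Lambda)$ factors as a product of classical Vandermonde determinants in $Y$. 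As already used in the proof of Theorem \ref{thm:ConstructionMDP}, this coefficient is nonzero after substituting $Y=\alpha$ (the nodes being pairwise distinct and nonzero), so $\nu(p(x))=d_{\min}$, the integer prescribed by Theorem \ref{thm:main}.

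Next, I would upper-bound $\deg p(x)$ by the row-degree sum of $F(x)$, that is
\[
\deg p(x) \leq \sum_{i=1}^{(L+1)k} r_i,
\]
where $r_i$ denotes the maximum $x$-exponent appearing in row $i$ of $F(x)$. Hence
\[
\deg p(x)-\nu(p(x)) \leq \sum_{i=1}^{(L+1)k} r_i - d_{\min}.
\]
Both the row-degree sum and $d_{\min}$ can be evaluated blockwise by using the explicit exponent pattern of $G_i(x)$: the $s$-th row of $G_i(x)$ carries $x$-exponent $\binom{i}{2}k+(k-s)i$ for $s=1,\dots,k$ (with the top $k-t$ rows of $G_m=N_{m,t}$ understood to be zero, which does not affect either sum).

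The key computation is to split the difference $\sum r_i-d_{\min}$ across the $L+1$ column blocks and bound each contribution. For column blocks $c\in\{m,\dots,L\}$ there are $L-m+1$ of them, and the largest row exponent anywhere in $G_L^\cc(x)$ is $\binom{m}{2}k+(k-1)m$, which is bounded above by $\binom{\delta}{2}$; handling these column blocks by the crude per-block estimate contributes $(L-m+1)\binom{\delta}{2}$. For the column blocks $c\in\{0,\dots,m-1\}$, only blocks $G_0(x),\dots,G_{m-1}(x)$ appear. Using $\sum_{i=0}^{m-1}\binom{i}{2}=\binom{m}{3}$ together with the fact that the spread of $x$-exponents within $G_i(x)$ is $(k-1)i$, I would separate the remaining contribution into a cross-block piece bounded by $k^2\binom{m}{3}$ (the ``base'' exponents $\binom{i}{2}k$ summed at most $k$ times each, over all row-block positions within the first $m$ column blocks) and an intra-block piece bounded by $\binom{k}{2}\binom{m}{2}$ (from $\sum_{s=1}^{k-1}s\cdot i = \binom{k}{2}i$ summed for $i=1,\dots,m-1$, giving $\binom{k}{2}\binom{m}{2}$).

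The main obstacle, and the part that requires the most care, is the bookkeeping of which $\bar\ell_c$ rows are selected in the top and bottom minors of each column block so that the per-column-block contributions assemble correctly into the three summands $(L-m+1)\binom{\delta}{2}$, $k^2\binom{m}{3}$, and $\binom{k}{2}\binom{m}{2}$. A mild additional nuisance is the truncation of $G_m$ to $N_{m,t}$ when $k\nmid \delta$, but since the zero rows of $N_{m,t}$ contribute $0$ to both the row-degree sum and to $d_{\min}$ (they are simply never chosen), this case is absorbed by the same estimates without any extra term.
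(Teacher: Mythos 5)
Your overall strategy (bound $\deg p$ by a row-degree sum, control $\nu(p)$ from below, and assemble the difference into the three summands) is workable in principle, but the step that actually produces the stated bound is missing, and the partial justifications you give for it do not hold up. Concretely, you propose to charge each column block $c\in\{m,\dots,L\}$ at most $\binom{\delta}{2}$ on the grounds that the largest single $x$-exponent appearing in $G_L^\cc(x)$ is at most $\binom{\delta}{2}$. This is a non sequitur: in any Leibniz term the columns selected from block $c$ are matched to $\bar\ell_c$ distinct rows, so block $c$ contributes a \emph{sum} of up to $\bar\ell_c$ exponents (and $\bar\ell_c$ can be as large as $\min\{n,(L+1)k\}$), not a single one. (Moreover the inequality $\binom{m}{2}k+(k-1)m\le\binom{\delta}{2}$ you invoke is itself false when $t=\delta-(m-1)k<k$; e.g. $k=3$, $m=2$, $\delta=4$ gives $7\not\le 6$ --- only the smaller exponent $\binom{m}{2}k+(t-1)m$ actually occurs in $N_{m,t}$.) The same issue affects the blocks $c\le m-1$: summing the ``base'' exponents ``at most $k$ times each'' over all row-block positions inside the first $m$ column blocks gives $\sum_{c=0}^{m-1}\sum_{i=0}^{c}w_i$ rather than $\sum_{i=0}^{m-1}w_i$, unless the matching part of $\nu(p)$ cancels it --- and that cancellation is exactly the ``bookkeeping'' you defer. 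As written, the central inequality of the proof is asserted rather than proved.

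The bound follows much more cheaply, and this is what the paper does: decompose by \emph{row} blocks of the full sliding matrix rather than by column blocks, and simply use $\nu(p)\ge 0$ (Theorem \ref{thm:main} is not needed here). For any full-size minor, $\deg p\le\sum_{\rho}(\text{maximal }x\text{-degree in row }\rho)$, summed over all $(L+1)k$ rows of $G_L^\cc(x)$. In row block $L-i$ with $0\le i\le m-1$ only $G_0,\dots,G_i$ appear and the maximal row degrees are those of $M_i(x)$, summing to $w_i=\binom{i}{2}k^2+i\binom{k}{2}$; in each of the remaining $L-m+1$ row blocks all of $G_0,\dots,G_m$ appear, so the maximal row degrees come from $N_{m,t}$ on the last $t$ rows and from $M_{m-1}$ on the first $k-t$ rows, summing to $\binom{\delta}{2}$. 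Adding these contributions gives exactly $(L-m+1)\binom{\delta}{2}+k^2\binom{m}{3}+\binom{k}{2}\binom{m}{2}$.
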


\begin{proof}
We provide an upper bound on the value $D(k,n,\delta,\alpha)$ by estimating the maximum degree of the polynomials in $\mathcal{P}(k,n,\delta,\alpha)$. To this end, we take the obvious upper bound in which we consider in each row of $G_L^\cc(x)$ the maximum degree, and then we sum up all these values. We divide the matrix in row blocks, indexed by $0,1,\ldots, L$.  Observe that in the row block $L$ only the matrix $G_0$ appears, in the row block $(L-1)$ the highest degrees are given by the row degrees in $x$ of $G_1(x)$, and so on. This means that for each $i\in\{0,\ldots,m-1\}$, the row block  $(L-i)$ has highest row degrees in $x$ given by those of $G_i(x)$. These matrices correspond to the matrices $M_i(x)$ and it is straightforward to see that the sum of the row degrees in $x$ of the matrix $M_i(x)$ is
$$ w_i:=\sum_{j=0}^{k-1}\binom{i}{2}k+ij=\binom{i}{2}k^2+i\binom{k}{2}. $$
For the remaining blocks, the degrees are the same due to the structure of $G_L^\cc(x)$ in which all the matrices $G_i$'s appear. Hence, we only can consider $(L-m+1)$ times the sum of row degrees of the first block. Observe that $G_m(x)=N_{m,t}(x)$, where $t:=\delta-k(m-1)$. Hence, the row degrees of this first block will consist of the row degrees of the last $t$ rows of $N_{m,t}(x)$, and the first $k-t$ row degrees of $M_{m-1}(x)$. Summing this quantities we get that the sum of the row degrees in $x$ of the first block is 
\begin{align*}w_{m,t}&:=\sum_{j=0}^{t-1}\Big(\binom{m}{2}k+mi\Big) + \sum_{j=t}^{k-1}\Big(\binom{m-1}{2}+(m-1)j\Big) \\
& = \frac{1}{2}t\left(km(m-1)+(t-1)m \right) + \frac{1}{2}(k-t)\left( k(m-1)(m-2)+(k+t-1)(m-1) \right) \\ 
&=  \frac{1}{2}\left(t(\delta-1)m+(k-t)(\delta-1)(m-1)\right) \\
& = \frac{1}{2}(\delta-1)\left(k(m-1)+t\right) \\
&= \binom{\delta}{2}. \end{align*}
Putting together all these quantities, we get
\begin{align*}
    D(k,n,\delta,\alpha) &\leq (L-m+1)w_{m,t}+\sum_{i=0}^{m-1}w_i \\
    & =(L-m+1)\binom{\delta}{2}+k^2 \sum_{i=0}^{m-1}\binom{i}{2}+\binom{k}{2}\sum_{i=0}^{m-1}i \\
    &=(L-m+1)\binom{\delta}{2}+k^2\binom{m}{3}+\binom{k}{2}\binom{m}{2}.
\end{align*}

\end{proof}

\bigskip

Next, we show an example considering our WRS convolutional codes where we compute the value of $D(k,n,\delta,\alpha)$ and compare it with the upper bound of Proposition \ref{prop:Dknma}. 
\begin{example}
Suppose that we want to construct a WRS convolutional code with parameters $k=2$, $n=7$ and $\delta=4$. We have that its memory $m$ has to be $2$ and also $L$ is equal to $2$. Moreover, we need to start with a base field $\Fq$ whose cardinality is $q>7$. In this example we consider the case $q=11$, that is $\F_{11}=\{0,1,\ldots, 10\}$.
We now fix the vector $\alpha\in \F_{11}^7$ to have nonzero pairwise distinct entries as $\alpha:=(1,2,3,4,5,6,7)$. In order to determine a suitable $\gamma$ for constructing the code $\C_{2,7}^4(\gamma, \alpha)$, we first want to compute the value $D(2,7,4,\alpha)$. For this purpose, we are now going to construct the matrix $G_2^\cc(x)$.

\begin{align*}
G_0(x) & =
\begin{pmatrix}
1 & 2 & 3 & 4 & 5 & 6 & 7 \\
1 & 1 & 1 & 1 & 1 & 1 & 1 \\
\end{pmatrix},\\
G_1(x) & =
\begin{pmatrix}
x  &  8 x &  5x &  9x &  4x &  7x &  2x \\
1 & 4 & 9 & 5 & 3 & 3 & 5
\end{pmatrix},\\
G_2(x) & =
\begin{pmatrix}
x^4 &  10x^4 & x^4 & x^4  & x^4  &   10 x^4 &   10 x^4 \\
x^2  &  5 x^2 &  4 x^2  & 3x^2 & 9x^2 & 9x^2 &  3x^2
\end{pmatrix}.
\end{align*}

The  $L$-th truncated sliding matrix $G_L^\cc(x)$ is
\begin{align*}
G_2^\cc(x) & =
\begin{pmatrix}
G_0(x) & G_1(x) & G_2(x) \\
    & G_0(x) & G_1(x)  \\
    &     & G_0(x)  \\
    \end{pmatrix}\in \F_{11}[x]^{6\times 21}.\\
\end{align*}
By Theorem \ref{lem:MDPpolynomial}, the matrix $G(z)$ generates a $\C_{2,7}^4(\gamma, \alpha)$ MDP convolutional code if it has the MDP property, \ie, the nontrivial full size minors of $G^\cc_2$ are all nonzero. A necessary field size for this to hold is $\F_{11^s}$ with $s > D(2,7,4,\alpha)$, according to Theorem \ref{thm:ConstructionMDP}. It can be checked that $D(2,7,4,\alpha)=4$, that comes from the full size minor $p(x)=7x^4+2x^2+7x+4 \in \mathcal P(2,7,4,\alpha)$ of $G_{2}^\cc(x)$ obtained selecting the columns with indices $\{1,8,9,15,16,17\}$. 
From Proposition~\ref{prop:Dknma}, we have that
$D(2,7,4,\alpha) \leq 7$, and  we can observe that the bound is bigger than the real value.
\end{example}

\bigskip

\begin{theorem}\label{thm:fieldsizeMDP}\; 
The WRS convolutional codes $\C_{k,n}^\delta(\gamma, \alpha)$ provide a family of MDP convolutional code over a field of size $q^s$, where $q>n$ and 
$$s> \frac{\delta^3}{2}\left(\frac{1}{(n-k)}+\frac{1}{3k}\right)+\frac{\delta^2}{2}\left(\frac{3}{2}-\frac{1}{(n-k)}-\frac{1}{2k}\right)+\delta\left(\frac{k}{12}-\frac{3}{4}\right).$$
\end{theorem}

\begin{proof} 
By Theorem \ref{thm:ConstructionMDP}, we get that the WRS convolutional code $\C_{k,n}^\delta( \gamma, \alpha)$ is MDP whenever $\alpha$ is a vector of pairwise distinct elements of $\Fq^*$, and $\gamma$ is a root of an irreducible polynomial in $\Fq[x]$ of degree $s>D(k,n,\delta,\alpha)$. Then,  we need $q>n$ and the result follows from an involved estimates of the bound on $D(k,n,\delta,\alpha)$ given in Proposition \ref{prop:Dknma}, which produce
$$ D(k,n,\delta,\alpha)\leq \frac{\delta^3}{2}\left(\frac{1}{(n-k)}+\frac{1}{3k}\right)+\frac{\delta^2}{2}\left(\frac{3}{2}-\frac{1}{(n-k)}-\frac{1}{2k}\right)+\delta\left(\frac{k}{12}-\frac{3}{4}\right). $$
\end{proof}

\begin{corollary}\label{cor:FinalFieldSize}
For every $k,n,\delta$ positive integers with $0<k<n-k$, there exists an $(n,k,\delta)_{q^s}$ MDP convolutional code where $q$ is any prime power greater than $n$ and 
$$s>\frac{\delta^3}{3t}+\frac{3\delta^2}{4}+\frac{\delta k}{12}.$$
\end{corollary}
\begin{proof}
  By Theorem  \ref{thm:fieldsizeMDP} we can construct an $(n,k,\delta)_{q^s}$ MDP WRS convolutional code $\C_{k,n}^\delta(\gamma,\alpha)$, where $q>n$ and $s>\frac{\delta^3}{2}\left(\frac{1}{(n-k)}+\frac{1}{3k}\right)+\frac{3\delta^2}{4}+\frac{\delta k}{4}$. Since $1/(n-k)\leq 1/k$, we obtain that for $s>\frac{\delta^3}{3k}+\frac{3\delta^2}{4}+\frac{\delta k}{4}$ we can construct an MDP WRS convolutional code with the desired parameters.
\end{proof}

The above results only provide upper bounds on the values $D(k,n,\delta,\alpha)$. In Table~\ref{BoundsD} we depict the actual values for some small parameters of $k,n$ and $\delta$ (found by exhaustive computer search) and compare them with the bound given in Proposition \ref{prop:Dknma}.

\begin{table}[!ht]
\begin{center}
\begin{tabular}{|c|c|c|} \hline
$[n,k,\delta]$ & $D(k,n,\delta,\alpha)$ bound  &   $D(k,n,\delta,\alpha)$  \\ \hline \hline
[2,1,1] & 0 &   0  \\ \hline
[2,1,2] & 3 &  2   \\ \hline
[3,2,2] & 3 & 3 \\ \hline
[3,1,2] & 2 &  1   \\ \hline
[3,2,1] & 0 &  0 \\ \hline
[4,2,2] & 2 &  2\\ \hline
[4,1,3] & 7 &  4 \\ \hline
[5,2,2] & 1 &  1 \\ \hline
[6,2,2] & 1 &  1\\ \hline
[6,2,3] & 1 &  1\\ \hline
[7,2,2] & 1 &  1 \\ \hline
[7,3,3] & 3 & 2  \\ \hline
\end{tabular}
\vspace{.5cm}
\caption{Comparison between the values $D(k,n,\delta,\alpha)$ obtained by computer search and the corresponding bounds given in Proposition~\ref{prop:Dknma}, for small parameters $n,k,\delta$.}
\label{BoundsD}
\end{center}
\end{table}

\subsection{Field size for memory 1}

In Proposition \ref{prop:Dknma} we gave a general upper bound on the value $D(k,n,\delta,\alpha)$, which then produces a sufficient field size for constructing WRS convolutional codes. 
 Here we improve this  upper bound  when the resulting code has memory $m=1$, that is when $\delta \leq k$. In order to do so, we will distinguish two cases: when $k<n-k$ and when $k=n-k$. 
 
For the first case, we will assume that $k<n-k$, that is when the WRS convolutional code has memory $1$. We first notice that  when $1\leq\delta <k <n-k$, we have $L=0$, and hence $D(k,n,\delta,\alpha)=0$. Therefore, the only interesting case is when $\delta=k$. 

\begin{proposition}\label{prop:case_memory1}
Suppose that $k<n-k$, Then 
$$D(k,n,k,\alpha)\leq \frac{k^2}{4}.$$
\end{proposition}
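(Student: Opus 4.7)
The plan is to leverage the very simple structure of $G_1^\cc(x)$ in the memory-$1$ case and apply the Leibniz formula directly, bounding $\deg p$ and $\nu(p)$ separately via the ranges of $x$-degrees appearing in the terms of the expansion. Since $\delta = k < n-k$, we have $m = \lceil \delta/k \rceil = 1$ and $L = \lfloor \delta/k\rfloor + \lfloor \delta/(n-k) \rfloor = 1$, so $G_1^\cc(x)$ takes the block form
$$G_1^\cc(x) = \begin{pmatrix} M_0 & M_1(x) \\ 0 & M_0 \end{pmatrix},$$
and the first key observation is that the $j$-th row from the top of $M_1(x)$ has $x$-degree exactly $k - j$, while $M_0$ does not involve $x$ at all.

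Given a candidate minor $p(x) \in \mathcal P(k,n,k,\alpha)$ selecting $\ell_0$ columns from the first column block and $\ell_1 = 2k - \ell_0$ from the second (with $\ell_0 \leq k$ by the MDP column-selection rule \eqref{eq:lis}), I would expand the resulting $2k \times 2k$ determinant by the Leibniz formula. The zero block in the lower-left forces every contributing permutation $\sigma$ to send the bottom $k$ rows into the rightmost $\ell_1$ columns, and a counting argument shows that exactly $k - \ell_0$ of the top rows must then be sent to the right block as well. Consequently, the $x$-degree of the $\sigma$-term equals $\sum_{j \in T}(k-j)$, where $T \subseteq [k]$ is the set of top rows mapped to the right block and satisfies $|T| = k - \ell_0$.

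Maximizing and minimizing $\sum_{j \in T}(k-j)$ over subsets $T \subseteq [k]$ of size $k-\ell_0$ produces respectively $\binom{k}{2} - \binom{\ell_0}{2}$ (attained at $T = \{1, \ldots, k - \ell_0\}$) and $\binom{k-\ell_0}{2}$ (attained at $T = \{\ell_0+1, \ldots, k\}$). These are bounds on the $x$-degree of every term in the Leibniz sum, hence also on $\deg p(x)$ and $\nu(p(x))$. Subtracting and simplifying gives
$$\deg p(x) - \nu(p(x)) \leq \binom{k}{2} - \binom{\ell_0}{2} - \binom{k-\ell_0}{2} = \ell_0(k-\ell_0) \leq \frac{k^2}{4}$$
by AM-GM, and taking the maximum over $\ell_0 \in \{0,\ldots,k\}$ and over all valid column selections concludes the argument.

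I do not expect substantive obstacles here; the only subtlety worth flagging is the standard observation that an upper bound on the $x$-degrees of the individual summands is an upper bound on $\deg p$, and a lower bound on them is a lower bound on $\nu(p)$, so the argument is immune to any cancellation at the extremal degrees. The reason the Leibniz analysis is so clean is the presence of the zero block in the lower-left of $G_1^\cc(x)$, a feature that vanishes for $m \geq 2$ and explains why the general bound of Proposition \ref{prop:Dknma} is weaker than $k^2/4$ in this regime.
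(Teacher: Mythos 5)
Your proof is correct and follows essentially the same route as the paper's: both exploit the block upper-triangular structure of $G_1^\cc(x)$ to see that each term's $x$-degree is $\sum_{j\in T}(k-j)$ for the set $T$ of top rows sent to the right block, and both obtain $\deg p - \nu(p) \leq \ell_0(k-\ell_0) \leq k^2/4$. The only (cosmetic) difference is that you bound $\nu(p)$ from below directly via the minimum summand degree in the Leibniz expansion, whereas the paper uses a Laplace expansion and invokes Theorem \ref{thm:main} to identify the minimum-degree monomial exactly.
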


\begin{proof}
Since we are in the case $k<n-k$,  we have $L=m=1$, and hence the matrix $G_L^\cc(x)$ is of the form
$$G_1^\cc(x)= \begin{pmatrix} G_0 & G_1(x)  \\ 0 & G_0 \end{pmatrix}$$
For every $2k \times 2k$ submatrix of $G_L^\cc(x)$ that we need to consider, we can choose $k-i$ columns from the first block, and $k+i$ columns from the second block, for some $i=0,\ldots, k$. Let us fix the value $i$, and two subsets $I_1, I_2 \subseteq [n]$ such that $|I_1|=k-i, |I_2|=k+i$. Let $F(x)$ be the submatrix obtained by selecting the columns indexed by $I_1$ from the first block, and the columns indexed by $I_2$ from the second block, and let  $f(x):=\det(F(x))$. Then
$$F(x)= \begin{pmatrix} A_0 & B_1(x) \\ 0 & B_0 \\ \end{pmatrix},$$
where $A_0 \in \Fq^{k \times (k-i)}, B_0 \in \Fq^{k \times (k+i)}, B_1(x) \in \Fq[x]^{k \times (k+i)}$.
Computing $f(x)$ using Laplace formula on the first $k-i$ columns, we observe that the determinant will always be given by sum of products of a $(k-i) \times (k-i)$ minor of $A_0$ times the determinant of the remaining $(k+i) \times (k+i)$ minor in the second column block. Hence, the degree of $f(x)$ is at most the sum of the degrees of the first $i$ rows of $B_1(x)$, that is given by
$$\sum_{j=k-i}^{k-1}j.$$
By Theorem \ref{thm:main}, the minimum degree monomial of $f(x)$ is given by the product of the top $(k-i) \times (k-i)$ minor of $A_0$ times the determinant of the remaining submatrix of $\begin{pmatrix} B_1(x) \\ B_0 \end{pmatrix}$. The degree of this monomial is exactly $\sum_{j=0}^{i-1} j$, and thus
\begin{align*}
\deg(f(x))- \nu (f(x)) & \leq \sum_{j=k-i}^{k-1}j - \sum_{j=0}^{i-1} j  = \sum_{j=0}^{i-1} (k-i+j-j)  = i(k-i).
\end{align*}
Therefore, we get
\begin{align*}D(k,n,k,\alpha) & = \max\{ \deg f(x)-\nu (f(x)) \mid 0 \neq  f(x) \in \mathcal P(k,n,\delta,\alpha) \} \\
 & \leq \max\{ i(k-i) \mid i =0,\ldots,k\} \\
 & \leq \frac{k^2}{4}.
\end{align*}
\end{proof}

Notice that in the above analysis we only left out the case when $n=2k$. Also in this case, if $\delta<k$, we have $L=0$ and hence $D(k,2k,\delta,\alpha)=0$. Therefore, the only case left to study is $D(k,2k,k,\alpha)$, which we do in the following proposition. 

\begin{proposition}\label{prop:case_memory1_bis} For every positive integer $k$, we have
$$D(k,2k,k,\alpha)\leq \frac{k^2}{2}.$$
\end{proposition}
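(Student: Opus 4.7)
The plan is to mirror the proof of Proposition~\ref{prop:case_memory1} while accommodating the jump from $L=1$ to $L=2$ that occurs at $n=2k$. With $\delta=k$ and $n=2k$ we get $m=1$, so the polynomial truncated sliding generator matrix takes the block form
\[
G_2^\cc(x) = \begin{pmatrix} G_0 & G_1(x) & 0 \\ 0 & G_0 & G_1(x) \\ 0 & 0 & G_0 \end{pmatrix},
\]
where $G_0$ is constant and $G_1(x)=M_1(x)$ has rows of $x$-degree $k-1,k-2,\ldots,0$ from top to bottom. A generic full-size submatrix $F(x)$ contributing to $\mathcal{P}(k,2k,k,\alpha)$ is determined by a triple $(\ell_0,\ell_1,\ell_2)$ of column selections from the three blocks; the MDP conditions $\ell_0\le k$, $\ell_0+\ell_1\le 2k$, $\ell_0+\ell_1+\ell_2=3k$ together with $\ell_i\le 2k$ force $(\ell_0,\ell_1,\ell_2)=(k-a,\,k+a-b,\,k+b)$ for some integers $0\le a,b\le k$.

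The heart of the argument is a structural analysis of the Leibniz expansion of $\det F(x)$ using the block-zero pattern. Because the top-right block is zero and the bottom-left two blocks are zero, every nonvanishing permutation is forced to send $\ell_0$ top rows into the first column block (through $G_0$), exactly $a$ top rows into the middle block (through $G_1(x)$), $k-b$ middle rows into the middle block (through $G_0$), exactly $b$ middle rows into the last block (through $G_1(x)$), and all $k$ bottom rows into the last block (through $G_0$). Only the rows routed through a $G_1(x)$ entry contribute to the $x$-degree, and so every such term has degree $\sum_{s\in T_1}(k-s)+\sum_{s\in T_2}(k-s)$, where $T_1,T_2\subseteq[k]$ of sizes $a$ and $b$ record respectively the top and middle rows that land in a $G_1(x)$ entry.

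Maximizing this sum (taking $T_1=\{1,\dots,a\}$ and $T_2=\{1,\dots,b\}$) bounds $\deg f(x)$ from above, while minimizing it (taking the bottom $a$ and bottom $b$ rows) bounds $\nu(f(x))$ from below. A short binomial simplification yields $\deg f(x)-\nu(f(x))\le a(k-a)+b(k-b)$, and the elementary estimate $x(k-x)\le k^2/4$ applied separately to $a$ and $b$ produces the claimed bound $D(k,2k,k,\alpha)\le k^2/2$. The main point requiring care is verifying that the row-selection problems for $T_1$ and $T_2$ decouple, so that any extremal pair is genuinely realized by a nonvanishing permutation; this follows directly from the row-counting above, and because we only need upper and lower bounds on individual Leibniz terms (not a determinant formula), we do not need to invoke Theorem~\ref{thm:main}.
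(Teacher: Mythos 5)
Your proof is correct and follows the same skeleton as the paper's: the same parametrization of admissible column selections $(\ell_0,\ell_1,\ell_2)=(k-a,k+a-b,k+b)$, the same routing analysis forced by the zero blocks, and the same final estimate $a(k-a)+b(k-b)\le k^2/2$. The one genuine difference is how you justify the degree bounds: the paper computes $\nu(f(x))=\binom{i}{2}+\binom{j}{2}$ exactly by invoking Theorem \ref{thm:main} (together with the argument from the proof of Theorem \ref{thm:ConstructionMDP} needed to cope with the zero block in the top-right corner), whereas you observe that since every nonzero Leibniz term has $x$-degree of the form $\sum_{s\in T_1}(k-s)+\sum_{s\in T_2}(k-s)$ with $|T_1|=a$, $|T_2|=b$, one gets $\nu(f)\ge\binom{a}{2}+\binom{b}{2}$ and $\deg f\le\sum_{t=k-a}^{k-1}t+\sum_{t=k-b}^{k-1}t$ term by term, which is all that is needed for an upper bound on $\deg f-\nu(f)$. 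This is a clean simplification: it avoids Theorem \ref{thm:main} entirely and sidesteps the zero-block complication. A small remark: your closing worry about whether extremal pairs $(T_1,T_2)$ are ``genuinely realized'' by a nonvanishing permutation is unnecessary for your argument --- since you only need that every nonzero term lies between the two extremes, realizability (which would only matter for exact values of $\deg f$ and $\nu(f)$) plays no role.
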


\begin{proof}
 In this case we have $m=1$ and $L=2$. Hence, the polynomial version of the sliding generator matrix is of the form
 $$ G_2^\cc(x)=\begin{pmatrix} G_0 & G_1(x) & 0 \\
 0 & G_0 & G_1(x) \\
 0 & 0 & G_0\end{pmatrix}.$$
 By the restriction of \eqref{eq:lis}, the only admissible full size submatrices we should consider are those obtained from $G_2^\cc(x)$ by selecting $\ell_0=k-i$ columns from the first columns block, $\ell_1=k-j+i$ columns from the second columns block and  $\ell_2=k+j$ columns from the last columns block, for any $i,j$ such that $0\leq i,j\leq k$. For a submatrix made with this choice, that we indicate as
 $$ F(x)=\begin{pmatrix}A_0 & B_1(x) & 0 \\
 0 & B_0 & C_1(x) \\
 0 & 0 & C_0\end{pmatrix},$$
 we can compute the minimum degree in $x$ of the resulting minor $f(x)=\det(F(x))$, that is $\nu(f(x))$. Even though there is a $0$-block in the top right corner, we can still use Theorem \ref{thm:main}, by using the same argument used in the proof of Theorem \ref{thm:ConstructionMDP}, and obtain that $\nu(f(x))=\sum_{t=0}^{i-1}t+\sum_{t=0}^{j-1}t=\binom{i}{2}+\binom{j}{2}$. The degree of $f(x)$ is going to be obtained by selecting the product of the  minors of the following three submatrices:
 the $\ell_0\times \ell_0$ submatrix obtained selecting the last $\ell_0$ of $A_0$; the $\ell_1\times \ell_1$ submatrix obtained by selecting the first $i$ rows of $B_1(x)$ and the last $k-j$ rows of $B_0$; the $\ell_2\times \ell_2$ submatrix obtained by selecting the first $j$ rows of $C_1(x)$ together with the whole matrix $C_0$. This gives the maximum possible degree, but since for some choice of the vector $\alpha$ this could be $0$, we have 
 $$\deg(f(x))\leq  \sum_{t=k-i}^{k-1}t+\sum_{t=k-j}^{k-1}t,$$ and therefore, 
 $$ \deg(f(x))-\nu(f(x))\leq \sum_{t=k-i}^{k-1}t+\sum_{t=k-j}^{k-1}t-\binom{i}{2}-\binom{j}{2}=i(k-i)+j(k-j).$$
 Thus, 
\begin{align*}D(k,2k,k,\alpha) & = \max\{ \deg f(x)-\nu (f(x)) \mid 0 \neq  f(x) \in \mathcal P(k,n,\delta,\alpha) \} \\
 & \leq \max\{ i(k-i) +j(k-j) \mid i,j =0,\ldots,k\} \\
 & \leq \frac{k^2}{2}.
\end{align*}
\end{proof}

\begin{remark}
 Proposition \ref{prop:case_memory1} and Proposition \ref{prop:case_memory1_bis} improve on the more general bound given in Proposition \ref{prop:Dknma}. Indeed,  with the latter result we obtain $D(k,n,k,\alpha)\leq \frac{k(k-1)}{2}$ when $k<n-k$, and $D(k,2k,k,\alpha) \leq k(k-1)$. Therefore, in both cases we refine the estimate by a factor $1/2$.
\end{remark}

\subsection{Comparison}

In this subsection we present  a comparison of the existing fields sizes required to build MDP convolutional codes for several sets of given parameters $(n,k,\delta)$. The  compared results are of different nature and need to be distinguished. Some of them are general bounds on the field size but with no associated concrete construction achieving such a bound. Others are conjectures or examples found by computer search. These differences are explained and analyzed in this subsection. 

As mentioned before, superregular matrices have been one of the fundamental tools for constructing MDP codes and the required field size to construct the codes has been often given in terms of the field size needed to build the associated superregular Toeplitz matrices. An upper triangular Toeplitz matrix 
\begin{align}\label{SR}
A=\left(\begin{array}{cccc}
    a_{0} &  \cdots & a_{r-1} & a_{r} \\
    & a_0 & \cdots & a_{r-1}\\
     &   & \ddots & \vdots \\
    &         &        & a_{0}\\
\end{array}\right)\end{align}
is superregular  if $a_i\neq 0$ for each $i=0, 1, \dots, r$ and all the square submatrices of $A$ with no zeros in the diagonal are nonsingular. It can be verified \cite{NaRo2015} that in order to built an $(n,k,\delta)$ MDP convolutional code an upper triangular Toeplitz superregular matrix, $A$, of size greater or equal than
\begin{equation}\label{eq:SRtoMDP}
r:= \max \{n-k,k\}(L +1)+\min \{n-k,k\}-1,
\end{equation}
needs to be constructed.

In \cite{AlmeidaNappPinto2013} and \cite{gl03} two general classes of superregular matrices of any size were presented. The lower bound on the field size required to build the superregular lower triangular Toeplitz matrix $A\in\F^{r\times r}$ in \cite{gl03} is  $|\F|>c^r r^{r/2}$ where $c= \binom{r-1}{\lfloor \frac{r-1}{2}\rfloor}$. 
For the one provided in \cite{AlmeidaNappPinto2013} the lower bound is given by $|\F| \geq 2^{(2^{(r+2)})}$. For upper bounds on the size of a field $\F$  to ensure the existence (without providing a concrete construction) of a superregular lower triangular Toeplitz matrix over $\F$, see \cite{Hutchinson2008} and \cite{Lieb2019}. Based on examples derived by computer search, it was conjectured in \cite[Conjecture 3.5]{Hutchinson2008} and \cite{gl03} that for $r\geq 5$ there exists a superregular lower triangular Toeplitz matrix of order $r$ over the field
$\mathbb F_{2^{r-2}}$.

We compare these results in Table \ref{texam} together with some examples in \cite{AlNa2020} and \cite{ma16} found by optimized computer search. 

\begin{remark}\label{rem:asymptotic}
 Here we compare asymptotically the field size needed for our  WRS convolutional codes to be MDP with the other two existing general constructions of MDP convolutional provided in \cite{AlmeidaNappPinto2013,gl03}. 
We consider the case in which we fix the rate of the code to be constant $R:=\frac{k}{n}$, and express all the field sizes in terms of $R,\delta$ and $n$. First, notice that the parameter $r$ defined above can be approximated by $ R^{-1}\delta+n-1$. Assuming now that the value $R^{-1}\delta+n-1$ grows, we study the asymptotic behaviours of the field sizes.

By using Stirling approximation formula, we have that 
$$c:=\binom{r-1}{\lfloor \frac{r-1}{2}\rfloor}\sim \frac{2^{R^{-1}\delta+n}}{\sqrt{8\pi(R^{-1}\delta+n-2)}},$$ leading to the following  asymptotic approximation for the field size needed in \cite{gl03}:
\begin{equation}\label{eq:fieldsizeGl03} |\F|\sim c^rr^{r/2}\sim e \cdot \frac{2^{(R^{-1}\delta +n-1)(R^{-1}\delta +n-\frac{3}{2})}}{\sqrt{\pi}^{(R^{-1}\delta+n-1)}}\sim  2^{(R^{-1}\delta +n-1)(R^{-1}\delta +n-\frac{3}{2}-\log_2(\pi))}.
\end{equation}
Moreover, the field size needed for the construction in \cite{AlmeidaNappPinto2013} is
$$ |\F| \sim 2^{2^{R^{-1}\delta+n+1}},$$
which is always asymptotically worse than \eqref{eq:fieldsizeGl03}. 
Finally, by Corollary \ref{cor:FinalFieldSize}, we have that for our constructions we need a field size of approximately
\begin{equation}\label{eq:asymptotic_fieldsize} |\F| \sim 2^{\log_2(n)\frac{11}{24}(\frac{\delta^3}{Rn}+\delta R n)}.
\end{equation}

In order to compare \eqref{eq:asymptotic_fieldsize} with \eqref{eq:fieldsizeGl03}, we compare the asymptotics of their logarithms. If $\delta$ is constant,  then our construction is better, while if $n$ is constant the one of \cite{gl03} is better. Suppose now that none of $\delta$ and $n$ is constant. We have that whenever $\delta =\Theta(n^{1-\epsilon}(\log n)^\beta)$ for any $0<\epsilon < 1$ and $\beta \in \mathbb R$, our construction beats the field size of \cite{gl03}, while when $\delta=\Theta(n^{1+\epsilon}(\log n)^\beta)$ for any $\epsilon >0$ and $\beta \in \mathbb R$, the field size of \cite{gl03} is asymptotically better than ours. Furthermore, when $\delta=\Theta(n(\log n)^{-1-\beta})$  for any $\beta >0$  our field size is better, while in the case $\delta=\Theta(n(\log n)^{-1+\beta})$ for any $\beta >0$, the one in \cite{gl03} is smaller than ours. Finally, in the case that $\delta=\Theta(n(\log n)^{-1})$, the logarithms of the field sizes are asymptotically the same, so one should carefully investigate the smallest order terms.
\end{remark}

\begin{table}[!ht]
\begin{center}
\tiny{\begin{tabular}{|c|c|c|c|c|c|c|c||c|} \hline 
$[n,k,\delta]$ &  &  & & &   &  &   &  \\
L, m, $\mu$, $r$  &   \multirow{-2}{*}{ \cite{AlmeidaNappPinto2013}  }    & \multirow{-2}{*}{\cite{gl03} }& \multirow{-2}{*}{\cite{ma16}$^\ast$} & \multirow{-2}{*}{\cite{AlNa2020}$^\ast$}  &  \multirow{-2}{*}{\cite{Hutchinson2008}$^{\dagger}$}   & \multirow{-2}{*}{\cite{Lieb2019}$^{\dagger}$}  &  \multirow{-2}{*}{$\C_{k,n}^\delta$} &  \multirow{-2}{*}{ \cite{gl03}$^\ddagger$}   \\ \hline \hline
$\begin{array}{c}
                              [2,1,1] \\
                               2,1,1,3
                            \end{array}$ & ${2^8}$ & $43$ &  ${2^5}$ & \cellgreen $3$ & \cellgreen $3$  & $55$  & \cellgreen $3$ &  -- \\  \hline 
$\begin{array}{c}
                              [2,1,2] \\
                               4,2,2,5
                            \end{array}$  & ${2^{32}}$  & $434692$ & ${2^7}$ & \cellgreen $7$ & $11$  & $1261$ &  ${27}$  & ${2^{3}}$ \\ \hline
$\begin{array}{c}
                              [3,2,2] \\
                               3,1,2,8
                            \end{array}$  & ${2^{512}}$  & $ 5^8 7^8 2^{12} +1$ & ${2^{11}}$ & \cellgreen ${31}$ & $233$  & $1981$ &${{256}}$  & ${2^6}$ \\ \hline
$\begin{array}{c}
                              [3,1,2] \\
                               3,2,1,8
                            \end{array}$ & ${2^{512}}$  & $5^8 7^8 2^{12} +1$ & ${2^{11}}$ & ${31}$ & $233$  & $3961$ & \cellgreen $16$  & ${2^6}$\\ \hline
$\begin{array}{c}
                              [3,2,1] \\
                               1,1,1,4
                            \end{array}$ & ${2^{32}}$  & $2^4 3^4 +1$  & -- & $5$ & $5$   & \cellgreen ${3}$   & $4$ & ${2^2}$ \\ \hline
$\begin{array}{c}
                              [4,2,2] \\
                               2,1,1,7
                            \end{array}$ & ${2^{128}}$  & $\sim 10^{12}$ & -- & \cellgreen ${17}$ & $77$  & $5545$ &  $125$ & ${2^{5}}$ \\ \hline
$\begin{array}{c}
                              [4,1,3] \\
                               4,3,1,15
                            \end{array}$ & ${2^{2^{17}}}$  & $ \sim 7\cdot10^{61} $ & -- & -- & $1338936$ & $232561$  &  \cellgreen ${3125}$  &  ${2^{13}}$ \\ \hline
$\begin{array}{c}
                              [5,2,2] \\
                               1,1,1,7
                            \end{array}$ & $2^{2^9}$ & $\sim 10^{12}$ & --  & \cellgreen $17$  & $77$ & $35$ &  $49$ &  $32$ \\ \hline
$\begin{array}{c}
                              [6,2,2] \\
                               1,1,1,9
                            \end{array}$ & $2^{2^{11}}$ & $\sim 7\cdot 10^{20}$ &  -- & $59$ & $751$ & $71$ & \cellgreen $49$  & $128$ \\ \hline
$\begin{array}{c}
                              [6,2,2] \\
                               1,2,1,9
                            \end{array}$ & $2^{2^{11}}$ & $\sim 7\cdot 10^{20}$ &  -- & $59$ & $751$ & $71$ & \cellgreen $49$ &  $128$ \\ \hline
                            $\begin{array}{c}
                              [7,2,2] \\
                               1,1,1,11
                            \end{array}$ & $2^{2^{13}}$ & $\sim 10^{32}$ &  -- & -- & $8525$ & $126$ &   \cellgreen $64$ &   $512$ \\ \hline
$\begin{array}{c}
                              [7,3,3] \\
                               1,1,1,10
                            \end{array}$ & $2^{2^{12}}$ & $\sim 10^{26}$ &  -- & \cellgreen $127$ & $2495$ & $532$ &   $512$ &    $256$ \\ \hline
\end{tabular}}
\vspace{.5cm}

\caption{Parameters and smallest field sizes of MDP convolutional codes, according to known results in the literature.
 The columns are marked with $^{*}$ if the result is found by computer search; the results marked with $^{\dagger}$ indicates that they are not constructive; the symbol $^{\ddagger}$ means that the correspondent result is based on a conjecture.  
 The symbol -- indicates that there are no constructions for such parameters. For the nonconstructive results marked with $^\dagger$ we included the smallest field size needed, even if it is not a prime power.
The cells with the colored background indicate the best field size for the given parameters}\label{texam}
\end{center}
\end{table}

\section{Conclusions and future work}\label{sec:conclusions}

In this work we provided a new algebraic construction of convolutional codes for any choice of length ($n$), dimension ($k$) and degree ($\delta$),  which  are built upon generalized Vandermonde matrices. We called them weighted Reed-Solomon (WRS) convolutional codes and showed that, under some constraints on the field size, they are maximum distance profile (MDP). The proof of this result requires a technical study of multivariate polynomials and it is quite involved. A careful analysis of the field size shows that our construction requires a smaller field size compared to all the other constructions for some regime of the parameters. In Remark \ref{rem:asymptotic}, we show that our field size is  always smaller than the one in \cite{AlmeidaNappPinto2013}, and  comparable to the field size needed for the construction presented in \cite{gl03}, depending on the asymptotic behaviour of $n/\delta$. 
The algebraic structure of WRS convolutional codes inherited by  Reed-Solomon block codes opens many perspectives and has still to be explored, giving avenue for future research. In particular, we leave for the future the nontrivial task to derive an optimal algebraic decoding algorithm for this class of convolutional codes. Also, it is left as an open problem to show whether our construction is noncatastrophic, \ie, $G(z)$ is basic in the case in which $k$ does not divide $\delta$. 

\bigskip
\bigskip
\section*{Acknowledgements}
This work was partially supported by the Swiss National Science Foundations through grants no. 188430 and 187711 and by the Spanish Ministerio de Ciencia e Innovación via the grant with ref. PID2019-108668GB-I00. 

\bigskip
\bigskip
\bibliographystyle{plain}
\bibliography{biblio}
\end{document}